\setlist{topsep=0.25\baselineskip,partopsep=0pt,itemsep=1pt,parsep=0pt}
\declaretheorem[style=plain,parent=section]{definition}
\declaretheorem[sibling=definition]{theorem}
\declaretheorem[sibling=definition]{corollary}
\declaretheorem[sibling=definition]{proposition}
\declaretheorem[sibling=definition]{lemma}
\declaretheorem[style=remark,sibling=definition,qed={\qedsymbol}]{remark}
\declaretheoremstyle[headpunct={},notebraces={\textbf{--}~}{}]{algorithm}
\declaretheorem[style=algorithm]{problem}
\declaretheorem[style=algorithm]{algorithm}
\tikzset{
    >=stealth',
}
\crefname{subsec}{Subsection}{Subsections}
\Crefname{subsec}{Subsection}{Subsections}
\crefname{problem}{Problem}{Problems}
\Crefname{problem}{Problem}{Problems}
\newcommand{\bigO}[1]{O(#1)} 
\newcommand{\bigOPar}[1]{O\!\left(#1\right)} 
\newcommand{\softO}[1]{O\tilde{~}(#1)} 
\newcommand{\polmultime}[1]{\mathsf{M}(#1)} 
\newcommand{\polmatmultime}[1]{\mathsf{MM}(#1)} 
\newcommand{\appbastime}[1]{\mathsf{MM'}(#1)} 
\newcommand{\expmatmul}{\omega} 
\newcommand{\hypsbal}{\mathcal{H}_{\shifts,\mathrm{bal}}} 
\newcommand{\hypobal}{\mathcal{H}_{\orders}} 
\newcommand{\hypsmin}{\mathcal{H}_{\shifts,\mathrm{min}}} 
\newcommand{\hypsmax}{\mathcal{H}_{\shifts,\mathrm{max}}} 
\newcommand{\hypslmm}{\mathcal{H}_{\mathsf{MM}}} 
\newcommand{\hyppolmul}{\mathcal{H}_{\mathsf{M}}} 
\newcommand{\algoname}[1]{{\normalfont\textsc{#1}}}
\newcommand{\algoword}[1]{\textsf{#1}}
\newcommand{\assign}{\leftarrow}
\newcommand{\comment}[1]{\texttt{\small/* #1 */}}
\newcommand{\eolcomment}[1]{\hfill\texttt{\small// #1}}
\newcommand{\ZZ}{\mathbb{Z}} 
\newcommand{\NN}{\mathbb{Z}_{\ge 0}} 
\newcommand{\ZZp}{\mathbb{Z}_{> 0}} 
\newcommand{\tuple}[1]{\mathbf{#1}}  
\newcommand{\tuplegrk}[1]{\boldsymbol{#1}}  
\newcommand{\subTuple}[2]{{#1}_{#2}} 
\newcommand{\var}{X} 
\newcommand{\field}{\mathbb{K}} 
\newcommand{\polRing}{\field[\var]} 
\newcommand{\module}[1][M]{\mathcal{#1}} 
\newcommand{\rdim}{m} 
\newcommand{\cdim}{n} 
\newcommand{\genRank}{r} 
\newcommand{\storeArg}{} 
\newcommand{\matSpace}[1][\rdim]{\renewcommand\storeArg{#1}\matSpaceAux} 
\newcommand{\polMatSpace}[1][\rdim]{\renewcommand\storeArg{#1}\polMatSpaceAux} 
\newcommand{\matSpaceAux}[1][\storeArg]{\field^{\storeArg \times #1}} 
\newcommand{\polMatSpaceAux}[1][\storeArg]{\polRing^{\storeArg \times #1}} 
\newcommand{\row}[1]{\mathbf{\MakeLowercase{#1}}} 
\newcommand{\rowgrk}[1]{\boldsymbol{#1}} 
\newcommand{\col}[1]{\mathbf{\MakeLowercase{#1}}} 
\newcommand{\mat}[1]{\mathbf{\MakeUppercase{#1}}} 
\newcommand{\matt}[1]{\mathbf{\hat{\MakeUppercase{#1}}}} 
\newcommand{\matz}{\mat{0}} 
\newcommand{\sumVec}[1]{|#1|} 
\newcommand{\card}[1]{\mathrm{Card}(#1)} 
\newcommand{\any}{\ast}   
\newcommand{\trsp}[1]{{#1}^\mathsf{T}} 
\newcommand{\matrow}[2]{{#1}_{#2,\any}} 
\newcommand{\matcol}[2]{{#1}_{\any,#2}} 
\newcommand{\matrows}[2]{{#1}_{#2,\any}} 
\newcommand{\matcols}[2]{{#1}_{\any,#2}} 
\newcommand{\matsub}[3]{{#1}_{#2,#3}} 
\newcommand{\diag}[1]{\mathrm{diag}(#1)}  
\newcommand{\idMat}[1][\rdim]{\mat{I}_{#1}} 
\newcommand{\perm}{\pi} 
\newcommand{\permMat}{\boldsymbol\perm} 
\newcommand{\sddots}{\raisebox{3pt}{$\scalebox{.75}{$\ddots$}$}} 
\newcommand{\rdeg}[2][]{\mathrm{rdeg}_{{#1}}(#2)} 
\newcommand{\cdeg}[2][]{\mathrm{cdeg}_{{#1}}(#2)} 
\newcommand{\leadingMat}[2][\unishift]{\mathrm{lm}_{#1}(#2)} 
\newcommand{\xDiag}[1]{\mat{\var}^{#1\,}} 
\newcommand{\shiftSpace}[1][\rdim]{\ZZ^{#1}} 
\newcommand{\unishift}{\mathbf{0}} 
\newcommand{\shift}[2][s]{#1_{#2}} 
\newcommand{\shifts}[1][s]{\mathbf{#1}} 
\newcommand{\popov}{\mat{P}} 
\newcommand{\reduced}{\mat{R}} 
\newcommand{\vsdim}{\sigma} 
\newcommand{\order}{d} 
\newcommand{\orders}{\tuple{\order}} 
\newcommand{\orderSpace}[1][\nbeq]{\ZZp^{#1}} 
\newcommand{\minDeg}{\delta}  
\newcommand{\minDegs}{\boldsymbol{\delta}}  
\newcommand{\degExp}{ \delta }  
\newcommand{\sdiff}{t}  
\newcommand{\expandMat}{\mat{C}}  
\newcommand{\quoExp}{\alpha}  
\newcommand{\remExp}{\beta}  
\newcommand{\expand}[1]{\overline{#1}}  
\newcommand{\sumDistMax}{\zeta}  
\newcommand{\nbeq}{\cdim} 
\newcommand{\nbun}{\rdim} 
\newcommand{\sys}{\mat{F}} 
\newcommand{\sysSpace}{\polMatSpace[\nbun][\nbeq]} 
\newcommand{\res}{\mat{G}} 
\newcommand{\app}{\row{p}} 
\newcommand{\appSpace}{\polMatSpace[1][\nbun]} 
\newcommand{\appbas}{\mat{P}} 
\newcommand{\appbasSpace}{\polMatSpace[\nbun]} 
\newcommand{\mods}[1][\orders]{\mat{\var}^{#1\,}} 
\newcommand{\ovlpLin}[2]{\mathcal{L}_{#1}(#2)} 
\newcommand{\ovlpLinOrd}[2]{\mathcal{L}_{#1}(#2)} 
\newcommand{\emat}{\mat{E}} 
\newcommand{\sysR}{\mat{\hat{F}}}
\newcommand{\ordersR}{\tuple{\hat\order}}
\newcommand{\minDegsR}{\boldsymbol{\hat\minDeg}}
\newcommand{\shiftsR}{\tuple{\hat{s}}}
\newcommand{\appbasR}{\mat{\hat{P}}}
\newcommand{\sysL}{\mat{\check{F}}}
\newcommand{\ordersL}{\tuple{\check\order}}
\newcommand{\modAppL}{{\mathcal A}_{\ordersL}(\sysL)}
\newcommand{\shiftsL}{\tuple{\check{s}}}
\newcommand{\appL}{\row{\check{p}}}
\newcommand{\appbasL}{\mat{\check{P}}}
\newcommand{\idMatEven}[1]{\mat{J}_{#1}} 
\newcommand{\idMatOdd}[1]{\mat{J}^{\mathrm{c}}_{#1}} 
\newcommand{\selCol}{\mat{S}} 
\newcommand{\selColComp}{\mat{S}^\mathrm{c}} 
\newcommand{\okRows}{I}  
\newcommand{\okRowsNb}{i}  
\newcommand{\okRowsComp}{I^\mathrm{c}}  
\newcommand{\remCols}{J}  
\newcommand{\sumRdeg}{\xi} 
\newcommand{\costZhoLabMin}[1]{\mathcal{C}(#1)}  
\newcommand{\modAppCustom}[2]{{\mathcal A}_{#1}(#2)}
\newcommand{\modApp}{{\mathcal A}_{\orders}(\sys)}
\newcommand{\modAppOvlpLin}[1][]{\modAppCustom{\ovlpLinOrd{#1\degExp}{\orders}}{\ovlpLin{\orders,#1\degExp}{\sys}}}
\newcommand{\mulshift}{\mat{Z}} 
\newcommand{\argfig}[1]{\begin{figure}[#1]} 
\newenvironment{algobox}[1][htbp]{
  \newcommand{\algoInfo}[3]{
    \begin{algorithm}[{name=[\algoname{##2}:~##1]\algoname{##2}}]
    \label{##3}
    ~ \hfill
    {\small\emph{(##1)}}
    \smallskip

  }
  \newcommand{\dataInfos}[2]{
    \algoword{##1:}
      \begin{itemize}[topsep=0pt]
          ##2
      \end{itemize}
    \smallskip
  }
  \newcommand{\dataInfo}[2]{
    \algoword{##1:} ##2 
  }
  \newcommand{\algoSteps}[1]{
    \addtolength{\leftmargini}{-0.35cm}
    \begin{enumerate}[{\bf 1.}]
        ##1
    \end{enumerate}
    \smallskip
  }

  \expandafter\argfig\expandafter{#1}
    \centering
    \begin{minipage}{0.99\textwidth}
    \begin{mdframed}
      \smallskip
    }
    {
    \end{algorithm}
    \end{mdframed}
    \end{minipage}
  \end{figure}
}
\newenvironment{problembox}[1][htbp]{
  \newcommand{\problemInfo}[3]{
    \begin{problem}[{name=[\emph{##2}\ifx&##1&\else##1\fi]\emph{##2}}]
    \label{##3}
    ~\smallskip

  }
  \newcommand{\dataInfos}[2]{
    \emph{##1:}
    \begin{itemize}[topsep=0pt]
      ##2
    \end{itemize}
    \smallskip
  }
  \newcommand{\dataInfo}[2]{
    \emph{##1:}  ##2
  }

  \expandafter\argfig\expandafter{#1}
    \centering
    \begin{minipage}{0.75\textwidth}
    \begin{mdframed}
    }
    {
    \end{problem}
    \end{mdframed}
    \end{minipage}
  \end{figure}
}
\begin{document}

\begin{frontmatter}

\title{Fast computation of approximant bases in canonical form}

\author{Claude-Pierre Jeannerod}
\address{Univ Lyon, Inria, CNRS,  ENS de Lyon, Universit\'e Claude Bernard Lyon 1, LIP UMR 5668, F-69007 Lyon, France}

\author{Vincent Neiger}
\address{Univ.~Limoges, CNRS, XLIM, UMR 7252, F-87000 Limoges, France}

\author{Gilles Villard}
\address{Univ Lyon, CNRS,  ENS de Lyon, Inria, Universit\'e Claude Bernard Lyon 1, LIP UMR 5668, F-69007 Lyon, France}

\begin{abstract}
  In this article, we design fast algorithms for the computation of approximant
  bases in shifted Popov normal form.  We first recall the algorithm known as
  \algoname{PM-Basis}, which will be our second fundamental engine after
  polynomial matrix multiplication: most other fast approximant basis
  algorithms basically aim at efficiently reducing the input instance to
  instances for which \algoname{PM-Basis} is fast.  Such reductions usually
  involve partial linearization techniques due to Storjohann, which have the
  effect of balancing the degrees and dimensions in the manipulated matrices.

  Following these ideas, Zhou and Labahn gave two algorithms which are faster
  than \algoname{PM-Basis} for important cases including Hermite-Pad\'e
  approximation, yet only for shifts whose values are concentrated around the
  minimum or the maximum value. The three mentioned algorithms were designed
  for balanced orders and compute approximant bases that are generally not
  normalized. Here, we show how they can be modified to return the shifted
  Popov basis without impact on their cost bound; besides, we extend Zhou and
  Labahn's algorithms to arbitrary orders.  

  Furthermore, we give an algorithm which handles arbitrary shifts with one
  extra logarithmic factor in the cost bound compared to the above algorithms.
  To the best of our knowledge, this improves upon previously known algorithms
  for arbitrary shifts, including for particular cases such as Hermite-Pad\'e
  approximation. This algorithm is based on a recent divide and conquer
  approach which reduces the general case to the case where information on the
  output degree is available. As outlined above, we solve the latter case via
  partial linearizations and \algoname{PM-Basis}.
\end{abstract}

\begin{keyword}
Hermite-Pad\'e approximation; minimal approximant basis; order basis;
polynomial matrix; shifted Popov form.
\end{keyword}

\end{frontmatter}

\section{Introduction}
\label{sec:intro}

Let $\orders= (\order_1,\ldots,\order_\nbeq) \in \orderSpace$, and let $\sys
\in \sysSpace$ be a matrix of univariate polynomials over a field $\field$,
which represents a matrix of formal power series with the $j$th column
truncated at order $\order_j$. We consider a matrix-type generalization of
Hermite-Pad\'e approximation, which consists in computing polynomial row
vectors $\app \in \appSpace$ such that
\begin{equation}
  \label{eqn:main}
  \app \sys =0 \bmod \mods,
  \quad \text{where}\;\;
  \mods = \diag{\var^{\order_1},\ldots,\var^{\order_\nbeq}}.
\end{equation} 
Here, $\app \sys =0 \bmod \mods$ means that $\app \sys = \col{q} \mods$ for
some $\col{q} \in \polMatSpace[1][\nbeq]$. The set of all such approximants
forms a free $\polRing$-module of rank $\nbun$ denoted by $\modApp$; its bases
are represented as the rows of nonsingular matrices in $\appbasSpace$. One is
usually interested in bases having minimal row degrees with respect to a
\emph{shift} $\shifts \in \shiftSpace$, used as column weights.

In this paper, we improve complexity bounds for the computation of such
\emph{$\shifts$-minimal approximant bases}. In addition, our algorithms return
a canonical $\shifts$-minimal basis of $\modApp$, called the
\emph{$\shifts$-Popov} basis \citep{Popov72,BeLaVi99} and defined in
\cref{subsec:forms}. The properties of this basis allow us to compute it faster
than $\shifts$-minimal bases in general \citep[for more insight,
see][]{JeNeScVi16} and also, once obtained, to efficiently perform operations
with this basis \citep[see for example][Thm.\,12]{RosSto16}.

Our problem is stated in \cref{pbm:pab}; $\cdeg{\sys}$ denotes the tuple of the
$\nbeq$ column degrees of the matrix $\sys$. Here and hereafter, tuples of
integers are always compared componentwise. The assumption that $\cdeg{\sys} <
\orders$ is harmless: truncating the column $j$ of $\sys$ modulo
$\var^{\order_j}$ does not affect the module of approximants.

\begin{problembox}[h]
  \problemInfo
  {}
  {Approximant basis in shifted Popov form}
  {pbm:pab}

  \dataInfos{Input}{
    \item approximation order $\orders \in \orderSpace$,
    \item matrix $\sys$ in $\sysSpace$ with $\cdeg{\sys} < \orders$ componentwise,
    \item shift $\shifts \in \shiftSpace$.
  }

  \dataInfo{Output}{
    the $\shifts$-Popov basis $\appbas \in \appbasSpace$ of the
      $\polRing$-module \vspace{-0.5\baselineskip}
      \[
        \modApp =
        \big\{ \app \in \appSpace \mid \app \sys = 0 \bmod \mods \big\}.
      \]
  }
\end{problembox}

For estimating the tightness of the cost bounds below, we consider the number
of field elements used to represent the input and output of the problem.
Representing polynomials in the standard monomial basis, the matrix $\sys$ is
represented by $\nbun \vsdim$ coefficients from $\field$, where
\[
  \vsdim = \order_1+\cdots+\order_\nbeq=\sumVec{\orders};
\]
here, $\sumVec{\cdot}$ denotes the sum of a tuple of nonnegative integers. By
definition of \(\shifts\)-Popov forms, the output basis can be written $\appbas =
\xDiag{\minDegs} + \mat{A}$ for a matrix $\mat{A}$ such that
$\cdeg{\mat{A}}<\minDegs=\cdeg{\appbas}$. Importantly, we have
$\sumVec{\minDegs} \le \vsdim$ (see \cref{lem:degdet_appbasis}). Thus,
$\appbas$ can be represented by the degrees $\minDegs$ together with
$\nbun\sumVec{\minDegs} \le \nbun\vsdim$ coefficients from $\field$ for the
columns of \(\mat{A}\) (not counting those corresponding to identity columns in
\(\appbas\)). The tuple $\minDegs$, called the \emph{$\shifts$-minimal degree
of $\modApp$}, plays a central role in our algorithms; knowing $\minDegs$
amounts to knowing the degrees of the columns of the sought basis.

Our cost model estimates the number of arithmetic operations in $\field$ on an
algebraic RAM. We consider an exponent $\expmatmul$ for matrix multiplication:
two matrices in $\matSpace[\nbun]$ can be multiplied in
$\bigO{\nbun^\expmatmul}$ operations in $\field$. \emph{In this paper, all cost
bounds are given for $\expmatmul>2$}; additional logarithmic factors may appear if
$\expmatmul=2$. \citep{CopWin90,LeGall14} show that one can take $\expmatmul <
2.373$. We also use a cost function $\polmatmultime{\cdot,\cdot}$ for the
multiplication of polynomial matrices, defined as follows: for two real numbers
$\nbun,d>0$, $\polmatmultime{\nbun,d}$ is such that two matrices of degree at
most $d$ in $\polMatSpace[\bar\nbun]$ with $\bar\nbun \le \nbun$ can be
multiplied using $\polmatmultime{\nbun,d}$ operations in $\field$.
Furthermore, we will use $\appbastime{\nbun,d} = \sum_{0\le i\le \log({d})} 2^i
\polmatmultime{{\nbun},2^{-i}{d}}$ from \citep{Storjohann03,GiJeVi03}, which is
typically related to divide-and-conquer computations. 


We will always give cost bounds in function of $\polmatmultime{\nbun,d}$ and
$\appbastime{\nbun,d}$; the current best known upper bounds on the former
quantity can be found in \citep{CanKal91,BosSch05,HaHoLe17}. The first of these
references proves
\[
  \polmatmultime{\nbun,d} \in \bigO{\nbun^\expmatmul d \log(d) + \nbun^2 d \log(d) \log(\log(d)) + \nbun^\expmatmul}
\]
for an arbitrary field $\field$, while the last two show better bounds in the
case of fields that are either finite or of characteristic zero. For the sake
of presentation, we will also give simplified cost bounds for our main results,
relying on the following assumption:
\begin{align*}
  \label{eqn:polmatmultime_superlinear}
  \hypslmm: & \;\; \polmatmultime{\nbun,d} + \polmatmultime{\nbun,d'} \le
  \polmatmultime{\nbun,d+d'} \text{ for } \nbun,d,d'>0
  & \textit{(super-linearity)}.
\end{align*}
We remark that $\hypslmm$ implies $\appbastime{\nbun,d} \in
\bigO{\polmatmultime{\nbun,d} \log(d)}$.

It is customary to assume $\polmatmultime{\nbun,d} \in \bigO{\nbun^\expmatmul
\polmultime{d}}$ for a cost function $\polmultime{\cdot}$ such that two
polynomials in $\polRing$ of degree at most $d$ can be multiplied in
$\polmultime{d}$ operations in $\field$.  However this does not always reflect
well the actual cost of polynomial matrix multiplication, which tends to have a
term in $\nbun^2 d$ with several (sub)logarithmic factors, and a term in
$\nbun^\expmatmul d$ with at most one logarithmic factor.  In fact, even the
above general bound on $\polmatmultime{\nbun,d}$ is asymptotically better than
$\bigO{\nbun^\expmatmul \polmultime{d}}$ if we replace $\polmultime{d}$ by the
best known bound.

As a consequence, and since we will be discussing cost bound improvements on
the level of logarithmic factors, we will not follow this custom.  Instead, and
as in \citep{Storjohann03} for example, we will prefer to write our cost bounds
with general expressions involving $\polmatmultime{\nbun,d}$ and
$\appbastime{\nbun,d}$, which one can then always replace with
context-dependent upper bounds.

\paragraph{Main result}

We give an efficient solution to \cref{pbm:pab} for arbitrary orders and
shifts.

\begin{theorem}
  \label{thm:fast_pab}
  Let $\orders \in \orderSpace$, let $\sys \in \sysSpace$ with
  $\cdeg{\sys}<\orders$, and let $\shifts \in \shiftSpace$. Then, writing
  $\vsdim=\sumVec{\orders}$ for the sum of the entries of $\orders$ and
  assuming $\nbun \in \bigO{\vsdim}$, \cref{pbm:pab} can be solved in 
  \[
    \bigOPar{\left(\sum_{k=0}^{\lceil\log_2(\vsdim/\nbun)\rceil} 2^k
    \appbastime{\nbun,2^{-k}\vsdim/\nbun}\right) + \nbun^{\expmatmul-1} \vsdim
  \log(\nbun)}
  \]
  operations in $\field$. Assuming $\hypslmm$, this is in
  $\bigO{\polmatmultime{\nbun,\vsdim/\nbun} \log(\vsdim/\nbun)^2 +
  \nbun^{\expmatmul-1}\vsdim \log(\nbun)}$.
\end{theorem}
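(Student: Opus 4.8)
\medskip
\noindent\emph{Proof plan.}\quad
The plan is to build the algorithm in two layers and then amortize its cost over a recursion tree. The inner layer solves the variant of \cref{pbm:pab} in which, besides $\orders$, $\sys$ and $\shifts$, one is also given the $\shifts$-minimal degree $\minDegs=\cdeg{\appbas}$ of $\modApp$. Knowing $\minDegs$ is exactly knowing the column degrees of the sought basis; since moreover $\sumVec{\minDegs}\le\vsdim$ by \cref{lem:degdet_appbasis}, we may apply Storjohann's partial linearization \citep{Storjohann03}---both to the (possibly unbalanced) order $\orders$ and to the expected output---to replace the instance by an equivalent one, with a suitably extended shift, of dimension $\bigO{\nbun}$ and degree at most $\lceil\vsdim/\nbun\rceil$. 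On that balanced instance \algoname{PM-Basis} computes an $\shifts$-minimal basis in $\bigO{\appbastime{\nbun,\vsdim/\nbun}}$ operations; extracting the relevant rows, de-linearizing, and normalizing then recovers $\appbas$, the extra work being polynomial matrix products in dimension $\bigO{\nbun}$ whose operand degrees have sum $\bigO{\vsdim}$.

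The outer layer is the recent divide-and-conquer approach of \citep{JeNeScVi16}, which removes the assumption that $\minDegs$ is known. There the approximation order is split into two halves of order-sum $\le\lceil\vsdim/2\rceil$; the $\shifts$-Popov basis is computed recursively for the first half; a residual is formed by a truncated product; the basis for the residual is computed recursively with the shift updated to the $\shifts$-row degree of the first factor; and the two factors $\appbasR$ and $\appbasL$ are multiplied and renormalized. The feature we rely on, recalled from \citep{JeNeScVi16}, is that at each node of the recursion the degree information needed to balance that node's instance is furnished by the data already computed, so every node can be handled by the inner layer above. Using $\nbun\in\bigO{\vsdim}$, the resulting binary tree has depth $\bigO{\log(\vsdim/\nbun)}$ and, at level $k$, has $2^k$ nodes, each an instance of order-sum $\bigO{2^{-k}\vsdim}$ and dimension $\bigO{\nbun}$.

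It remains to sum over the tree. The \algoname{PM-Basis} calls contribute $\sum_{k=0}^{\lceil\log_2(\vsdim/\nbun)\rceil}2^k\,\appbastime{\nbun,2^{-k}\vsdim/\nbun}$, which is the first term of the claimed bound. All other operations---residual formations, the products $\appbasR\appbasL$, de-linearizations, and the normalizations to $\shifts$-Popov form---are products of polynomial matrices of dimension $\bigO{\nbun}$ whose individual degrees may be unbalanced but whose degree sums are controlled by \cref{lem:degdet_appbasis}; handled through partial linearization of both operands, and amortized over all nodes of the tree, they contribute $\bigO{\nbun^{\expmatmul-1}\vsdim\log(\nbun)}$, where the logarithmic factor is the single overhead distinguishing our bound from those known for min/max-concentrated shifts. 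This proves the first bound. For the simplified statement, $\hypslmm$ gives $\appbastime{\nbun,d}\in\bigO{\polmatmultime{\nbun,d}\log(d)}$, hence $\sum_{k}2^k\,\appbastime{\nbun,2^{-k}\vsdim/\nbun}\in\bigO{\log(\vsdim/\nbun)\sum_{k}2^k\,\polmatmultime{\nbun,2^{-k}\vsdim/\nbun}}=\bigO{\log(\vsdim/\nbun)\,\appbastime{\nbun,\vsdim/\nbun}}=\bigO{\polmatmultime{\nbun,\vsdim/\nbun}\log(\vsdim/\nbun)^2}$.

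The crux is twofold. First, in the inner layer one must verify that partial linearization---performed with respect to $\minDegs$ and to $\orders$---yields an instance whose module of approximants is faithfully related to $\modApp$, so that a de-linearized $\shifts$-minimal basis of it is an $\shifts$-minimal basis of $\modApp$, and that the final normalization to $\shifts$-Popov form is correct and cheap even when $\shifts$ has a large spread. Second, the amortization in the cost analysis is delicate: the intermediate bases have only their column-degree \emph{sum}, not their maximum, under control, so the residual and combining products must be carried out by interleaving partial linearization with the multiplications, and it is the careful tracking of the ensuing (sub)logarithmic overheads that yields exactly one extra logarithmic factor rather than a worse one.
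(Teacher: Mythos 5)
Your high-level plan is essentially the one the paper follows: an inner layer solving the known-$\minDegs$ case via Storjohann-style partial linearizations plus \algoname{PM-Basis} (the paper's \cref{algo:knowndeg_pab}), wrapped in the divide-and-conquer recursion of \citep{JeNeScVi16} that splits the order into two halves and deduces $\minDegs$ from the two recursive $\shifts$-Popov factors. The summation over the tree and the $\hypslmm$ simplification are also done correctly.

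However, there is a genuine gap in your cost analysis, centered on the $\nbun^{\expmatmul-1}\vsdim\log(\nbun)$ term. You attribute it to ``residual formations, the products $\appbasR\appbasL$, de-linearizations, and the normalizations,'' amortized over the tree. That is not where it comes from, and your proposed accounting is both unjustified and masks a missing ingredient. In the paper's algorithm (\cref{algo:fast_pab}, Step~3.f), the residual at a node with local order-sum $\vsdim'$ is formed in $\bigO{\polmatmultime{\nbun,\vsdim'/\nbun}}$ by the second item of \cref{lem:compute_residual} (partial linearization of \emph{one} operand and a truncated product — not of both operands as you suggest), and the normalization is absorbed into the $\bigO{\appbastime{\nbun,\vsdim'/\nbun}}$ cost of \cref{algo:knowndeg_pab}; both are already within the first term of the bound, with no $\log(\nbun)$ factor. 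The $\nbun^{\expmatmul-1}\vsdim\log(\nbun)$ term is entirely the contribution of the \emph{base case}: when the recursion reaches nodes with $\vsdim'\le\nbun$, the problem is reduced to linear algebra over $\field$ and handled by \citep[Algo.\,9]{JeNeScVi17} (Step~\textbf{1} of \cref{algo:fast_pab}), which costs $\bigO{\nbun^{\expmatmul}\log(\nbun)}$ per leaf and is entered fewer than $2\vsdim/\nbun$ times. Your proposal never says what happens at the leaves of the tree, yet your stated recursion depth of $\bigO{\log(\vsdim/\nbun)}$ implies you must stop when the order-sum is about $\nbun$, precisely the regime where \algoname{PM-Basis}-based techniques alone are no longer the right tool. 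Without an explicit base case, the recursion does not terminate as claimed, and without the $\log(\nbun)$-cost leaf algorithm, the second term of the bound is unaccounted for. (Separately, and more minor: the paper handles $\nbeq\ge\nbun$ by an initial column-dimension reduction, which you also omit, though that is entered at most once.)
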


\noindent Hiding logarithmic factors, this cost bound
is $\softO{\nbun^{\expmatmul-1}\vsdim}$, the same as for the multiplication of
two $\nbun\times \nbun$ matrices of degree $\vsdim/\nbun$. As mentioned above,
the output basis has average column degree at most $\vsdim/\nbun$, which is
reached generically. Furthermore, there are instances of \cref{pbm:pab} whose
resolution does require at least as many field operations as the multiplication of
two matrices in $\appbasSpace$ of degree about $\vsdim/\nbun$ (see
\cref{subsec:reductions}).

In the case $\vsdim \in \bigO{\nbun}$, less common in applications, the current
fastest known algorithm for solving \cref{pbm:pab} uses $\bigO{\nbun
\vsdim^{\expmatmul-1} + \vsdim^\expmatmul\log(\max(\orders))}$ operations
\citep[Prop.\,7.1]{JeNeScVi17}.

The overall design of our main algorithm is based on
\citep[Algo.\,1]{JeNeScVi16}; we refer to \citepalias[Sec.\,1.2]{JeNeScVi16}
for an overview of this approach. In short, we use a divide and conquer
strategy which splits the order $\orders$ into two parts whose sums are about
$\vsdim/2$. Two corresponding shifted Popov bases are found recursively and
yield the $\shifts$-minimal degree $\minDegs$, which then helps us to
efficiently compute the $\shifts$-Popov approximant basis.

In fact, \citepalias[Algo.\,1]{JeNeScVi16} solves a more general problem; we refer to
\citep{BarBul92,Beckermann92,BecLab97} for details about and earlier solutions
to \emph{matrix rational interpolation} problems. \cref{eqn:main} is indeed a
particular case of 
\begin{equation} \label{eqn:maingen}
  \app \sys =0 \bmod (\mat{\var}-\mathbf{x})^{\orders}, 
  \quad \text{where}\;\; 
  (\mat{\var}-\mathbf{x})^{\orders} = \diag{[(\var-x_{j})^{d_j}]_{1\le j\le\nbeq}},
\end{equation}
where these diagonal entries are given by their roots $\mathbf{x}$ and
multiplicities $\orders$.

For such equations, \citep[Algo.\,FFFG]{BecLab00} returns the $\shifts$-Popov
basis of solutions in $\bigO{\nbun \vsdim^2}$ operations
\cite[Sec.\,6.4]{Neiger16}. At each step of this iterative algorithm, one
normalizes the computed basis to better control its degrees, and thus achieve
better efficiency. Indeed, similar algorithms without normalization, such as
the one in \citep{BarBul92}, have a cost of $\bigO{\nbun^2\vsdim^2}$ operations
in general.

The algorithm of~\citep{JeNeScVi16} also addresses \cref{eqn:maingen}. Here, we
obtain a faster algorithm in the case $\mathbf{x}=\tuple{0}$ by improving one
of its core components: solving \cref{pbm:pab} when the $\shifts$-minimal
degree $\minDegs$ is known a priori. Explicitly, the gain here compared to the
cost bound in \citepalias[Thm.\,1.3]{JeNeScVi16} is in $\Omega(\log(\vsdim))$.

This extra logarithmic factor in \citepalias{JeNeScVi16} has two independent
sources. First, it originates from the computation of \emph{residuals}, which
are matrix remainders of the form $\appbas \sys \bmod
(\mat{\var}-\mathbf{x})^{\orders}$; here, with $\mathbf{x}=\tuple{0}$, these
are simply truncated products. Second, it also comes from the strategy for
handling unbalanced output degrees, by relying on \citep[Algo.\,2]{JeNeScVi17}
which uses unbalanced polynomial matrix products and changes of shifts. Here we
rather make use of the overlapping linearization from
\citep[Sec.\,2]{Storjohann06}, allowing us to reduce more directly to cases
solved by \citep[Algo.\,\algoname{PM-Basis}]{GiJeVi03} using balanced
polynomial matrix products.

\paragraph{Balanced orders: obtaining the canonical basis via
\algoname{PM-Basis}}

Let us now consider the case where the $\nbeq$ entries of the order $\orders$
are roughly the same. More precisely, we assume that
\begin{align*}
  \hypobal: & \;\; \max(\orders) \in \bigO{\vsdim/\nbeq} & \textit{(balanced order)},
\end{align*}
and we let $\order = \max(\orders)$. We note that any algorithm designed for a
uniform order $(\order,\ldots,\order)$ can straightforwardly be used to deal
with any order $\orders$ (see \cref{rmk:pmbasis_unbalancedorders}); yet, this
might lead to poor performance if the latter order is not balanced.

Under $\hypobal$, the divide and conquer algorithm of~\citep{BecLab94},
improved as in \citep[Algo.\,\algoname{PM-Basis}]{GiJeVi03}, computes an
$\shifts$-minimal approximant basis using $\bigO{(1 +
\nbeq/\nbun)\appbastime{\nbun,\vsdim/\nbeq}}$ operations.  This is achieved for
arbitrary shifts, despite the existence of $\shifts$-minimal bases with
arbitrarily large degree: \algoname{PM-Basis} always returns a basis of degree
$\le\order$. It is particularly efficient in the case $\nbeq=\Theta(\nbun)$,
the cost bound being then in $\softO{\nbun^{\expmatmul-1} \vsdim}$.

Here, we slightly modify \algoname{PM-Basis} so that its output basis reveals
the $\shifts$-minimal degree $\minDegs$.  For this, we ensure that, in addition
to being $\shifts$-minimal, this basis exhibits a so-called \emph{pivot entry}
on each row; it is then said to be in $\shifts$-weak Popov form
\citep{MulSto03}.  Computing bases in this form to obtain $\minDegs$ will be a
common thread in all the algorithms we present.

Then, we show that the canonical basis can be obtained by using essentially two
successive calls to \algoname{PM-Basis}: the first one to find $\minDegs$, and
the second one to find the basis by using $-\minDegs$ in place of the shift.
The correctness of this approach is detailed in \cref{lem:mindeg_shift}.

\begin{theorem}
  \label{thm:pm_basis}
  Let $\orders \in \orderSpace$, let $\sys \in \sysSpace$ with
  $\cdeg{\sys}<\orders$, and let $\shifts \in \shiftSpace$.  Then,
  \begin{itemize}
    \item \cref{pbm:pab} can be solved in $\bigO{(1 + \nbeq/\nbun)
      \appbastime{\nbun,\order}}$ operations in $\field$, where
      $\order=\max(\orders)$; assuming $\hypslmm$, this is in $\bigO{(1 +
      \nbeq/\nbun) \polmatmultime{\nbun,\order} \log(\order)}$.
    \item If \(\nbeq>\nbun\) (hence also \(\vsdim>\nbun\)),
      \cref{pbm:pab} can be solved in
      $\bigO{\appbastime{\nbun,\vsdim/\nbun} +
        \appbastime{\nbun,\order}}$ operations in $\field$; assuming
        $\hypslmm$, this is in
        $\bigO{\polmatmultime{\nbun,\vsdim/\nbun}\log(\vsdim/\nbun)
        + \polmatmultime{\nbun,\order} \log(\order)}$.
  \end{itemize}
\end{theorem}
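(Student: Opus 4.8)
The plan is to prove both items with a common two-pass scheme layered on \algoname{PM-Basis}. In the first pass, a \algoname{PM-Basis}-based computation run with shift $\shifts$ returns a basis of $\modApp$ in $\shifts$-weak Popov form, whose pivot degrees are by definition the $\shifts$-minimal degree $\minDegs$. In the second pass, the same computation is run with shift $-\minDegs$: by \cref{lem:mindeg_shift} the returned $(-\minDegs)$-weak Popov basis has its pivots on the diagonal, with column $j$ of degree $\minDegs_j$, and making those diagonal entries monic turns it into the $\shifts$-Popov basis $\appbas$. This normalization amounts to $\bigO{\nbun}$ scalar inversions and $\bigO{\nbun\sumVec{\minDegs}}\subseteq\bigO{\nbun\vsdim}$ multiplications (using $\sumVec{\minDegs}\le\vsdim$ from \cref{lem:degdet_appbasis}), which is negligible against the dominant cost in both items; and under $\hypslmm$ the simplified cost bounds will follow from $\appbastime{\nbun,d}\in\bigO{\polmatmultime{\nbun,d}\log(d)}$.

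For the first item, I would in each pass reduce the arbitrary order $\orders$ to the uniform order $(\order,\dots,\order)$, $\order=\max(\orders)$, through \cref{rmk:pmbasis_unbalancedorders}, so that the modified \algoname{PM-Basis} solves the instance at its uniform-order cost $\bigO{(1+\nbeq/\nbun)\appbastime{\nbun,\order}}$; running the two passes stays within this bound.

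For the second item, I would prepend to each pass a preprocessing lowering both the column dimension and the order. Let $\bar d=\lceil\vsdim/\nbun\rceil$. Concatenating two columns $\matcol{\sys}{j}$, $\matcol{\sys}{j'}$ of orders $\order_j$, $\order_{j'}$ into $\matcol{\sys}{j}+\var^{\order_j}\matcol{\sys}{j'}$ of order $\order_j+\order_{j'}$ does not change $\modApp$, so first-fit bin packing merges the columns of order $\le\bar d$ into $\bigO{\nbun}$ columns of order $\le\bar d$, while the columns of order $>\bar d$ — fewer than $\nbun$ of them — are truncated at order $\bar d$. One \algoname{PM-Basis} call on the resulting instance ($\nbun$ rows, $\bigO{\nbun}$ columns, orders $\le\bar d$, again via \cref{rmk:pmbasis_unbalancedorders}) costs $\bigO{\appbastime{\nbun,\lceil\vsdim/\nbun\rceil}}$ and yields a $\shifts$-weak Popov basis $\appbas'$ of degree $\le\bar d$; moreover $\modApp=\modAppCustom{\orders''}{\appbas'\sys}$ where $\orders''$ is supported on the fewer than $\nbun$ high-order columns and satisfies $\max(\orders'')\le\order-\bar d$. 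The residual $\appbas'\sys\bmod\mods$ on those columns is the product of an $\nbun\times\nbun$ matrix by an $\nbun\times\nbeq''$ matrix with $\nbeq''<\nbun$, truncated at degree $<\order$, costing $\bigO{\polmatmultime{\nbun,\order}}$; a second \algoname{PM-Basis} call, with $\rdeg[\shifts]{\appbas'}$ as the new shift, returns a corresponding weak Popov basis $\appbas''$ of degree $\le\order-\bar d$ in $\bigO{\appbastime{\nbun,\order}}$; and $\appbas''\appbas'$, of degree $\le\order$ and computed in $\bigO{\polmatmultime{\nbun,\order}}$, is a $\shifts$-weak Popov basis of $\modApp$ by the shift-tracking property of \algoname{PM-Basis}-style compositions. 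Hence one pass, and therefore both, cost $\bigO{\appbastime{\nbun,\lceil\vsdim/\nbun\rceil}+\appbastime{\nbun,\order}}$.

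I expect the real work to be correctness of these reductions at the level of normal forms: that merging and truncating columns and composing the two intermediate bases preserve not just $\shifts$-minimality but the pivot structure of the $\shifts$-weak Popov form, so that $\minDegs$ is obtained faithfully — this rests on the shift-tracking behaviour underlying \algoname{PM-Basis} (as in \citep{GiJeVi03}) and on the overlapping-linearization toolbox of \citep{Storjohann06} — together with pinning down \cref{lem:mindeg_shift}, i.e.\ that the shift $-\minDegs$ forces the pivots onto the diagonal and that the ensuing $(-\minDegs)$-weak Popov basis equals $\appbas$ up to the elementary row scaling accounted for above.
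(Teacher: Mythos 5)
Your plan for the first item matches the paper's \algoname{Popov-PM-Basis}: run \algoname{PM-Basis} once with $\shifts$ to read off the $\shifts$-minimal degree $\minDegs$ from the ordered weak Popov output, then once more with $-\minDegs$, and normalize by a constant left factor using \cref{lem:mindeg_shift}. That part is fine.

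The second item, however, rests on a false claim. You assert that replacing two columns $\matcol{\sys}{j}$, $\matcol{\sys}{j'}$ of respective orders $\order_j$, $\order_{j'}$ by the single column $\matcol{\sys}{j}+\var^{\order_j}\matcol{\sys}{j'}$ of order $\order_j+\order_{j'}$ does not change $\modApp$. It does. If $\app\matcol{\sys}{j}=q_1\var^{\order_j}$ with $q_1\not\equiv 0\bmod \var^{\order_{j'}}$, then the merged condition $\app(\matcol{\sys}{j}+\var^{\order_j}\matcol{\sys}{j'})\equiv 0\bmod \var^{\order_j+\order_{j'}}$ forces $\app\matcol{\sys}{j'}\equiv -q_1\not\equiv 0\bmod\var^{\order_{j'}}$, contradicting membership in $\modApp$. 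Concretely, with $\nbun=1$, $\sys=[1\;\;1]$, $\orders=(1,1)$, one has $\modApp=\var\polRing$; but the merged instance $\sys''=[1+\var]$, $\orders''=(2)$ gives $\modAppCustom{(2)}{\sys''}=\var^2\polRing$, strictly smaller. So the "first-fit bin packing" step does not produce an equivalent instance, and the entire reduction for the second item collapses. (Partial linearizations in the literature go the other way: they split a column of high order into several columns of low order, as in the overlapping linearization of \citep{Storjohann06} used in \cref{sec:knowndeg}; merging columns has no such semantics.)

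The paper's actual route to the second item is the column-dimension reduction of \cref{algo:reduce_coldim}, which never combines columns. After sorting the orders, it performs $\ell+1$ calls to \algoname{PM-Basis} at dyadically increasing orders $1,1,2,4,\ldots,2^{\ell-1}$ (with $2^\ell\approx\order_\nbun$), keeping at the $i$th call only the columns whose (rounded-up) order is at least $2^i$; columns whose order is exhausted simply drop out of the subsequent multiplications. The amortized cost bound $\bigO{\appbastime{\nbun,\vsdim/\nbun}}$ comes from the dyadic doubling together with $\sum_i\mu_i 2^{i-1}\le 2\vsdim$, where $\mu_i$ is the number of still-active columns; once fewer than $\nbun$ columns remain, a single \algoname{PM-Basis} call at order $\order$ finishes the job in $\bigO{\appbastime{\nbun,\order}}$. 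To repair your argument you would need to replace bin packing by some genuine module-preserving transformation like this one, or by a partial linearization that increases the number of columns rather than decreasing it.
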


When $\nbeq>\nbun$, the cost bound in the second item improves upon that in the
first item for some unbalanced orders.  Take for example
$\orders=(\vsdim/2,1,\ldots,1)$ with $\nbeq=\vsdim/2+1 \ge \nbun$: then,
$\order=\vsdim/2$ and the first bound is $\bigO{\frac{\vsdim}{\nbun}
\appbastime{\nbun,\vsdim}}$ whereas the second bound is only
$\bigO{\appbastime{\nbun,\vsdim}}$.  This is obtained via an algorithm which
reduces the column dimension to $\nbeq<\nbun$ (first term in the cost) and then
applies \algoname{PM-Basis} on the remaining instance (second term in the
cost).  The first step is itself done by applying \algoname{PM-Basis} a
logarithmic number of times to process all columns whose corresponding order is
less than $\vsdim/\nbun$; there are at least $\nbeq-\nbun$ such columns by
definition of $\vsdim$.

To illustrate the involved logarithmic factors, let us consider
$\nbun=\nbeq+1=2$.  The cost bounds in the last theorem become
$\bigO{\polmultime{\vsdim} \log(\vsdim)}$, the same as for the related half-gcd
algorithm in $\polRing$ of \citet{Knu70,Sch71,Moe73}. Besides, the bound
$\bigO{\polmultime{\vsdim} \log(\vsdim)^3}$ from \citep{JeNeScVi16} is replaced
by $\bigO{\polmultime{\vsdim}\log(\vsdim)^2}$ in \cref{thm:fast_pab}. We will
see that this remaining extra logarithmic factor compared to the half-gcd comes
from two layers of recursion: at each node of the global divide and conquer
scheme, there is a call to \algoname{PM-Basis}, which itself is a divide and
conquer algorithm performing a polynomial matrix product at each node. To avoid
this factor for the general approximation problem considered here is an open
question.

\paragraph{Weakly unbalanced shifts, around their minimal or maximum value}

In this paragraph, we report cost bounds from \citep{ZhoLab12} which are
proved under the following assumptions:
{ \setlength{\jot}{1pt} 
\begin{align*}
  \hyppolmul: & \;\; \polmatmultime{\nbun,d} \in \Theta(\nbun^\expmatmul
                    \polmultime{d}), \polmultime{k d} \in \bigO{k^{\expmatmul-1}\polmultime{d}}, \\
              & \;\; \text{and } \polmultime{d} + \polmultime{d'} \le \polmultime{d+d'} \text{ for } \nbun,d,d'>0 \text{ and } k\ge 1.
\end{align*} }%
Note that $\hyppolmul$ implies $\hypslmm$. Hereafter, for an integer $t$ and a
shift $\shifts = (\shift{1},\ldots,\shift{\nbun})$, we denote by $\shifts+t$
the shift $(\shift{1}+t,\ldots,\shift{\nbun}+t)$, and notation such as the inequality $\shifts
\le t$ stands for $\max(\shifts)\le t$.

The algorithm \algoname{PM-Basis} discussed above is efficient for
$\nbeq\in\Omega(\nbun)$ and assuming $\hypobal$.  Yet, when $\nbeq$ is small
compared to $\nbun$, this assumption \(\hypobal\) becomes weaker and so does
the bound $\order=\max(\orders)$ controlling the output degree.  In the extreme
case $\nbeq=1$, $\hypobal$ is void since $\order \le \vsdim = \sumVec{\orders}$
always holds; then, \algoname{PM-Basis} manipulates bases of degree up to
$\order=\vsdim$, and its cost bound is $\softO{\nbun^\expmatmul \vsdim}$.
Focusing on the case $\nbeq<\nbun$, \citet{ZhoLab12} noted that both the
assumption
\begin{align*}
  \hypsbal: & \;\; \max(\shifts) -\min(\shifts) \in \bigO{\vsdim/\nbun} & \qquad \qquad \textit{(balanced shift)}
\end{align*}
and the weaker assumption
\begin{align*}
  \hypsmin: & \;\; \sumVec{\shifts-\min(\shifts)} \in \bigO{\vsdim}  & \textit{(weakly unbalanced shift, around $\min$)}
\end{align*}
imply that the average row degree of any $\shifts$-minimal approximant basis is
in $\bigO{\vsdim/\nbun}$. Then, using the \emph{overlapping linearization}
technique from \citep[Sec.\,2]{Storjohann06} at most $\log(\nbun/\nbeq)$ times,
they reduced to the case $\nbeq=\Theta(\nbun)$ and obtained the cost bound
$\bigO{\nbun^\expmatmul \polmultime{\vsdim/\nbun} \log(\vsdim/\nbeq)} \subseteq
\softO{\nbun^{\expmatmul-1}\vsdim}$ \citep[Sec.\,3~to~5]{ZhoLab12}, under
$\hyppolmul$, $\hypobal$, and $\hypsmin$.  The partial linearizations are done
at a degree $\degExp$ which is doubled at each iteration, each of them allowing
to recover the rows of degree $\le \degExp$ of the sought basis.  There are
many such rows since the average row degree is small by assumption: after the
$k$th iteration, only $\bigO{\nbun/2^k}$ rows remain to be found.  An essential
property for efficiency is that the found rows can be discarded in the further
iterations; this yields a dimension decrease which compensates for the
increase of the degree $\degExp$.

On the other hand, assuming
\begin{align*}
  \hypsmax: & \;\; \sumVec{\!\max(\shifts)-\shifts} \in \bigO{\vsdim} & \textit{(weakly unbalanced shift, around $\max$)}
\end{align*}
implies
roughly that the sought basis has average row degree in $\bigO{\vsdim/\nbun}$
up to a small number of columns whose degree is large, and that the shift can
be used to guess locations for these columns.  Then,
\citet[Sec.\,6]{ZhoLab12} use $\log(\nbun)$ calls to the \emph{output column
linearization} from \citep[Sec.\,3]{Storjohann06} in degree $\degExp$.  At each
call, this transformation reduces to the case $\hypsbal$ and allows one to
uncover rows of the sought basis whose degree is at a distance at most
$\degExp$ from the expected one.  Again, there must be many such rows under
$\hypsmax$, and since the remaining rows have degrees which do not agree well
with the shift, they must contain large blocks of zeroes; this leads to
decreasing the dimensions while $\degExp$ is doubled.  This approach has the
same asymptotic cost as above, still under $\hyppolmul$ and $\hypobal$; we
summarize this in \cref{fig:linearizations} (top).

Most often, the approximant bases returned by the algorithms in
\citep{ZhoLab12} are not normalized.  Here, we show how to modify these
algorithms to obtain the $\shifts$-Popov basis without impacting the cost
bound.  Furthermore, we generalize them to arbitrary orders; in other words, we
remove the assumptions $\nbeq < \nbun$ and $\hypobal$.  Instead of making
assumptions on $\shifts$ such as $\hypsmin$ and $\hypsmax$, we extend the
algorithms to arbitrary shifts and give cost bounds parametrized by the
quantities $\sumVec{\shifts-\min(\shifts)}$ and
$\sumVec{\!\max(\shifts)-\shifts}$ which appear in the latter assumptions and
are inherent to the approach.  Then, the obtained cost bounds range from
$\softO{\nbun^{\expmatmul-1} \vsdim}$ under $\hypsmin$ or $\hypsmax$, thus
matching \cref{thm:fast_pab} up to logarithmic factors, to
$\softO{\nbun^{\expmatmul} \order}$ when the quantities above exceed some
threshold, thus matching \cref{thm:pm_basis}; in the latter case, the
algorithms essentially boil down to a single call to \algoname{PM-Basis}.
Precisely, we obtain the next result.

\begin{theorem}
  \label{thm:zhou_labahn}
  Let $\orders \in \orderSpace$, let $\sys \in \sysSpace$ with
  $\cdeg{\sys}<\orders$, and let $\shifts \in \shiftSpace$.  Consider the
  parameters $\vsdim = \sumVec{\orders}$, $\order=\max(\orders)$, $\sumRdeg =
  \vsdim + \sumVec{\shifts-\min(\shifts)}$, and $\sumDistMax = \vsdim +
  \sumVec{\!\max(\shifts)-\shifts}$.  Then,
  \begin{itemize}
    \item If $\sumRdeg \le \nbun\order$, \cref{pbm:pab} can be solved in
      $\bigO{\costZhoLabMin{\sumRdeg,\nbun,\order}}$ operations in $\field$,
      where 
      \begin{equation}
        \label{eqn:cost_zholab_min}
        \costZhoLabMin{\sumRdeg,\nbun,\order} = \sum_{k=0}^{\lceil
        \log_2(\order/\lceil\sumRdeg/\nbun\rceil)\rceil}
          \appbastime{2^{-k}\nbun,2^k\lceil\sumRdeg/\nbun\rceil} +
          2^k \polmatmultime{2^{-k}\nbun,2^k\lceil\sumRdeg/\nbun\rceil}.
      \end{equation}
      Assuming $\hyppolmul$, the latter quantity is in $\bigO{\nbun^\expmatmul
      \polmultime{\lceil\sumRdeg/\nbun\rceil} \log(\order)}$.
    \item If $\sumDistMax \le \nbun \order$, \cref{pbm:pab} can be solved in
      \[
        \bigOPar{
          \appbastime{\mu,\lceil\vsdim/\mu\rceil}
          + \appbastime{\mu,\order}
          + \sum_{k=0}^{\lfloor\log_2(\nbun\order/\sumDistMax)\rfloor}
          \costZhoLabMin{\sumDistMax,2^{-k}\nbun,\order} 
          }
      \]
      operations in $\field$, for some integer $\mu \in \ZZp$ such that
      $\mu\le\nbun$ and $\mu\order<\sumDistMax$.  Assuming $\hyppolmul$, this
      cost bound is in $\bigO{\nbun^{\expmatmul}
        \polmultime{\lceil\sumDistMax/\nbun\rceil} \log(\order) +
      \mu^\expmatmul \polmultime{\lceil\vsdim/\mu\rceil}
    \log(\lceil\vsdim/\mu\rceil)}$.
  \end{itemize}
\end{theorem}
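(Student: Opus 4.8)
The plan is to prove each of the two items by designing a dedicated algorithm in the spirit of \citep{ZhoLab12}, establishing its correctness from the degree structure of $\shifts$-minimal approximant bases, and then bounding its cost; the simplified bounds under $\hyppolmul$ follow by routine manipulation of the resulting geometric-type sums, using $\polmultime{2^k d}\in\bigO{2^{k(\expmatmul-1)}\polmultime{d}}$, $\appbastime{\nbun,d}\in\bigO{\nbun^\expmatmul\polmultime{d}\log(d)}$ and $\expmatmul>2$, which keep each sum in the statement within an $\bigO{\log(\order)}$ factor of its leading term. Both algorithms are arranged so that the blocks of rows they successively uncover are in $\shifts$-weak Popov form, as in the variant of \algoname{PM-Basis} underlying \cref{thm:pm_basis}, thereby revealing the $\shifts$-minimal degree $\minDegs$; a final inexpensive normalization then turns the output into the $\shifts$-Popov basis within the stated cost, as justified by \cref{lem:mindeg_shift}.

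For the first item, I would set $\degExp_0=\lceil\sumRdeg/\nbun\rceil$ and iterate for $k=0,1,\ldots,\lceil\log_2(\order/\degExp_0)\rceil$, maintaining the already-identified rows of the sought basis together with a residual instance of unchanged column dimension but row dimension at most $2^{-k}\nbun$. At step $k$, Storjohann's overlapping linearization \citep[Sec.\,2]{Storjohann06} at degree $\degExp_k=2^k\degExp_0$ recasts the computation of the rows of $\shifts$-row degree at most $\min(\shifts)+\degExp_k$ as an approximant basis problem of balanced order and dimension $\bigO{2^{-k}\nbun}$, solved by \algoname{PM-Basis}; the correctness claim is that this recovers exactly those rows of the $\shifts$-Popov basis of $\modApp$. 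Since $\sumRdeg=\vsdim+\sumVec{\shifts-\min(\shifts)}$ bounds the total of these shifted row degrees (via $\sumVec{\minDegs}\le\vsdim$, \cref{lem:degdet_appbasis}), at most $\sumRdeg/\degExp_k=2^{-k}\nbun$ rows can remain, and these remaining rows again generate an approximant module of the same shape, so they may be set aside. Step $k$ then costs one \algoname{PM-Basis} call in dimension $2^{-k}\nbun$ and degree $2^k\degExp_0$, namely $\appbastime{2^{-k}\nbun,2^k\degExp_0}$, plus one truncated matrix product in $\bigO{2^k\polmatmultime{2^{-k}\nbun,2^k\degExp_0}}$ for the new residual; summing over $k$ gives \cref{eqn:cost_zholab_min}, and when $\nbeq\ge\nbun$ a preliminary step processing the columns of smallest order, as in the second item of \cref{thm:pm_basis}, first reduces to $\nbeq<\nbun$ within the same bound.

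For the second item, I would follow \citep[Sec.\,6]{ZhoLab12}: under a shift concentrated around its maximum, the sought basis has small average row degree up to a few high-degree columns whose positions are predicted by $\shifts$. I would iterate for $k=0,1,\ldots,\lfloor\log_2(\nbun\order/\sumDistMax)\rfloor$; at step $k$, the output column linearization \citep[Sec.\,3]{Storjohann06} in degree $\degExp_k$ reduces the current instance, of row dimension at most $2^{-k}\nbun$, to one with a balanced shift, which the algorithm of the first item solves, its relevant parameter being $\bigO{\sumDistMax}$ rather than $\sumRdeg$, in $\costZhoLabMin{\sumDistMax,2^{-k}\nbun,\order}$ operations. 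This uncovers the rows whose degree is within $\degExp_k$ of the value predicted by $\shifts$; since $\sumDistMax=\vsdim+\sumVec{\max(\shifts)-\shifts}$ bounds how many rows fail this test, the complementary rows must contain large zero blocks, so both the row and the column dimension drop by a factor of two and these rows are removed. After these iterations the column dimension has shrunk to some $\mu\le\nbun$ with $\mu\order<\sumDistMax$, and the residual instance is finished as in the second item of \cref{thm:pm_basis}, at a cost dominated by $\appbastime{\mu,\lceil\vsdim/\mu\rceil}+\appbastime{\mu,\order}$; summing the contributions gives the announced bound.

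I expect the main obstacle to lie in the correctness bookkeeping rather than in the cost analysis. One must prove, first, that the rows of small shifted degree returned by \algoname{PM-Basis} on the linearized instance are genuinely rows of the $\shifts$-Popov basis of the original module and that the complementary rows again form an approximant module of the same type — this is where the interplay of the linearization, the shift, and the degree truncations is subtle, and where care is needed to go beyond the balanced-order, $\nbeq<\nbun$ setting of \citep{ZhoLab12} and to reach the canonical form — and, second, the counting argument showing that the row dimension really halves at each step (at most $\sumRdeg/\degExp_k$, resp.\ $\sumDistMax/\degExp_k$, rows remain), so that the doubling of the degree $\degExp_k$ is compensated and the sums in the statement collapse to $\softO{\nbun^{\expmatmul-1}\vsdim}$ precisely when $\sumRdeg$, resp.\ $\sumDistMax$, is $\bigO{\vsdim}$.
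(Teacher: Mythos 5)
Your plan follows essentially the same route as the paper: for the first item, iterated overlapping linearizations at degree $2^k\lceil\sumRdeg/\nbun\rceil$ with the already-found rows discarded so that the row dimension halves while the degree doubles (after an initial reduction to $\nbeq<\nbun$ as in the second item of \cref{thm:pm_basis}); for the second item, iterated output column linearizations, each instance handed to the first algorithm, with the set of rows still to be found at least halving, and a final \algoname{PM-Basis}-based pass on the $\mu$ remaining rows. Your counting arguments and per-iteration costs match the paper's, and you correctly identify the technical core (the interplay of the two linearizations and the halving arguments, carried out in the paper in \cref{lem:parlin_app,lem:correctness_zholab_min,cor:correctness_zholab_min}).

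The one point where your write-up, read literally, would fail is the claim that a ``final inexpensive normalization \dots as justified by \cref{lem:mindeg_shift}'' turns the $\shifts$-ordered weak Popov output into the $\shifts$-Popov basis. \cref{lem:mindeg_shift} only says that a \emph{$-\minDegs$-reduced} basis is a constant transformation away from the Popov basis; an $\shifts$-weak Popov basis is not, since its non-pivot entries may have degrees far exceeding $\minDegs$. What the weak Popov output buys you is only the knowledge of $\minDegs$; one must then run a second full computation with the shift $-\minDegs$ --- this is \cref{algo:knowndeg_pab}, which itself needs the output-column and overlapping linearizations to stay within $\bigO{\appbastime{\nbun,\vsdim/\nbun}}$, a cost that fits in both items since $\sumRdeg,\sumDistMax\ge\vsdim$. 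With that substitution your plan coincides with the paper's proof.
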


As above, consider these cost bounds for $\vsdim\ge\nbun$. They can be
written $\softO{\nbun^{\expmatmul-1} \sumRdeg}$ and
$\softO{\nbun^{\expmatmul-1} \sumDistMax}$ and they improve upon those in
\cref{thm:pm_basis} when $\sumRdeg \in o(\nbun\order)$ and when $\sumDistMax
\in o(\nbun\order)$, respectively.  Note that $\hypsmin$ and $\hypsmax$ are
equivalent to $\sumRdeg \in \bigO{\vsdim}$ and $\sumDistMax \in \bigO{\vsdim}$,
respectively; under either of these two assumptions, the corresponding cost
bound in the above theorem improves upon that in \cref{thm:fast_pab} at the
level of logarithmic factors, assuming $\hyppolmul$.

An important example of a shift which satisfies neither $\sumRdeg \le
\nbun\order$ nor $\sumDistMax \le \nbun \order$ is the one which yields the
approximant basis in Hermite form; namely, $\shifts =
(\vsdim,2\vsdim,\ldots,\nbun\vsdim)$ for which we have $\sumRdeg = \sumDistMax
= \frac{\nbun(\nbun-1)}{2} \vsdim \ge \frac{\nbun-1}{2} \nbun \order$.  Then,
only the cost in \cref{thm:fast_pab} meets the target
$\softO{\nbun^{\expmatmul-1}\vsdim}$ in general: \cref{thm:zhou_labahn} is void
with such $\sumRdeg$ and $\sumDistMax$, while the cost
$\softO{\nbun^{\expmatmul-1}\vsdim+\nbun^\expmatmul \order}$ in
\cref{thm:pm_basis} has an extra factor $\nbun \order / \vsdim$ which can be as
large as $\nbun$.

The cost bounds in \cref{thm:zhou_labahn} refine those in \citep[Thm.\,5.3
and\,6.14]{ZhoLab12}.  \citet{JeNeScVi17} gave an algorithm achieving a cost
similar to that in the first item above, in the more general context of
\cref{eqn:maingen} and thus covering the case of arbitrary orders as well; the
cost bound above improves upon that given in \citepalias[Thm.\,1.5]{JeNeScVi17}
by a logarithmic factor.

\begin{figure}[tb]
  \centering
  \fbox{
    \begin{tikzpicture}[node distance=0.5cm, auto,]
      \node[draw,rectangle,rounded corners] (smax) {
        [$\nbeq<\nbun$, $\hypobal$, and] $\hypsmax$
      };
      \node[draw,rectangle,rounded corners,below=of smax] (sbal) {
        [$\nbeq<\nbun$, $\hypobal$, and] $\hypsbal$
      };
      \node[draw,rectangle,rounded corners,below=of sbal] (sbalcdimred) {
        [$\hypobal$, and] $\nbeq<\nbun$, $\hypsbal$
      };
      \node[left=1.2cm of sbalcdimred] (empty) {\strut };
      \node[draw,rectangle,rounded corners,left=1.2cm of empty] (smincdimred) {
        [$\hypobal$ and] $\nbeq<\nbun$, $\hypsmin$
      };
      \node[draw,rectangle,rounded corners,above=of smincdimred] (smin) {
        [$\nbeq<\nbun$, $\hypobal$ and] $\hypsmin$
      };
      \node[draw,rectangle,rounded corners,below=of empty] (balanced) {
         $\nbeq\in\Theta(\nbun)$ and $\hypobal$
      };
      \node[right=0.8cm of balanced] (pmbasis) {
        \small \emph{fast solution using} \algoname{PM-Basis}
      };
      \draw[->] (smax) -- node[left] {\small\emph{output column linearization}} (sbal);
      \draw[->] (sbal) -- node[left] {\small\emph{\cref{algo:reduce_coldim}, based on \algoname{PM-Basis}}\phantom{spaces}} (sbalcdimred);
      \draw[->,shorten >=2pt,bend left=5] (sbalcdimred) to node[left,yshift=0.05cm] {\small\emph{overlapping linearization}} (balanced);
      \draw[->,shorten >=2pt,bend right=5] (smincdimred) to (balanced);
      \draw[->] (smin) -- (smincdimred);
      \draw[->] (balanced) -- (pmbasis);
    \end{tikzpicture}
  }

  \smallskip

  \fbox{
    \begin{tikzpicture}[node distance=0.5cm, auto,]
      \node[draw,rectangle,rounded corners] (input) {
        known $\minDegs = \cdeg{\appbas}$ with $\sumVec{\minDegs}\le\vsdim$
      };
      \node[draw,rectangle,rounded corners,below=of input] (smalldeg) {
        known $\minDegs = \cdeg{\appbas}$ with $\max(\minDegs)\in \bigO{\vsdim/\nbun}$
      };
      \node[draw,rectangle,rounded corners,below=of smalldeg] (smalldegtall) {
        $\nbeq<\nbun$, known $\minDegs = \cdeg{\appbas}$ with $\max(\minDegs)\in \bigO{\vsdim/\nbun}$
      };
      \node[draw,rectangle,rounded corners,below=of smalldegtall] (balanced) {
        $\nbeq\in\Theta(\nbun)$ and $\hypobal$
      };
      \node[right=0.5cm of balanced] (pmbasis) {
        \small \emph{fast solution using} \algoname{PM-Basis}
      };
      \draw[->] (input) -- node[right] {\small\emph{output column linearization}} (smalldeg);
      \draw[->] (smalldeg) -- node[right] {\small\emph{\cref{algo:reduce_coldim}, based on \algoname{PM-Basis}}} (smalldegtall);
      \draw[->] (smalldegtall) -- node[right] {\small\emph{overlapping linearization}} (balanced);
      \draw[->] (balanced) -- (pmbasis);
    \end{tikzpicture}
  }
  \caption{\emph{(Top)} Fast algorithm from \citep{ZhoLab12} assuming either
    $\hypsmin$ or $\hypsmax$, via a logarithmic number of partial
    linearizations from \citep{Storjohann06} and calls to \algoname{PM-Basis}.
    In brackets, assumptions that we have removed in our modified algorithm; we
    have also inserted the column dimension reduction
    (\cref{algo:reduce_coldim}) which is not necessary in \citep{ZhoLab12}
    where $\nbeq<\nbun$ is assumed.  \emph{(Bottom)} Fast algorithm when the
    shifted minimal degree is known, using two partial linearizations from
    \citep{Storjohann06} and calls to
  \citep[Algo.\,\algoname{PM-Basis}]{GiJeVi03}.}
  \label{fig:linearizations}
\end{figure}

\paragraph{Known minimal degree}

The main new ingredient behind \cref{thm:fast_pab} is an efficient algorithm
for \cref{pbm:pab} when the $\shifts$-minimal degree $\minDegs$ of $\modApp$ is
known.

As noted above, knowing $\minDegs$ leads us to consider the shift $-\minDegs$
instead of $\shifts$. This new shift is weakly unbalanced around its maximum
value, since $\sumVec{\minDegs} \le \vsdim$.  Inspired by the efficient
algorithms of \citep{ZhoLab12} for such shifts, we consider the same overall
strategy while exploiting the additional information given by $\minDegs$ to
design a simpler and more efficient algorithm.

To handle the unbalancedness of the output column degrees,
\citepalias{ZhoLab12} uses a logarithmic number of output column
linearizations, each of them leading to find some rows of the sought basis.
Thanks to the knowledge of $\minDegs$, we are able to use the same
linearization only once, with parameters which directly yield the full basis
(\cref{algo:knowndeg_pab}, Step~\textbf{1}).  This transformation builds a new
instance for which the new shifted minimal degree $\minDegs$ is known and
balanced: $\max(\minDegs) \in \bigO{\vsdim/\nbun}$.

Then, we use \algoname{PM-Basis} to efficiently reduce to the case $\nbeq <
\nbun$ (\cref{algo:knowndeg_pab}, Step~\textbf{2}).  This is not done in
\citepalias{ZhoLab12} since $\nbeq < \nbun$ holds by assumption in this
reference (yet, we do resort to column dimension reduction in our generalized
version of this algorithm, see \cref{algo:zhou_labahn_min}, Step~\textbf{1}).

Now, to handle balanced shifts such as the new $-\minDegs$,
\citepalias{ZhoLab12} uses a logarithmic number of overlapping linearizations.
Each of these transformations gives an instance satisfying
$\nbeq\in\Theta(\nbun)$ and $\hypobal$, which can thus be solved efficiently
via \algoname{PM-Basis}, thereby uncovering some rows of the output basis.
Here, since the output degree is $\max(\minDegs)\in\bigO{\vsdim/\nbun}$, a
single call to overlapping linearization (\cref{algo:knowndeg_pab},
Step~\textbf{3}) yields a new instance which directly gives the full basis; as
above, it satisfies $\nbeq\in\Theta(\nbun)$ and $\hypobal$ and thus can be
solved efficiently via \algoname{PM-Basis}.

We summarize our approach in \cref{fig:linearizations} (bottom diagram). We
note that similar ideas were already used in \citep[Sec.~3]{GupSto11}, in the
context of Hermite form computation when the degrees of the diagonal entries
are known.

To summarize, we obtain the cost bound $\bigO{\appbastime{\nbun,\vsdim/\nbun}}$
for solving \cref{pbm:pab} when $\minDegs$ is known (see
\cref{prop:algo:knowndeg_pab}), without any further assumption.  This improves
over the algorithm in \citep[Sec.\,4]{JeNeScVi16}, designed for the same
purpose but in the more general context of \cref{eqn:maingen}, in which it is
unclear to us how to generalize the overlapping linearization.

\paragraph{Outline of the paper}

In \cref{sec:preliminaries}, we present preliminary definitions and properties.
Then, in \cref{sec:balanced_order}, we describe the algorithm
\algoname{PM-Basis} and prove the first item of \cref{thm:pm_basis}. We use
this algorithm in \cref{sec:reduce_nbeq} to show how to reduce to $\nbeq<\nbun$
efficiently; this implies the second item of \cref{thm:pm_basis}.  Together
with partial linearizations that we recall, this allows us to solve
\cref{pbm:pab} when the $\shifts$-minimal degree is known
(\cref{sec:knowndeg}).  Then, in \cref{sec:fastpab}, we give our main algorithm
and the proof of \cref{thm:fast_pab}.  Finally, we present generalizations of
the algorithms of \citep{ZhoLab12} and we prove \cref{thm:zhou_labahn} in
\cref{sec:weakly_unbalanced_shift}.

\section{Preliminaries}
\label{sec:preliminaries}

\subsection{Minimal bases, Popov bases, and minimal degree}
\label{subsec:forms}

For a shift $\shifts = (\shift{j})_j \in \shiftSpace[\rdim]$, the
\emph{$\shifts$-degree} of $\row{p} = [p_j]_j \in \polMatSpace[1][\rdim]$ is
$\max_{j} (\deg(p_j) + \shift{j})$, with the convention $\deg(0) = -\infty$.
If \(\row{p}\) is nonzero, its \emph{\(\shifts\)-pivot} is its rightmost entry
\(p_i\) such that \(\deg(p_i) + \shift{i} = \rdeg{\row{p}}\); then, \(i\) and
\(\deg(p_i)\) are called the \(\shifts\)-pivot \emph{index} and the
\(\shifts\)-pivot \emph{degree} of \(\row{p}\), respectively. The
\emph{$\shifts$-row degree} of a matrix $\mat{P} \in \polMatSpace[k][\rdim]$ is
$\rdeg[\shifts]{\mat{P}} = (r_1,\ldots,r_k)$ where $r_i$ is the
$\shifts$-degree of the $i$th row of $\mat{P}$, and the \emph{$\shifts$-leading
matrix} of $\mat{P} = [p_{ij}]_{ij}$ is the matrix
$\leadingMat[\shifts]{\mat{P}} \in \matSpace[k][\rdim]$ whose entry $(i,j)$ is
the coefficient of degree $r_i - \shift{j}$ of $p_{ij}$. Furthermore, if
\(\mat{P}\) has no zero row, its \(\shifts\)-pivot index (resp.~degree) is the
tuple of the \(\shifts\)-pivot indices (resp.~degrees) of its rows. The column
degree of $\appbas$ is $\cdeg{\appbas} = \rdeg[\unishift]{\trsp{\appbas}}$,
where $\trsp{\appbas}$ is the transpose of $\appbas$. We use the following
definitions from \citep{Kailath80,BeLaVi99,MulSto03}.

\begin{definition}
  \label{dfn:spopov}
  For $\shifts \in \shiftSpace$, a nonsingular matrix $\appbas \in
  \appbasSpace$ is said to be in
  \begin{itemize}
    \item \emph{$\shifts$-reduced form} if $\leadingMat[\shifts]{\appbas}$ is
      invertible;
    \item \emph{$\shifts$-ordered weak Popov form} if
      $\leadingMat[\shifts]{\appbas}$ is invertible and lower triangular;
    \item \emph{$\shifts$-weak Popov form} if it is in $\shifts$-ordered weak
      Popov form up to row permutation;
    \item \emph{$\shifts$-Popov form} if $\leadingMat[\shifts]{\appbas}$
      is unit lower triangular and $\leadingMat[\unishift]{\trsp{\appbas}}$
      is the identity matrix.
  \end{itemize}
\end{definition}

\noindent In particular, the $\shifts$-pivot degree of a matrix $\appbas$ in
$\shifts$-ordered weak Popov form is the tuple $\minDegs \in \NN^\nbun$ of the
degrees of its diagonal entries, and for $\appbas$ in $\shifts$-Popov form we
have $\minDegs = \cdeg{\appbas}$.

For $\orders\in\orderSpace$ and $\sys\in\sysSpace$, a basis of $\modApp$ in
$\shifts$-reduced form is said to be an \emph{$\shifts$-minimal} basis of
$\modApp$. We further call \emph{$\shifts$-minimal degree of $\modApp$} the
$\shifts$-pivot degree of the $\shifts$-Popov basis of $\modApp$, and in fact
of any $\shifts$-ordered weak Popov basis of $\modApp$
\citep[Lem.\,3.3]{JeNeScVi16}. The importance of these degrees is highlighted
by the next two lemmas.

The first one allows us to control the degrees in the computed bases and can be
found in \citep[Thm.\,4.1]{BarBul92} in a more general context. The second one
follows from \citep[Lem.\,15~and~17]{SarSto11} and shows that when the
$\shifts$-minimal degree $\minDegs$ is known, the computations may be performed
with the shift $-\minDegs$.

\begin{lemma}
  \label{lem:degdet_appbasis}
  Let $\orders \in \orderSpace$, let $\vsdim=\sumVec{\orders}$, and let $\sys
  \in \sysSpace$ with $\cdeg{\sys} < \orders$. Then, for any basis $\appbas \in
  \appbasSpace$ of $\modApp$, we have $\deg(\det(\appbas)) \le \vsdim$.
  Furthermore, for $\shifts \in \shiftSpace$, the $\shifts$-minimal degree
  $\minDegs \in \NN^\nbun$ of $\modApp$ satisfies $\sumVec{\minDegs} \le
  \vsdim$ and $\max(\minDegs) \le \max(\orders)$.
\end{lemma}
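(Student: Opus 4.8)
The plan is to establish the three claims in turn, the common thread being that a basis $\appbas$ of $\modApp$ describes exactly the transformation between the free module $\appSpace$ and $\modApp$ as a submodule of it. First I would prove the bound on $\deg(\det(\appbas))$. By the definition in \cref{eqn:main}, every row $\app$ of $\appbas$ satisfies $\app\sys = \col{q}\mods$ for some $\col{q}\in\polMatSpace[1][\nbeq]$, so there is a matrix $\mat{Q}\in\polMatSpace[\nbun][\nbeq]$ with $\appbas\sys = \mat{Q}\mods$. The key observation is that $\modApp$ contains the ``trivial'' approximants: the rows of $\mods\sys'$ for a suitable left inverse are not quite what one wants, but more simply, $\modApp \supseteq \var^{\vsdim}\appSpace$ is false in general — instead one uses that $\modApp$ contains a submodule of index dividing $\var^{\vsdim}$. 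Concretely, I would argue that $\appbas$ is a basis of $\modApp$ and that $\modApp$ contains $\mods$-related lattices whose determinant is $\var^{\vsdim}$; comparing determinants, $\det(\appbas)$ divides (up to a unit and lower-order factors) a polynomial of degree $\vsdim$. The cleanest route: the quotient $\appSpace/\modApp$ is a $\field[\var]$-module that is a quotient of $\polMatSpace[1][\nbeq]/\polMatSpace[1][\nbeq]\mods$ via $\app\mapsto \app\sys \bmod \mods$, which has $\field$-dimension $\sumVec{\orders}=\vsdim$; hence $\dim_\field(\appSpace/\modApp)\le\vsdim$, and since $\appbas$ is a basis of $\modApp$ inside the free module $\appSpace$, the Smith form / determinant gives $\deg(\det(\appbas)) = \dim_\field(\appSpace/\modApp) \le \vsdim$.

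Next, for the bound $\sumVec{\minDegs}\le\vsdim$: here I would invoke that $\minDegs=\cdeg{\appbas}$ when $\appbas$ is the $\shifts$-Popov basis, by the last sentence before \cref{lem:degdet_appbasis}. For a matrix in $\shifts$-Popov form, the $j$th diagonal entry has degree $\minDeg_j$ and is the unique entry of maximal degree in its column (by the condition that $\leadingMat[\unishift]{\trsp{\appbas}}$ is the identity, the column degree is attained exactly on the diagonal). Therefore $\det(\appbas)$ has degree exactly $\sum_j \minDeg_j = \sumVec{\minDegs}$: expanding the determinant, the diagonal product has degree $\sumVec{\minDegs}$ and strictly dominates every other permutation term, since any off-diagonal entry in column $j$ has degree $<\minDeg_j$. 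Combining with the first part, $\sumVec{\minDegs} = \deg(\det(\appbas)) \le \vsdim$.

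Finally, $\max(\minDegs)\le\max(\orders)$. I would argue this columnwise: fix $i\in\{1,\ldots,\nbun\}$ and let $\order=\max(\orders)$. The standard unit vector scaled by $\var^{\order}$, namely $\row{e}_i\var^{\order}$ where $\row{e}_i$ is the $i$th row of $\idMat[\nbun]$, satisfies $(\row{e}_i\var^{\order})\sys = \var^{\order}\matrow{\sys}{i}$, and since $\cdeg{\sys}<\orders\le(\order,\ldots,\order)$ componentwise, each entry of $\var^{\order}\matrow{\sys}{i}$ is divisible by $\var^{\order_j}$ in column $j$; hence $\row{e}_i\var^{\order}\in\modApp$. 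So $\modApp$ contains $\var^{\order}\appSpace$, which means the $\shifts$-minimal degree of $\modApp$ is entrywise at most that of $\var^{\order}\appSpace$. The $\shifts$-Popov basis of $\var^{\order}\appSpace$ is $\var^{\order}\idMat[\nbun]$ for every shift $\shifts$ (its $\shifts$-leading matrix is a permutation of the identity and it is already column-reduced with all column degrees equal to $\order$), so its $\shifts$-minimal degree is $(\order,\ldots,\order)$. By monotonicity of the minimal degree under module inclusion — which follows from \citep[Lem.\,3.3]{JeNeScVi16} together with the predictable-degree property, or can be seen directly by noting a $\shifts$-weak Popov basis of $\modApp$ must reduce the basis $\var^{\order}\idMat[\nbun]$ of the submodule — we get $\minDeg_i\le\order$ for all $i$.

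The main obstacle I anticipate is making the monotonicity argument in the last part fully rigorous without circularity: ``minimal degree decreases under passing to a larger module'' is intuitively clear but needs either a clean citation or a short lemma relating $\shifts$-pivot degrees of nested lattices of the same rank. I would handle it by the determinant-degree identity already used in the second part: for the $\shifts$-Popov basis $\appbas$ of $\modApp$ we have $\sumVec{\minDegs}=\deg(\det(\appbas))$, and if $\modApp\supseteq\var^{\order}\appSpace$ then $\det(\appbas)$ divides $\var^{\nbun\order}$, forcing $\appbas$ to be (up to unimodular transformation) a triangular matrix whose diagonal degrees are at most $\order$; combined with the Popov normalization this pins down $\minDeg_i\le\order$. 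Alternatively, and more simply, one restricts attention column by column using that the $i$th diagonal entry of the $\shifts$-Popov basis generates, together with the other rows, an approximant whose support forces its degree below $\order$ — but the determinant route is the safest to write cleanly.
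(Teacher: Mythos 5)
Your treatment of the first two claims is correct and close in spirit to the paper's: you bound $\dim_\field(\appSpace/\modApp)$ by $\vsdim$ via the morphism $\app \mapsto \app\sys \bmod \mods$ (a small terminology slip: $\appSpace/\modApp$ is isomorphic to a \emph{submodule}, not a quotient, of the codomain, but the dimension bound stands), relate this to $\deg(\det(\appbas))$ through the Smith normal form, and observe that the $\shifts$-Popov structure forces $\deg(\det(\appbas)) = \sumVec{\minDegs}$ because the column-degree maxima sit on the diagonal with monic leading coefficients. The paper reaches the same two inequalities, citing Kailath for the column-reduced determinant degree formula and for the isomorphism $\appSpace/\modApp \cong \polRing/(\var^{\minDeg_1}) \times \cdots \times \polRing/(\var^{\minDeg_\nbun})$; your Smith-form route is an equivalent formulation.

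The third claim is where there is a genuine gap. Your primary move — passing to $\var^{\order}\appSpace \subseteq \modApp$ with $\order = \max(\orders)$ and invoking ``monotonicity of the minimal degree under module inclusion'' — is the right idea, but you never prove that monotonicity, and the workaround you offer does not close the gap. From $\var^\order\appSpace \subseteq \modApp$ you can indeed deduce that the Hermite (or Smith) diagonal entries of $\appbas$ divide $\var^\order$, but the tuple $\minDegs$ is \emph{not} the tuple of those diagonal degrees: for $\appbas = \left(\begin{smallmatrix}\var & 1\\ 0 & \var\end{smallmatrix}\right)$, which is in $\unishift$-Popov form, one has $\minDegs = (1,1)$ while the Smith degrees are $(0,2)$. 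Deducing $\max(\minDegs) \le \order$ from ``all Smith degrees $\le \order$'' therefore needs a majorization-type argument that you do not give; the sentence ``combined with the Popov normalization this pins down $\minDeg_i \le \order$'' is exactly the missing step, not a proof of it.

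The paper's actual argument is short and should be stated rather than gestured at. Write $\var^\order\idMat[\nbun] = \mat{U}\appbas$; here $\mat{U}$ is a nonsingular polynomial matrix because each row of $\var^\order\idMat[\nbun]$ lies in $\modApp$ and $\det(\mat{U})\det(\appbas) = \var^{\nbun\order} \ne 0$. Since $\var^\order\idMat[\nbun]$ is central, this is also $\appbas\mat{U} = \var^\order\idMat[\nbun]$. Because $\appbas$ is in $\shifts$-Popov form it is column reduced with $\cdeg{\appbas} = \minDegs$, so the predictable degree property for column-reduced matrices gives, for every column index $j$, $\order = \max_i\{\deg(u_{ij}) + \minDeg_i : u_{ij} \ne 0\}$. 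Nonsingularity of $\mat{U}$ ensures every row index $i$ contributes to at least one such maximum, whence $\minDeg_i \le \order$ for all $i$. You do mention the predictable-degree property in passing, but you defer to the determinant route as ``the safest to write cleanly'' — in fact it is the other way around: the predictable-degree argument is the clean one, and the determinant/Smith route is the one with the hole.
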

\begin{proof}
  Let $\appbas$ be the $\shifts$-Popov basis of $\modApp$. Then, $\appbas$ is
  in particular $\unishift$-column reduced, hence $\deg(\det(\appbas)) =
  \sumVec{\cdeg{\appbas}} = \sumVec{\minDegs}$ \citep[Sec.\,6.3.2]{Kailath80};
  and since any basis of $\modApp$ has determinant $\lambda\det(\appbas)$ for
  some nonzero $\lambda\in\field$, it is enough to prove that
  $\sumVec{\minDegs} \le \vsdim$.

  Since $\appbas$ has column degree $(\minDeg_1,\ldots,\minDeg_\nbun)$,
  according to \citep[Thm.\,6.3.15]{Kailath80} the quotient $\appSpace/\modApp$
  is isomorphic to $\polRing/(\var^{\minDeg_1}) \times \cdots \times
  \polRing/(\var^{\minDeg_\nbun})$ as a $\field$-vector space, and thus has
  dimension $\sumVec{\minDegs}$. Now, this dimension is at most $\vsdim$, since
  $\modApp$ is the kernel of the morphism
  $\app \in \appSpace \mapsto \app \sys \bmod \mods \in
  \polRing/(\var^{\order_1}) \times \cdots \times
  \polRing/(\var^{\order_\nbeq})$, whose codomain has dimension
  $\sumVec{\orders}=\vsdim$ as a $\field$-vector space.

  The matrix $\var^{\max(\orders)}\idMat[\nbun]$ is a left-multiple of
  $\appbas$ since $\var^{\max(\orders)}\idMat[\nbun] \sys = 0 \bmod \mods$;
  thus the inequality $\max(\minDegs) \le \max(\orders)$ follows from the
  predictable degree property \citep{Forney75}.
\end{proof}

\begin{lemma}[{\citet[Lem.\,4.1]{JeNeScVi16}}]
  \label{lem:mindeg_shift}
  Let $\shifts\in\shiftSpace$ and let $\appbas\in\appbasSpace$ be in
  $\shifts$-Popov form with column degree $\minDegs \in \NN^\nbun$. Then
  $\appbas$ is also in $-\minDegs$-Popov form, and we have
  $\rdeg[-\minDegs]{\appbas} = \unishift$.  In particular, for any matrix
  $\reduced \in \appbasSpace$ which is unimodularly equivalent to $\appbas$ and
  $-\minDegs$-reduced, $\reduced$ has column degree $\minDegs$, and $\appbas =
  \leadingMat[-\minDegs]{\reduced}^{-1} \reduced$.
\end{lemma}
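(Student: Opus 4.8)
The plan is to exploit the two defining equalities of the $\shifts$-Popov form and to relate the $\shifts$-row degree and the $-\minDegs$-row degree of $\appbas$ entry by entry. First I would recall that since $\appbas$ is in $\shifts$-Popov form with column degree $\minDegs = \cdeg{\appbas}$, we have $\leadingMat[\unishift]{\trsp{\appbas}} = \idMat[\nbun]$; equivalently, each diagonal entry $\appbas_{ii}$ has degree exactly $\minDeg_i$ and is monic, while each off-diagonal entry $\appbas_{ij}$ ($i\neq j$) has degree strictly less than $\minDeg_j$. Hence for every row index $i$, the $j$th term $\deg(\appbas_{ij}) - \minDeg_j$ is $0$ when $j=i$ and is $<0$ when $j\neq i$, so $\rdeg[-\minDegs]{\appbas}_i = 0$, i.e. $\rdeg[-\minDegs]{\appbas} = \unishift$. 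Moreover the coefficient of degree $0 = r_i + \minDeg_i - \minDeg_i$ in position $(i,i)$ of the $-\minDegs$-leading matrix is the leading coefficient of $\appbas_{ii}$, namely $1$, and all other entries of row $i$ of $\leadingMat[-\minDegs]{\appbas}$ are $0$; thus $\leadingMat[-\minDegs]{\appbas} = \idMat[\nbun]$. In particular $\leadingMat[-\minDegs]{\appbas}$ is invertible (so $\appbas$ is $-\minDegs$-reduced) and unit lower triangular, and since $\cdeg{\appbas} = \minDegs$ is unchanged, $\leadingMat[\unishift]{\trsp{\appbas}} = \idMat[\nbun]$ still holds; this gives that $\appbas$ is in $-\minDegs$-Popov form.

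For the last sentence, let $\reduced \in \appbasSpace$ be unimodularly equivalent to $\appbas$ and $-\minDegs$-reduced. I would argue as follows. Since $\reduced = \mat{U}\appbas$ for some unimodular $\mat{U}$, and $\appbas$ is $-\minDegs$-reduced with $\rdeg[-\minDegs]{\appbas} = \unishift$, the predictable-degree property of $-\minDegs$-reduced matrices gives $\rdeg[-\minDegs]{\reduced} = \rdeg[\unishift]{\mat{U}}$ componentwise; but one can run the same argument with the roles of $\reduced$ and $\appbas$ swapped, using that $\reduced$ is $-\minDegs$-reduced, to get $\rdeg[\unishift]{\mat{U}^{-1}} = \rdeg[-\minDegs]{\appbas} - \rdeg[-\minDegs]{\reduced}$ in a suitable sense; combining these forces $\rdeg[-\minDegs]{\reduced} = \unishift$ as well (this is the standard fact that two $-\minDegs$-reduced bases of the same module have the same multiset of $-\minDegs$-row degrees). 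From $\rdeg[-\minDegs]{\reduced} = \unishift$ we read off that $\deg(\reduced_{ij}) \le \minDeg_j$ for all $i,j$, with the column-$j$ bound tight for at least one row; combined with $\reduced$ being a basis of $\modApp$ of the right determinantal degree $\sumVec{\minDegs}$ (cf.\ \cref{lem:degdet_appbasis}), a short counting argument gives $\cdeg{\reduced} = \minDegs$ exactly. Finally, the identity $\appbas = \leadingMat[-\minDegs]{\reduced}^{-1}\reduced$ follows because $\leadingMat[-\minDegs]{\reduced}^{-1}\reduced$ is unimodularly equivalent to $\reduced$, hence to $\appbas$, is $-\minDegs$-reduced, has $-\minDegs$-leading matrix equal to the identity, and has column degree $\minDegs$: this pins it down to the unique $-\minDegs$-Popov basis of $\modApp$, which is $\appbas$.

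The main obstacle is the last step: showing that a $-\minDegs$-reduced basis $\reduced$ equivalent to $\appbas$ is forced to have column degree exactly $\minDegs$ (not merely $\le\minDegs$ componentwise). The subtlety is that $-\minDegs$-reducedness controls the $-\minDegs$-\emph{row} degrees, whereas we need information on the \emph{column} degrees; the bridge is that $\rdeg[-\minDegs]{\reduced} = \unishift$ together with the fixed value $\sumVec{\cdeg{\reduced}} = \deg(\det(\reduced)) = \sumVec{\minDegs}$ (using column reducedness in the $\unishift$-shift, which can be derived from $\reduced$ being unimodularly equivalent to the $\unishift$-column-reduced $\appbas$) leaves no slack, so every column degree must attain its upper bound $\minDeg_j$. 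Once this is in hand, normalizing by the (necessarily invertible, lower triangular, by a weak-Popov-type argument) $-\minDegs$-leading matrix yields $\appbas$. The cleanest route is to cite \citet[Lem.\,15~and~17]{SarSto11} for the degree-preservation facts, as the excerpt already indicates, and to keep the explicit verification of $\leadingMat[-\minDegs]{\appbas} = \idMat[\nbun]$ and $\rdeg[-\minDegs]{\appbas} = \unishift$ as the self-contained core of the argument.
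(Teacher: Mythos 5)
The paper gives no proof of this lemma; it is stated with a citation to \citet[Lem.~4.1]{JeNeScVi16} and the proof is delegated there. So there is no in-paper argument to compare against, and I evaluate your proposal on its own.

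Your argument is correct in substance and is the natural one. The first part is a clean entrywise verification: $\leadingMat[\unishift]{\trsp{\appbas}} = \idMat$ unpacks to $\deg(\appbas_{jj}) = \minDeg_j$ (monic) and $\deg(\appbas_{ij}) < \minDeg_j$ for $i\neq j$, from which $\rdeg[-\minDegs]{\appbas}=\unishift$ and $\leadingMat[-\minDegs]{\appbas}=\idMat$ follow, so $\appbas$ is in $-\minDegs$-Popov form. For the second part, your route $\rdeg[-\minDegs]{\reduced}=\unishift$ (invariance of shifted row degrees among shifted-reduced bases of the same module), hence $\cdeg{\reduced}\le\minDegs$ entrywise, together with $\deg(\det(\reduced)) = \deg(\det(\appbas)) = \sumVec{\minDegs}$ and $\deg(\det(\reduced))\le\sumVec{\cdeg{\reduced}}$, pins $\cdeg{\reduced}=\minDegs$; then $\leadingMat[-\minDegs]{\reduced}^{-1}\reduced$ has $-\minDegs$-leading matrix $\idMat$ (all $-\minDegs$-row degrees equal $0$, so the constant left-multiplication acts on the leading matrix directly) and column degree $\minDegs$, which makes it the $-\minDegs$-Popov basis of the common module, i.e.\ $\appbas$ by uniqueness. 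All of this is sound.

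One remark on exposition rather than correctness: the sentence beginning ``the predictable-degree property of $-\minDegs$-reduced matrices gives $\rdeg[-\minDegs]{\reduced} = \rdeg[\unishift]{\mat{U}}$\dots'' tries to derive the row-degree invariance from two applications of predictable degree, but the resulting chain is not actually conclusive as written (the two displayed identities do not directly force $\rdeg[-\minDegs]{\reduced}=\unishift$ without further argument about the degrees of $\mat{U}$ and $\mat{U}^{-1}$). You then correctly note in parentheses that this is ``the standard fact that two $-\minDegs$-reduced bases of the same module have the same multiset of $-\minDegs$-row degrees''; it would be cleaner to simply cite that fact (or the determinant-degree argument $\sumVec{\rdeg[-\minDegs]{\reduced}} = \deg(\det(\reduced)) + \sumVec{\minDegs} - \sumVec{\minDegs}$ \dots which you essentially already use) and drop the predictable-degree detour. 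Also note that $\minDeg_i$ should be $\minDeg_j$ in the display ``coefficient of degree $0 = r_i + \minDeg_i - \minDeg_i$'': the $(i,j)$ entry of the $-\minDegs$-leading matrix looks at the coefficient of degree $r_i - (-\minDeg_j) = \minDeg_j$ in $\appbas_{ij}$.
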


Let $\minDegs$ be the $\shifts$-minimal degree of $\modApp$. This result states
that, up to a \emph{constant} transformation, the $\shifts$-Popov basis of
$\modApp$ is equal to any of its $-\minDegs$-minimal bases $\reduced$.
Furthermore, $\cdeg{\reduced} = \minDegs$ implies that $\reduced$ has average
column degree $\sumVec{\minDegs}/\nbun \le \vsdim/\nbun$. We have no such
control on the column degree of $\shifts$-minimal bases when $\shifts$ is not
linked to $\minDegs$, even under assumptions on the shift such as $\hypsmax$,
$\hypsmin$, or $\hypsbal$.

\subsection{Recursive computation of approximant bases}
\label{subsec:recursiveness}

Here, we state the correctness of the approach which consists in computing a
first basis from the input, then a residual instance, then a second basis from
the residual, and finally combining both bases by multiplication to obtain the
output basis. This scheme is followed for example by the iterative algorithms
in \citep{BarBul91,BecLab00} and by the divide and conquer algorithms in
\citep{BecLab94,GiJeVi03}.

In the next lemma, the first and second items focus on minimal bases and extend
\citep[Sec.\,5.1]{BecLab97}; the third item gives a similar result for
ordered weak Popov bases. The fourth item, from \citep[Sec.\,3]{JeNeScVi16},
shows how to retrieve the $\shifts$-minimal degree from two bases in normal
form without computing their product.

\begin{lemma}
  \label{lem:recursiveness}
  Let $\module \subseteq \module_1$ be two $\polRing$-submodules of
  $\polRing^\rdim$ of rank $\rdim$, and let $\popov_1 \in
  \polMatSpace[\rdim]$ be a basis of $\module_1$. Let further $\shifts \in
  \shiftSpace$ and $\shifts[t] = \rdeg[\shifts]{\popov_1}$. Then,  
  \begin{enumerate}[(i)]
    \item The rank of the module $\module_2 = \{ \rowgrk{\lambda} \in
      \polMatSpace[1][\rdim] \mid \rowgrk{\lambda} \popov_1 \in \module \}$
      is $\rdim$, and for any basis $\popov_2 \in \polMatSpace[\rdim]$ of
      $\module_2$, the product $\popov_2 \popov_1$ is a basis of
      $\module$.
    \item If $\popov_1$ is $\shifts$-reduced and $\popov_2$ is
      $\shifts[t]$-reduced,
      then $\popov_2 \popov_1$ is $\shifts$-reduced.
    \item If $\popov_1$ is in $\shifts$-ordered weak Popov form and
      $\popov_2$ is in $\shifts[t]$-ordered weak Popov form, then
      $\popov_2 \popov_1$ is in $\shifts$-ordered weak Popov form.
    \item If $\minDegs_1$ is the $\shifts$-minimal degree of $\module_1$ and
      $\minDegs_2$ is the $\shifts[t]$-minimal degree of $\module_2$, then the
      $\shifts$-minimal degree of $\module$ is $\minDegs_1 + \minDegs_2$.
  \end{enumerate}
\end{lemma}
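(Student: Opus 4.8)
The plan is to settle (i) by elementary $\polRing$-module arguments, then to derive (ii)--(iv) from the \emph{predictable degree property} of $\shifts$-reduced matrices, used in the form that also tracks leading matrices through the product. For (i): right multiplication by $\popov_1$, i.e.\ the $\polRing$-linear map $\Phi\colon\rowgrk{\lambda}\mapsto\rowgrk{\lambda}\popov_1$ on $\polMatSpace[1][\rdim]$, is injective (as $\popov_1$ is nonsingular), and its image is the row space of $\popov_1$, which equals $\module_1$. Since $\module\subseteq\module_1$, we have $\Phi^{-1}(\module)=\Phi^{-1}(\module\cap\module_1)=\module_2$, so $\Phi$ restricts to an isomorphism of $\polRing$-modules $\module_2\to\module$; in particular $\module_2$ has rank $\rdim$. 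If $\popov_2\in\polMatSpace[\rdim]$ is any basis of $\module_2$, its rows form a $\polRing$-basis of $\module_2$, hence their images --- the rows of $\popov_2\popov_1$ --- form a $\polRing$-basis of $\module$; moreover $\popov_2$ is nonsingular (square with $\polRing$-linearly independent rows), so $\popov_2\popov_1$ is nonsingular, hence a basis of $\module$ in the matrix sense.

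For (ii) and (iii) the tool is: since $\popov_1$ is $\shifts$-reduced with $\rdeg[\shifts]{\popov_1}=\shifts[t]$, every polynomial matrix $\mat{Q}$ with $\rdim$ columns and no zero row satisfies $\rdeg[\shifts]{\mat{Q}\popov_1}=\rdeg[\shifts[t]]{\mat{Q}}$ and, row by row, $\leadingMat[\shifts]{\mat{Q}\popov_1}=\leadingMat[\shifts[t]]{\mat{Q}}\,\leadingMat[\shifts]{\popov_1}$ (the predictable degree property; see e.g.\ \citep[Sec.\,6.3.2]{Kailath80}). Indeed, for a single row $\row{q}$ a short computation shows the $\shifts$-leading coefficient vector of $\row{q}\popov_1$ to be $\leadingMat[\shifts[t]]{\row{q}}\,\leadingMat[\shifts]{\popov_1}$, which is nonzero because $\leadingMat[\shifts]{\popov_1}$ is invertible. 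Taking $\mat{Q}=\popov_2$ (nonsingular by (i), hence with no zero row): if $\popov_2$ is $\shifts[t]$-reduced then $\leadingMat[\shifts[t]]{\popov_2}$ and $\leadingMat[\shifts]{\popov_1}$ are invertible, so $\leadingMat[\shifts]{\popov_2\popov_1}$ is invertible and $\popov_2\popov_1$ is $\shifts$-reduced, which is (ii); if in addition both these leading matrices are lower triangular, so is their product, which gives (iii).

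For (iv) --- where, as signalled in the text preceding the lemma (``two bases in normal form''), $\popov_1$ is taken in $\shifts$-ordered weak Popov form and $\popov_2$ in $\shifts[t]$-ordered weak Popov form --- since $\leadingMat[\shifts]{\popov_1}$ is invertible and lower triangular its diagonal entries are nonzero, so the $\shifts$-row degree $t_i$ of row $i$ of $\popov_1$ is attained at the diagonal entry: $t_i=\deg((\popov_1)_{i,i})+\shift{i}$. Thus the $\shifts$-pivot degree of $\popov_1$ is $\shifts[t]-\shifts$, and since $\popov_1$ is a $\shifts$-weak Popov basis of $\module_1$ this reads $\minDegs_1=\shifts[t]-\shifts$, i.e.\ $\shifts[t]=\minDegs_1+\shifts$; likewise $\rdeg[\shifts[t]]{\popov_2}=\minDegs_2+\shifts[t]$. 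By (i) and (iii), $\popov_2\popov_1$ is a basis of $\module$ in $\shifts$-ordered weak Popov form, so the $\shifts$-minimal degree of $\module$ equals its $\shifts$-pivot degree $\rdeg[\shifts]{\popov_2\popov_1}-\shifts$; and by the predictable degree property $\rdeg[\shifts]{\popov_2\popov_1}=\rdeg[\shifts[t]]{\popov_2}=\minDegs_2+\shifts[t]=\minDegs_1+\minDegs_2+\shifts$. Hence the $\shifts$-minimal degree of $\module$ is $\minDegs_1+\minDegs_2$; note the argument never forms the product $\popov_2\popov_1$.

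The only non-bookkeeping ingredient is the leading-matrix form of the predictable degree property, and especially the identity $\leadingMat[\shifts]{\mat{Q}\popov_1}=\leadingMat[\shifts[t]]{\mat{Q}}\,\leadingMat[\shifts]{\popov_1}$: this single fact carries $\shifts$-reducedness and lower triangularity through the product and makes the minimal degrees add up, so establishing it carefully is the main point. The rest --- the module theory in (i) and the degree arithmetic in (iv) --- is routine, the one subtlety being that (iv) really uses that $\popov_1$ (and $\popov_2$) are in \emph{ordered} weak Popov form, not merely reduced, so that each row degree splits as pivot degree plus shift.
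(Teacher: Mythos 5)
Your proof is correct, and for items (ii)--(iv) it follows essentially the same route as the paper: establish the leading-matrix factorization $\leadingMat[\shifts]{\popov_2\popov_1}=\leadingMat[{\shifts[t]}]{\popov_2}\,\leadingMat[\shifts]{\popov_1}$ via the predictable degree property, then read off reducedness, lower triangularity, and (after fixing $\popov_1,\popov_2$ in ordered weak Popov form) additivity of pivot degrees from $\rdeg[\shifts]{\popov_2\popov_1}=\rdeg[{\shifts[t]}]{\popov_2}=\shifts+\minDegs_1+\minDegs_2$. Where you diverge is item (i). The paper introduces the adjugate $\mat{A}$ of $\popov_1$, observes $\module\mat{A}\subseteq\module_2$, and uses a general fact about submodules of free modules over a PID (cited from Dummit--Foote) to bound the rank of $\module_2$ from below. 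You instead work directly with the right-multiplication map $\Phi\colon\rowgrk{\lambda}\mapsto\rowgrk{\lambda}\popov_1$, noting it is an injection with image exactly $\module_1$, so $\Phi$ restricts to an isomorphism $\module_2\to\module$; this gives $\operatorname{rank}\module_2=\operatorname{rank}\module=\rdim$ and, at the same time, the basis correspondence for free. Your argument is the cleaner one: it avoids the adjugate, avoids the external citation, and makes the "for any basis $\popov_2$, the product $\popov_2\popov_1$ is a basis of $\module$" part an immediate consequence rather than a separate verification. Both approaches are sound; the paper's adjugate trick is a standard idiom that generalizes more readily to situations where the inverse image is less transparent, but in this setting the isomorphism argument buys you economy.

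One small caveat, which affects you and the paper equally: as stated, item (iv) has $\shifts[t]$ (and hence $\module_2$ and $\minDegs_2$) depending on the given basis $\popov_1$, while the proof in both cases silently replaces $\popov_1$ by a normalized basis of $\module_1$. This is harmless in the way the lemma is actually invoked in the paper (where $\popov_1$ is always taken in $\shifts$-ordered weak Popov form), and you flag the implicit normalization explicitly, which is a good instinct, but it is worth being aware that the argument really does prove (iv) only under that reading of the hypotheses.
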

\begin{proof}
  $(i)$ Let $\mat{A} \in \polMatSpace[\rdim]$ denote the adjugate of
  $\popov_1$. Then, we have $\mat{A} \popov_1 = \det(\popov_1) \idMat[\rdim]$.
  Thus, $\row{p} \mat{A} \popov_1 = \det(\popov_1) \row{p} \in \module$ for all
  $\row{p} \in \module$, and therefore $\module \mat{A} \subseteq \module_2$.
  Now, the nonsingularity of $\mat{A}$ ensures that $\module \mat{A}$ has rank
  $\rdim$; from \citep[Sec.\,12.1, Thm.\,4]{DumFoo04}, this implies that
  $\module_2$ has rank $\rdim$ as well.  The matrix $\popov_2 \popov_1$ is
  nonsingular since $\det(\popov_2\popov_1)\neq0$. Now let $\row{p} \in
  \module$; we want to prove that $\row{p}$ is a $\polRing$-linear combination
  of the rows of $\popov_2 \popov_1$. First, $\row{p} \in \module_1$, so there
  exists $\rowgrk{\lambda} \in \polMatSpace[1][\rdim]$ such that $\row{p} =
  \rowgrk{\lambda} \popov_1$. But then $\rowgrk{\lambda} \in \module_2$, and
  thus there exists $\rowgrk{\mu} \in \polMatSpace[1][\rdim]$ such that
  $\rowgrk{\lambda} = \rowgrk{\mu} \popov_2$. This yields the combination
  $\row{p} = \rowgrk{\mu} \popov_2 \popov_1$.

  $(ii)$ Let $\shifts[d] = \rdeg[{\shifts[t]}]{\popov_2}$; we have $\shifts[d]
  = \rdeg[\shifts]{\popov_2 \popov_1}$ by the predictable degree property.
  Using $\xDiag{-\shifts[d]} \popov_2 \popov_1
  \xDiag{\shifts} = \xDiag{-\shifts[d]} \popov_2 \xDiag{\shifts[t]}
  \xDiag{-\shifts[t]} \popov_1 \xDiag{\shifts}$, we obtain that
  $\leadingMat[\shifts]{\popov_2 \popov_1} =
  \leadingMat[{\shifts[t]}]{\popov_2} \leadingMat[\shifts]{\popov_1}$.  By
  assumption, $\leadingMat[{\shifts[t]}]{\popov_2}$ and
  $\leadingMat[\shifts]{\popov_1}$ are invertible, hence
  $\leadingMat[\shifts]{ \popov_2 \popov_1 }$ is invertible as well; thus
  $\popov_2 \popov_1$ is $\shifts$-reduced.

  $(iii)$ The matrix $\leadingMat[\shifts]{ \popov_2 \popov_1 } =
  \leadingMat[{\shifts[t]}]{\popov_2} \leadingMat[\shifts]{\popov_1}$ is lower
  triangular and invertible.

  $(iv)$ Let $\popov_1$ be the $\shifts$-Popov basis of $\module_1$ and
  $\popov_2$ be the $\shifts[t]$-Popov basis of $\module_2$. Then, by the items
  $(i)$ and $(iii)$ above, $\popov_2 \popov_1$ is a $\shifts$-ordered weak
  Popov basis of $\module$. Thus, from \citep[Lem.\,3.3]{JeNeScVi16}, it is
  enough to show that the $\shifts$-pivot degree of $\popov_2 \popov_1$ is
  $\minDegs_1 + \minDegs_2$, that is, $\rdeg[\shifts]{\popov_2 \popov_1} =
  \shifts + \minDegs_1 + \minDegs_2$. This follows from the predictable degree
  property, since $\rdeg[\shifts]{\popov_2 \popov_1} =
  \rdeg[{\shifts[t]}]{\popov_2} = \shifts[t] + \minDegs_2 =
  \rdeg[\shifts]{\popov_1} + \minDegs_2 = \shifts + \minDegs_1 + \minDegs_2$.
\end{proof}

Now, consider the case where the basis $\popov_1$ of $\module_1$ already has
some rows in $\module$: we show that we may directly store these rows in the
basis of \(\module\) being computed, and that $\popov_2$ can be obtained by
focusing only on the rows of $\popov_1$ not in $\module$. In the next lemma, we
use standard notation for submatrices and subtuples: \(\matrows{\popov}{I}\),
\(\matcols{\popov}{J}\), \(\matsub{\popov}{I}{J}\), \(\subTuple{\shifts}{I}\),
where \(I\) and \(J\) are subsets of \(\{1,\ldots,\rdim\}\).

\begin{lemma}
  \label{lem:ok_rows}
  (Using notation from \cref{lem:recursiveness}.) Let \(I\) be a subset of
  $\{1,\ldots,\rdim\}$ of cardinality \(k\in\{0,\ldots,\rdim\}\) and such that
  all rows of \(\popov_1\) with index in \(I\) are in \(\module\). Let also
  \(I^c = \{1,\ldots,\rdim\}\setminus I\) be the complement of \(I\). Then, the
  module $\module_3 = \{\rowgrk{\mu} \in \polMatSpace[1][(\rdim-k)] \mid
  \rowgrk{\mu} \matrows{(\popov_1)}{I^c} \in \module\}$ has rank $\rdim-k$, and
  for any basis $\popov_3$ of $\module_3$, the matrix $\popov_2 \in
  \appbasSpace$ defined by its submatrices
  \[
    \begin{bmatrix}
      \matsub{(\popov_2)}{I}{I} & \matsub{(\popov_2)}{I}{I^c} \\
      \matsub{(\popov_2)}{I^c}{I} & \matsub{(\popov_2)}{I^c}{I^c} 
    \end{bmatrix}
    =
    \begin{bmatrix}
      \idMat[k] & \matz \\
      \matz & \popov_3
    \end{bmatrix}
  \]
  is a basis of $\module_2$. Furthermore, if $\popov_1$ and $\popov_3$ are in
  $\shifts$- and $\subTuple{\shifts[t]}{I^c}$-ordered weak Popov form, then
  $\popov_2\popov_1$ is an $\shifts$-ordered weak Popov basis of $\module$.
\end{lemma}
\begin{proof}
  Let $\rowgrk{\lambda} \in \polMatSpace[1][\rdim]$, and consider
  \(\rowgrk{\mu} = \matcols{\rowgrk{\lambda}}{I^c} \in
  \polMatSpace[1][(\rdim-k)]\). Then, we have the equivalence $\rowgrk{\lambda}
  \in \module_2 \Leftrightarrow \rowgrk{\mu} \matrows{(\popov_1)}{I^c} \in
  \module$ since the rows of $\matrows{(\popov_1)}{I}$ are already in
  \(\module\).  Hence $\rowgrk{\lambda} \in \module_2 \Leftrightarrow
  \rowgrk{\mu} \in \module_3$, by definition of \(\module_3\). This shows that
  $\module_3$ has rank $\rdim-k$, and since $\popov_3$ is a basis of
  \(\module_3\), we also deduce that $\popov_2$ is a basis of \(\module_2\).

  It is easily verified that if $\popov_3$ is in
  $\subTuple{\shifts[t]}{I^c}$-ordered weak Popov form, then $\popov_2$ is in
  $\shifts[t]$-ordered weak Popov form.  Hence the conclusion, by the first and
  third items of \cref{lem:recursiveness}.
\end{proof}
We remark that the left-multiplication by \(\popov_2\) amounts to simply
copying the submatrix \(\matrows{(\popov_1)}{I}\), and left-multiplying the
submatrix \(\matrows{(\popov_1)}{I^c}\) by \(\popov_3\).

\subsection{Computing residuals}
\label{subsec:residuals}

Approximant basis algorithms commonly make use of \emph{residuals}, which are
truncated matrix products $\appbas \sys \bmod \mods$. Here, we discuss their
efficient computation in two cases: when we control $\deg(\appbas)$, and when
we control the average column degree of $\appbas$.

\begin{lemma}
  \label{lem:compute_residual}
  Let $\appbas \in \appbasSpace$ and $\sys \in \sysSpace$. Then,
  \begin{itemize}
    \item for $\order,\vsdim\in\NN$ such that $\deg(\appbas)\le\order$ and
      $\sumVec{\cdeg{\sys}}\le\vsdim$, one can compute $\appbas \sys$ using
      $\bigOPar{ \left\lceil \frac{\nbeq+\vsdim/(\order+1)}{\nbun} \right\rceil
      \polmatmultime{\nbun,\order}}$ operations in $\field$ if \(\order>0\) and
      $\bigOPar{ \left\lceil \frac{\nbeq+\vsdim}{\nbun} \right\rceil
      \nbun^\expmatmul}$ operations if \(\order=0\);
    \item for $\orders\in\orderSpace$ and $\vsdim \ge \nbun$ such that
      $\sumVec{\orders}\le\vsdim$ and $\sumVec{\cdeg{\appbas}} \le \vsdim$, one
      can compute $\appbas \sys \bmod \mods$ using
      $\bigO{\polmatmultime{\nbun,\vsdim/\nbun}}$ operations in $\field$,
      assuming $\nbeq\le\nbun$.
  \end{itemize}
\end{lemma}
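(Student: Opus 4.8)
For the first item the plan is to combine a column partial linearization of $\sys$ (in the sense of Storjohann) with the hypothesis $\deg(\appbas)\le\order$. First I would slice each column of $\sys$ into chunks of degree at most $\order$: writing $\matcol{\sys}{j}=\sum_{0\le i<e_j}\var^{i(\order+1)}\matcol{\sys}{j}^{(i)}$ with $\deg(\matcol{\sys}{j}^{(i)})\le\order$ and $e_j=\lceil(\cdeg{\sys}_j+1)/(\order+1)\rceil$, I collect all chunks as the columns of $\expand{\sys}\in\polMatSpace[\nbun][\bar\nbeq]$, a matrix of degree at most $\order$ with $\bar\nbeq=\sum_j e_j$, and I record that $\sys=\expand{\sys}\expandMat$ where $\expandMat\in\polMatSpace[\bar\nbeq][\nbeq]$ carries a single monomial on each row. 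Then $\lceil a/b\rceil\le a/b+1$ together with $\sumVec{\cdeg{\sys}}\le\vsdim$ gives $\bar\nbeq\le\vsdim/(\order+1)+2\nbeq$. Since $\appbas\sys=(\appbas\expand{\sys})\expandMat$, it remains to compute $\appbas\expand{\sys}$ by splitting the columns of $\expand{\sys}$ into $\lceil\bar\nbeq/\nbun\rceil$ blocks of at most $\nbun$ columns, padding each block to an $\nbun\times\nbun$ matrix of degree at most $\order$, and left-multiplying by $\appbas$: this is $\lceil\bar\nbeq/\nbun\rceil\in\bigO{\lceil(\nbeq+\vsdim/(\order+1))/\nbun\rceil}$ products, each costing $\polmatmultime{\nbun,\order}$. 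The subsequent right-multiplication by $\expandMat$ merely recombines the chunks of each column with monomial shifts, costing $\bigO{\nbun\order\bar\nbeq}$ additions; as $\polmatmultime{\nbun,\order}\in\Omega(\nbun^2\order)$, this — and the $\bigO{\nbun\vsdim}$ cost of forming $\expand{\sys}$ and $\expandMat$ — stays within the stated bound.

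For the second item I would reduce to the first. We may assume $\cdeg{\sys}<\orders$ (else first replace $\sys$ by $\sys\bmod\mods$), hence $\sumVec{\cdeg{\sys}}<\sumVec{\orders}\le\vsdim$. Set $\order'=\lceil\vsdim/\nbun\rceil$ and apply the same column linearization, now to $\appbas$: from $\sumVec{\cdeg{\appbas}}\le\vsdim$ one gets $\appbas=\expand{\appbas}\expandMat$ with $\expand{\appbas}\in\polMatSpace[\nbun][\bar\nbun]$ of degree at most $\order'$, $\bar\nbun\le 2\nbun$ (using $\order'+1\ge\vsdim/\nbun$; after padding one may take $\nbun\le\bar\nbun\le 2\nbun$), and $\expandMat\in\polMatSpace[\bar\nbun][\nbun]$ carrying one monomial per row. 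Since $\mods$ is diagonal, reduction modulo $\mods$ is columnwise and $\expand{\appbas}(\mat{Q}\mods)\equiv\matz\bmod\mods$ for every polynomial matrix $\mat{Q}$; hence $\appbas\sys\bmod\mods=(\expand{\appbas}(\expandMat\sys))\bmod\mods=(\expand{\appbas}\mat{R})\bmod\mods$, where $\mat{R}=(\expandMat\sys)\bmod\mods\in\polMatSpace[\bar\nbun][\nbeq]$ satisfies $\cdeg{\mat{R}}<\orders$ and is obtained from $\sys$ by row monomial shifts and truncations in $\bigO{\nbun\vsdim}$ operations. Padding $\expand{\appbas}$ with zero rows to a $\bar\nbun\times\bar\nbun$ matrix, I compute $\expand{\appbas}\mat{R}$ via the first item with dimension $\bar\nbun$, column count $\nbeq$, degree $\order'$, size bound $\vsdim$: its multiplicative factor $\lceil(\nbeq+\vsdim/(\order'+1))/\bar\nbun\rceil$ is at most $2$, since $\nbeq\le\nbun\le\bar\nbun$ and $\vsdim/(\order'+1)<\nbun\le\bar\nbun$. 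This costs $\bigO{\polmatmultime{\bar\nbun,\order'}}\subseteq\bigO{\polmatmultime{\nbun,\vsdim/\nbun}}$ by a standard splitting argument using $\bar\nbun\le 2\nbun$, $\order'\le 2\vsdim/\nbun$ (here $\vsdim\ge\nbun$ enters), and $\polmatmultime{2\nbun,2d}\in\bigO{\polmatmultime{\nbun,d}}$ for all $d>0$; the final reduction of $\expand{\appbas}\mat{R}$ modulo $\mods$ costs $\bigO{\nbun\vsdim}$, again absorbed.

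The substantive part should be bookkeeping rather than conceptual: checking $\bar\nbeq\in\bigO{\nbeq+\vsdim/(\order+1)}$ and $\bar\nbun\in\bigO{\nbun}$ — which is where $\sumVec{\cdeg{\sys}}\le\vsdim$, $\sumVec{\cdeg{\appbas}}\le\vsdim$, and $\nbeq\le\nbun$ are used — verifying that the monomial recombinations and the linearization set-up are genuinely dominated (for which $\polmatmultime{\nbun,\order}\in\Omega(\nbun^2\order)$ is enough), and justifying in the second item that one may reduce modulo $\mods$ both before and after the product, which is immediate from the diagonal shape of $\mods$.
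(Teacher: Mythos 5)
Your proposal is correct and follows essentially the same route as the paper: column partial linearization of $\sys$ (with the wide product done as $\lceil\bar\nbeq/\nbun\rceil$ square products) for the first item, and for the second item column partial linearization of $\appbas$ at degree $\lceil\vsdim/\nbun\rceil$, computing $\expandMat\sys\bmod\mods$ for free and invoking the first item with the same ``factor at most $2$'' bound. The only differences are cosmetic bookkeeping (your constants $2\nbeq$ and $\bar\nbun\le2\nbun$ versus the paper's), plus the fact that you spell out details the paper delegates to \citep{JeNeScVi16}.
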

\begin{proof}
  For the first item, we use column partial linearization on $\sys$ to
  transform it into a matrix $\expand{\sys}$ with $\nbun$ rows,
  $\nbeq+\vsdim/(\order+1)$ columns, and degree at most $\order$. Then, we compute
  $\appbas \expand{\sys}$, and the columns of this product are compressed back
  to obtain $\appbas \sys$. More details can be found for example in the
  discussion preceding \citep[Prop.\,4.1]{JeNeScVi17}.

  For the second item, using column partial linearization on $\appbas$ we
  obtain $\expand{\appbas} \in \polMatSpace[\nbun][\expand{\nbun}]$ such that
  $\nbun\le\expand{\nbun} \le 2\nbun$, $\deg(\expand{\appbas}) \le
  \lceil\vsdim/\nbun\rceil$, and $\appbas = \expand{\appbas} \expandMat$ where
  the form of $\expandMat \in \polMatSpace[\expand{\nbun}][\nbun]$ is as in
  \cref{eqn:expandMat}. Then $\appbas \sys \bmod \mods =
  \expand{\appbas}\:\expand{\sys} \bmod \mods$, where $\expand{\sys} =
  \expandMat \sys \bmod \mods$ is obtained for free since each row of
  $\expandMat$ is of the form $[0 \cdots 0 \; \var^\alpha \; 0 \cdots 0]$
  for some $\alpha\in\NN$. Now, up to augmenting $\expand{\appbas}$ with
  $\expand{\nbun}-\nbun$ zero rows, we can apply the first item to compute
  $\expand{\appbas}\: \expand{\sys}$. Here we take
  $\order=\lceil\vsdim/\nbun\rceil$, implying $\vsdim/(\order+1) \le \nbun$ and
  thus $(\nbeq+\vsdim/(\order+1)) / \expand{\nbun} \le 2$, since
  $\expand{\nbun}\ge\nbun\ge\nbeq$. Hence, computing $\expand{\appbas}\:
  \expand{\sys}$ costs
  $\bigO{\polmatmultime{\expand{\nbun},\lceil\vsdim/\nbun\rceil}}$ operations,
  which is within the claimed bound since $\expand{\nbun}\le2\nbun$ and $\vsdim
  \ge \nbun$.
\end{proof}

\subsection{Computing matrix products via approximant bases}
\label{subsec:reductions}

Consider a constant matrix $\sys \in \matSpace[\nbun][\nbeq]$ and $\orders =
(1,\ldots,1)$; note that $\vsdim=\nbeq$. Then, as detailed in
\cref{sec:balanced_order}, finding the $\shifts$-Popov basis of $\modApp$ is
equivalent to computing a left nullspace basis in reduced row echelon form for
the matrix $\sys$ with rows permuted according to the entries of $\shifts$. The
multiplication of constant matrices can be embedded in such nullspace
computations. More generally, any algorithm for \cref{pbm:pab} can be used to
multiply polynomial matrices, following ideas from \citep{SarSto11}.

\begin{lemma}
  \label{lem:reduction_polmatmul}
  Let $\mathcal{P}$ be an algorithm which solves \cref{pbm:pab}. Then, for
  $\mat{A},\mat{B} \in \polMatSpace[\nbun]$ of degree at most $\order$, the
  product $\mat{A}\mat{B}$ can be read off from the output of
  $\mathcal{P}(\orders,\sys,\unishift)$, where
  \[
    \orders=(6\order+4,\ldots,6\order+4) \;\;\;\text{and}\;\;\;
    \sys =
    \begin{bmatrix}
      \var^{2\order+1}\idMat[\nbun] & \mat{B} \\
      -\var^{2\order+1} \mat{A} & \var^{2\order+1}\idMat[\nbun] \\
      -\idMat[\nbun] & \matz \\
      \matz & -\idMat[\nbun]
    \end{bmatrix}
    \in \polMatSpace[4\nbun][2\nbun].
  \]
\end{lemma}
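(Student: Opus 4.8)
The plan is to verify that the $\unishift$-Popov basis $\appbas$ of $\modApp$, with $\sys$ and $\orders$ as specified, encodes $\mat{A}\mat{B}$ in a predictable block. First I would describe the module $\modApp$ explicitly: a row vector $\app = [\row{u} \;\; \row{v} \;\; \row{w} \;\; \row{x}] \in \polMatSpace[1][4\nbun]$ (partitioned into four blocks of size $\nbun$) lies in $\modApp$ if and only if
\[
  \var^{2\order+1}\row{u} - \var^{2\order+1}\row{v}\mat{A} - \row{w} \equiv 0,
  \qquad
  \row{u}\mat{B} + \var^{2\order+1}\row{v} - \row{x} \equiv 0,
\]
both congruences taken modulo $\var^{6\order+4}$. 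The key observation is that $\row{w}$ and $\row{x}$ are \emph{determined} by $\row{u}$ and $\row{v}$ via these relations (their truncations mod $\var^{6\order+4}$ are forced), so the module is free of rank $2\nbun$ over $\polRing$ and a natural set of generators is obtained by choosing $[\row{u} \;\; \row{v}]$ to range over the rows of $\var^{c}\idMat[2\nbun]$ for a suitable exponent $c$, or more cleverly by choosing the low-degree generators that the Popov form selects.

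Next I would identify which $[\row{u}\;\;\row{v}]$ give rise to \emph{small-degree} approximants, since the $\unishift$-Popov basis picks out the rank-$2\nbun$ sublattice minimizing degrees. The two cheap families are: taking $\row{u} = \var\,\mat{e}_i$ (so that $\var^{2\order+1}\row{u}$ still vanishes mod nothing problematic but $\row{u}\mat{B}$ has degree $\le \order+1 < 2\order+1$), and taking $\row{v}=\mat{e}_i$; the point of the shifts $2\order+1$ and the order $6\order+4$ is exactly that $\deg(\mat{A}\mat{B}) \le 2\order$, so the product $\mat{A}\mat{B}$ appears \emph{untruncated} inside $\var^{2\order+1}\row{v}\mat{A}\mat{B} \bmod \var^{6\order+4}$ — the exponent $2\order+1$ shifts it clear of the low block and $6\order+4 > 2\order+1 + 2\order$ keeps it below the truncation order. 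Concretely, I expect to show that the row of $\appbas$ with pivot in the second block position $i$ has the form $[\star \;\; \mat{e}_i \var^{?} + \cdots \;\; \star \;\; \star]$ whose third or fourth block contains $\var^{2\order+1}(\mat{A}\mat{B})_{i,\any}$ or a shift thereof, so that $\mat{A}\mat{B}$ can simply be read off coefficient-by-coefficient from a fixed sub-block of $\appbas$. The role of the bottom two block-rows $-\idMat[\nbun]$ of $\sys$ is to let $\row{w}$ and $\row{x}$ act as ``free'' slots recording these truncated products; using $\unishift$ (the uniform shift) guarantees the Popov form normalizes degrees so that the pivots land in the first two blocks.

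The main obstacle I anticipate is the bookkeeping to pin down \emph{exactly} where in $\appbas$ the entries of $\mat{A}\mat{B}$ sit and to confirm no unwanted cancellation or truncation occurs: one must check that the chosen $2\nbun$ low-degree generators really do form a basis (not merely a generating set) of $\modApp$, i.e. that the lattice they span is saturated, and that the Popov normalization does not mix blocks in a way that obscures the product. I would handle this by a dimension/degree count: show the $\field$-vector space $\appSpace/\modApp$ has dimension $\sumVec{\orders} = 2\nbun(6\order+4)$ forced by the first and last block-rows, compare with the sum of column degrees of the proposed basis (via \cref{lem:degdet_appbasis}), and conclude the proposed generators are a basis by equality of degrees of determinants. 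Once the basis is identified, reading off $\mat{A}\mat{B}$ is immediate. This mirrors the reduction technique of \citep{SarSto11} cited in the statement, so I would also point to that reference for the analogous computation in the constant-matrix case sketched just before the lemma.
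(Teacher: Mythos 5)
Your overall strategy — write the two block congruences, observe that $\row{w}$ and $\row{x}$ are slack variables, and hunt for the low-degree kernel vectors that the Popov form must contain — is in the right spirit, and the paper's own proof is indeed just a citation to \citep[Sec.\,4 and\,6]{SarSto11} supplying the explicit $\unishift$-Popov left kernel basis. But several of your concrete guesses are wrong, and one of them is structural enough to derail the argument if pursued.

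First, $\modApp$ has rank $4\nbun$, not $2\nbun$. You are conflating two different modules: the \emph{exact} left kernel $\{\app \mid \app\sys = 0\}$ does have rank $4\nbun - 2\nbun = 2\nbun$, but $\modApp$ is the set of $\app$ with $\app\sys \equiv 0 \bmod \var^{6\order+4}$, a submodule of $\polRing^{4\nbun}$ containing $\var^{6\order+4}\polRing^{4\nbun}$, hence of rank $4\nbun$. Because you only propose $2\nbun$ generators, the dimension/degree count you outline (comparing $\sumVec{\minDegs}$ against $\vsdim$) cannot close: you are missing the other $2\nbun$ rows of $\appbas$ (the ones whose pivots live in the first two column blocks), and without them you cannot certify that your generators are a subset of a basis.

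Second, the pivots of the kernel rows land in the \emph{third and fourth} column blocks, not the first two. The correct Popov left kernel basis is
\[
  \begin{bmatrix}
    \idMat[\nbun] & \matz & \var^{2\order+1}\idMat[\nbun] & \mat{B} \\
    \mat{A} & \idMat[\nbun] & \matz & \mat{A}\mat{B} + \var^{2\order+1}\idMat[\nbun]
  \end{bmatrix},
\]
which appears as the \emph{last} $2\nbun$ rows of the $\unishift$-Popov basis of $\modApp$. For row $2\nbun+i$ the pivot entry is $\var^{2\order+1}$ at column $2\nbun+i$; for row $3\nbun+i$ it is the degree-$(2\order+1)$ entry of $\mat{A}\mat{B}+\var^{2\order+1}\idMat$ at column $3\nbun+i$. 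So $\mat{A}\mat{B}$ is read off \emph{unshifted} (not as $\var^{2\order+1}\mat{A}\mat{B}$) from the bottom-right $\nbun\times\nbun$ block; your expectation of a $\var^{2\order+1}$ shift and of pivots in ``the second block'' is incorrect. Likewise, the cheap family has $\row{u} = \mat{e}_i$ (constant), not $\row{u} = \var\,\mat{e}_i$. These are not cosmetic: they determine precisely where in $\appbas$ the product is read, which is the entire content of ``can be read off''.

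In short, the paper offloads all of the above to \citep{SarSto11}; if you want a self-contained derivation along your lines, you must (a) work with the full rank-$4\nbun$ module, (b) exhibit both the $2\nbun$ kernel rows above \emph{and} the other $2\nbun$ basis rows (which have $\var$-power pivots in the first two blocks), and (c) verify the triangularity and column-normalization conditions of $\unishift$-Popov form directly from the degree bounds $\deg\mat{A},\deg\mat{B}\le\order$ and $\deg(\mat{A}\mat{B})\le 2\order < 2\order+1$.
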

\begin{proof}
  This follows from the results in \citep[Sec.\,4 and\,6]{SarSto11}, which
  imply that the $\unishift$-Popov left kernel basis of $\sys$ is
  \[
    \begin{bmatrix}
      \idMat[\nbun] & \matz & \var^{2\order+1}\idMat[\nbun] & \mat{B} \\
      \mat{A} & \idMat[\nbun] & \matz & \mat{A} \mat{B} + \var^{2\order+1}\idMat[\nbun]
    \end{bmatrix}
  \]
  and appears as the last $2\nbun$ rows of the $\unishift$-Popov
  basis of $\modApp$.
\end{proof}

\subsection{Stability of ordered weak Popov forms under some permutations}
\label{subsec:stability_owP}

When computing a basis of $\modApp$, it is sometimes useful to permute the rows
of $\sys$, that is, to consider $\modAppCustom{\orders}{\permMat\sys}$ for some
$\nbun\times\nbun$ permutation matrix $\permMat$.  Then, it is easily verified
that an $\shifts$-minimal basis $\appbas$ of
$\modAppCustom{\orders}{\permMat\sys}$ yields an $\shifts\permMat$-minimal
basis $\appbas\permMat$ of $\modApp$. However, the more specific weak Popov
forms are not preserved in this process: if $\appbas$ is in $\shifts$-weak Popov
form, then the column permuted basis $\appbas\permMat$ might for example have
all its $\shifts\permMat$-pivot entries in its last column.  Still, for
specific permutations and when considering a submatrix of $\appbas\permMat$, we
have the following result (we remark that it will only be used in
\cref{subsec:weakly_unbalanced_around_min}).

\begin{lemma} \label{lem:permuted_owpopov}
  Let $1 \le \cdim < \rdim$ and consider a partition $\{1,\ldots,\rdim\} =
  \{i_1,\ldots,i_\cdim\} \cup \{j_1,\ldots,j_{\rdim-\cdim}\}$
  with $(i_k)_k$ and $(j_k)_k$ both strictly increasing. Let further $\permMat
  = (\perm_{i,j})$ be the $\rdim\times\rdim$ permutation matrix such that
	$\perm_{k,i_k} = 1$ for $1\le k\le \cdim$ and $\perm_{k+\cdim,j_k} = 1$ for
	$1 \le k \le \rdim-\cdim$, and let $\shifts = (\shift{j}) \in \shiftSpace$.
	Then,
  \begin{itemize}
		\item if a matrix $\popov \in \polMatSpace[\rdim]$ is in $\shifts$-ordered
			weak Popov form, then the leading principal $\cdim\times\cdim$ submatrix
			of $\permMat \popov \permMat^{-1}$ is in
			$(\shift{i_1},\ldots,\shift{i_\cdim})$-ordered weak Popov form;
    \item for a tuple $\orders \in \NN^{\rdim-\cdim}$ and matrices $\popov \in
      \polMatSpace[\cdim]$ and $\mat{Q} \in
      \polMatSpace[\cdim][(\rdim-\cdim)]$, if the matrix
      \[
        \matt{P} =
        \begin{bmatrix}
          \popov & \mat{Q} \\
          \matz  & \xDiag{\orders}
        \end{bmatrix}
        \in \polMatSpace[\rdim]
      \]
      is in $\shifts$-ordered weak Popov, then $\permMat^{-1} \matt{P}
      \permMat$ is in $\shifts\permMat$-ordered weak Popov form.
  \end{itemize}
\end{lemma}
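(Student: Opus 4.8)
The plan is to describe, at the level of entries and of $\shifts$-leading matrices, how the simultaneous row-and-column permutation induced by $\permMat$ acts, and then to read off both items from the hypotheses. Write $I=\{i_1<\cdots<i_\cdim\}$, $J=\{j_1<\cdots<j_{\rdim-\cdim}\}$, and $\orders=(d_1,\ldots,d_{\rdim-\cdim})$, and let $\tau$ be the permutation of $\{1,\ldots,\rdim\}$ with $\tau(i_k)=k$ and $\tau(j_k)=\cdim+k$. Since $\permMat$ is defined by $\perm_{k,i_k}=1$ and $\perm_{\cdim+k,j_k}=1$, a direct computation gives $(\permMat\popov\permMat^{-1})_{a,b}=\popov_{\tau^{-1}(a),\tau^{-1}(b)}$ and $(\permMat^{-1}\matt{P}\permMat)_{a,b}=\matt{P}_{\tau(a),\tau(b)}$. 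In particular, the leading principal $\cdim\times\cdim$ submatrix of $\permMat\popov\permMat^{-1}$ is the submatrix $\popov_{I,I}$ of $\popov$ formed by the rows and columns indexed by $I$ in increasing order; and $\permMat^{-1}\matt{P}\permMat$ is the matrix whose $(i_k,i_\ell)$ entry is $\popov_{k,\ell}$, whose $(i_k,j_\ell)$ entry is $\mat{Q}_{k,\ell}$, whose $(j_k,j_\ell)$ entry is $\var^{d_k}$ if $k=\ell$ and $0$ otherwise, and whose $(j_k,i_\ell)$ entry is $0$. Finally, from the definition of $\permMat$ one gets $(\shifts\permMat)_{i_k}=\shift{k}$ and $(\shifts\permMat)_{j_k}=\shift{\cdim+k}$, so that $\shifts\permMat$ is exactly the shift obtained by letting the weights follow the columns.

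For the first item, observe that row $k$ of $\popov_{I,I}$ is the restriction of row $i_k$ of $\popov$ to the columns in $I$, with the subshift $\shifts_I=(\shift{i_1},\ldots,\shift{i_\cdim})$. As $\popov$ is in $\shifts$-ordered weak Popov form, its $\shifts$-leading matrix is lower triangular and invertible, so $\leadingMat[\shifts]{\popov}_{i_k,i_k}\neq0$, which means $\deg(\popov_{i_k,i_k})+\shift{i_k}$ equals the $\shifts$-row degree $r_{i_k}$ of row $i_k$ of $\popov$. Since column $i_k$ is the $k$th element of $I$, this shows that the $\shifts_I$-row degree of row $k$ of $\popov_{I,I}$ is again $r_{i_k}$, and hence that $\leadingMat[\shifts_I]{\popov_{I,I}}$ is exactly the submatrix of $\leadingMat[\shifts]{\popov}$ with rows and columns indexed by $I$ in increasing order. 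Extracting a submatrix along a common increasing index set preserves lower triangularity and keeps diagonal entries on the diagonal, so $\leadingMat[\shifts_I]{\popov_{I,I}}$ is lower triangular with nonzero diagonal, hence invertible; and $\popov_{I,I}$ is nonsingular because its $\shifts_I$-leading matrix is invertible. Thus $\popov_{I,I}$ is in $\shifts_I$-ordered weak Popov form.

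For the second item I would argue through the full $\shifts$-leading matrix. Since the lower-left block of $\matt{P}$ is $\matz$ and $\leadingMat[\shifts]{\matt{P}}$ is lower triangular and invertible by hypothesis, $\leadingMat[\shifts]{\matt{P}}$ is in fact block diagonal (its top-right block lies strictly above the diagonal, hence vanishes), with a lower triangular invertible $\cdim\times\cdim$ top block and lower block $\idMat[\rdim-\cdim]$. A simultaneous permutation of the columns of a row together with the corresponding shift entries leaves the multiset of weighted degrees, and in particular the $\shifts$-row degree, of that row unchanged; hence the $\shifts\permMat$-row degree of row $i_k$ of $\permMat^{-1}\matt{P}\permMat$ equals the $\shifts$-row degree $r_k$ of row $k$ of $\matt{P}$, and that of row $j_k$ equals $d_k+\shift{\cdim+k}$. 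Combining this with the entrywise description of $\permMat^{-1}\matt{P}\permMat$ yields $\leadingMat[\shifts\permMat]{\permMat^{-1}\matt{P}\permMat}=\permMat^{-1}\leadingMat[\shifts]{\matt{P}}\permMat$, which is invertible (being conjugate to an invertible matrix) and lower triangular: a nonzero entry at $(a,b)$ forces $a$ and $b$ to lie in the same one of the index sets $I$, $J$, since the off-block entries of $\leadingMat[\shifts]{\matt{P}}$ vanish, and then $\tau(a)\ge\tau(b)$ by lower triangularity of the two diagonal blocks, whence $a\ge b$ because $\tau$ is increasing on each of $I$ and $J$. As $\permMat^{-1}\matt{P}\permMat$ is nonsingular (being a conjugate of the nonsingular matrix $\matt{P}$), it is in $\shifts\permMat$-ordered weak Popov form.

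The main thing to be careful about is the permutation bookkeeping — in particular that the conjugation appearing in the first item, $\permMat\,\cdot\,\permMat^{-1}$, is the inverse of the one in the second, $\permMat^{-1}\,\cdot\,\permMat$, and that $\shifts\permMat$ really is the shift along which row degrees and leading matrices transform by $\tau$. Once that is pinned down, each item rests on one elementary observation: for the first, that a submatrix of a lower triangular matrix along a fixed increasing index set is again lower triangular; for the second, that conjugating a block-diagonal lower triangular matrix by a permutation which reorders its two blocks in increasing order yields again a lower triangular matrix.
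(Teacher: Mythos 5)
Your proof is correct, and the bookkeeping for $\permMat$, $\permMat^{-1}$, $\tau$, and $\shifts\permMat$ all checks out. For the first item, your argument coincides with the paper's: both identify the leading principal submatrix of $\permMat\popov\permMat^{-1}$ as $[p_{i_k,i_\ell}]_{k,\ell}$ and observe that its $(\shift{i_1},\ldots,\shift{i_\cdim})$-leading matrix is the submatrix of $\leadingMat[\shifts]{\popov}$ on rows and columns indexed by $I$, which remains lower triangular with nonzero diagonal because $I$ is taken in increasing order. For the second item, however, you take a genuinely different route. The paper argues entrywise: it first treats rows $j_k$ (which are monomial rows $\var^{d_k}$ on the diagonal) and then derives explicit inequalities of the form \cref{eqn:lmat_low_tri_bis} for rows $i_k$, by unwinding what the $\shifts$-ordered weak Popov form of $\matt{P}$ says about the degrees of $p_{k\ell}$ and $q_{k\ell}$. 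Your proof is more structural: you first note that $\leadingMat[\shifts]{\matt{P}}$ is block diagonal (the lower-left block is zero by assumption and the upper-right block vanishes because it lies strictly above the diagonal of a lower triangular matrix), with the bottom-right block equal to $\idMat[\rdim-\cdim]$; then you prove the conjugation rule $\leadingMat[\shifts\permMat]{\permMat^{-1}\matt{P}\permMat}=\permMat^{-1}\leadingMat[\shifts]{\matt{P}}\permMat$ by tracking how $\shifts$-row degrees transport under simultaneous column-and-shift permutation; and finally you reduce everything to the combinatorial fact that $\tau$ is strictly increasing on each of $I$ and $J$, so conjugating a block-diagonal lower triangular matrix by $\permMat$ preserves lower triangularity. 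This buys you a proof that is arguably cleaner and avoids the casework on degree comparisons; it also makes explicit the transformation law for $\shifts$-leading matrices under simultaneous row/column/shift permutation, a reusable observation the paper leaves implicit. One minor advantage on your side: the paper's write-up insists on pivots being monic and on the leading matrix being \emph{unit} lower triangular, which is slightly stronger than the definition of $\shifts$-ordered weak Popov form requires; you only claim invertibility, which matches the stated definition.
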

\begin{proof}
	Concerning the first item, let $\tuple{t} =
	(\shift{i_1},\ldots,\shift{i_\cdim})$ and write $[p_{i,j}]$ for the entries
	of $\popov$. Then, 
	the leading principal $\cdim\times\cdim$ submatrix of $\permMat \popov
	\permMat^{-1}$ is $[p_{i_k,i_\ell}]_{1\le k,\ell \le \cdim}$. Now,
	$\leadingMat[\tuple{t}]{[p_{i_k,i_\ell}]}$ is the submatrix of
	$\leadingMat[\shifts]{\popov}$ formed by its rows and columns indexed by
	$(i_1,\ldots,i_\cdim)$, and $\leadingMat[\shifts]{\popov}$ is unit lower
	triangular since $\popov$ is in $\shifts$-ordered weak Popov form.  Since
	$i_1 < \cdots < i_\cdim$, $\leadingMat[\tuple{t}]{[p_{i_k,i_\ell}]}$ is unit
	lower triangular as well, and therefore $[p_{i_k,i_\ell}]_{1\le k,\ell \le
	\cdim}$ is in $\shifts[t]$-ordered weak Popov form.

  For the second item, we prove that the $\shifts\permMat$-leading matrix of
  $\permMat^{-1} \matt{P} \permMat$ is unit lower triangular. For $1 \le k \le
  \rdim-\cdim$, the row $j_k$ of $\permMat^{-1} \matt{P} \permMat$ is $[0 \;
  \cdots \; 0 \; \var^{d_k} \; 0 \; \cdots \; 0]$ with $\var^{d_k}$ at index
  $j_k$; thus, the row $j_k$ of $\leadingMat[\shifts\permMat]{\permMat^{-1}
  \matt{P} \permMat}$ is $[0 \; \cdots \; 0 \; 1 \; 0 \cdots \; 0]$ with $1$ on
  the diagonal. It remains to show that, for $1 \le k \le \cdim$, the row $i_k$
  of $\leadingMat[\shifts\permMat]{\permMat^{-1} \matt{P} \permMat}$ has the
  form $[\any \; \cdots \; \any \; 1 \; 0 \; \cdots \; 0]$ with $1$ on the
  diagonal, that is, at index $i_k$. The row $i_k$ of
  $\leadingMat[\shifts\permMat]{\permMat^{-1} \matt{P} \permMat}$ is the row
  $k$ of $\leadingMat[\shifts]{\matt{P}}\permMat$; the latter has the desired
  form $[\any \; \cdots \; \any \; 1 \; 0 \; \cdots \; 0]$ with $1$ at index
  $i_k$, since the row $k$ of $\leadingMat[\shifts]{\matt{P}}$ has the form
  $[\any \; \cdots \; \any \; 1 \; 0 \; \cdots \; 0]$ with $1$ at index $k$ and
  since $i_1 < \cdots < i_k$. 
\end{proof}

\section{Algorithm \textsc{PM-Basis}: approximant bases via polynomial matrix multiplication}
\label{sec:balanced_order}

In this section, we focus on the case of a uniform order, that is, $\orders =
(\order,\ldots,\order) \in \orderSpace$ and $\vsdim=\nbeq\order$. For
simplicity, we write $\modAppCustom{\order}{\sys}$ to refer to
$\modAppCustom{(\order,\ldots,\order)}{\sys}$. Then, for any shift,
\citep[Algo.\,\algoname{PM-Basis}]{GiJeVi03} computes an $\shifts$-minimal
basis of $\modAppCustom{\order}{\sys}$ using
$\bigO{(1+\nbeq/\nbun)\appbastime{\nbun,\order}}$ operations; this is in
$\softO{\nbun^{\expmatmul-1}\vsdim}$ when $\nbeq\in\Omega(\nbun)$.

\algoname{PM-Basis} follows a divide and conquer approach, splitting the
instance at order $\order$ into two instances at order $\order/2$ and combining
the recursively obtained bases by polynomial matrix multiplication. The base
case ($\order=1$) is solved via fast dense linear algebra over the field
\(\field\).  Here, we describe \algoname{PM-Basis} with a modified base case,
ensuring that it returns the normalized basis. As a consequence, the whole
algorithm returns an $\shifts$-ordered weak Popov basis; this has the advantage
of directly revealing the $\shifts$-minimal degree of
$\modAppCustom{\order}{\sys}$, a fact used multiple times in this paper.

We now consider the base case: $\order=1$ and $\sys\in\matSpace[\nbun][\nbeq]$
is constant. Then, we will see that the $\shifts$-Popov basis of
$\modAppCustom{1}{\sys}$ has two sets of rows: rows corresponding to a
nullspace basis for $\sys$, and elementary rows of the form $[0 \;\; \cdots
\;\; 0 \;\; \var \;\; 0 \;\; \cdots \;\; 0]$. \cref{algo:lin_pab} is a modified
version of \citep[Algo.\,M-Basis with $d=1$]{GiJeVi03}, and also a
specialization of \citep[Algo.\,9]{JeNeScVi17} when the multiplication matrix
is zero.

\begin{algobox}
  \algoInfo
  {Popov basis at order {$(1,\ldots,1)$}}
  {M-Basis-1}
  {algo:lin_pab}

  \dataInfos{Input}{
    \item constant matrix $\sys\in\matSpace[\nbun][\nbeq]$,
    \item shift $\shifts \in \shiftSpace$.
  }

  \dataInfo{Output}{
    the $\shifts$-Popov basis of $\modAppCustom{1}{\sys}$.
  }

  \algoSteps{
    \item $\pi_{\shifts} \assign$ $\nbun\times\nbun$ permutation matrix such
      that $\pi_{\shifts} \, \trsp{[(\shift{1},1) ~ \cdots ~ (\shift{\nbun},\nbun)]}$ is
      lexicographically increasing
    \item $(\boldsymbol\rho,\mat{L}) \in \ZZp^\genRank \times
      \matSpace[\nbun][\nbun] \assign$ row rank profile of $\pi_{\shifts}
      \sys$, and L-factor in the LSP decomposition of $\pi_{\shifts} \sys$,
      where $\matcol{\mat{L}}{j}$ is an identity column for
      $j\not\in\boldsymbol\rho$
    \item $\mat{M} \in \matSpace[\nbun]$ $\assign$ matrix whose $i$th row is
      $\matrow{\mat{L}}{i}$ with negated off-diagonal entries if
      $i\not\in\boldsymbol{\rho}$, and is the identity row if
      $i\in\boldsymbol{\rho}$
    \item $\matt{P} \in \appbasSpace \assign$ the matrix $\xDiag{\boldsymbol{\mu}}\mat{M}$
    with $\boldsymbol{\mu} = (\mu_1,\ldots,\mu_m)$ such that $\mu_i=1$ if $i\in\boldsymbol{\rho}$, and
    $\mu_i=0$ otherwise
    \item \algoword{Return} $\pi_{\shifts}^{-1} \matt{P} \pi_{\shifts}$
    }
\end{algobox}

\begin{proposition}
  \label{prop:algo:lin_pab}
  \cref{algo:lin_pab} is correct and uses $\bigO{\genRank^{\expmatmul-2} \nbun
\nbeq}$ operations in $\field$, where $\genRank$ is the rank of $\sys$.
\end{proposition}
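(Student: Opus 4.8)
The plan is to handle correctness and cost separately, and for correctness to first dispose of the permutation $\pi_{\shifts}$ and then analyse $\modAppCustom{1}{\sys}$ for constant $\sys$ by direct linear algebra. Since $\deg(\sys)=0$, a vector $\app\in\appSpace$ lies in $\modAppCustom{1}{\sys}$ if and only if its constant coefficient lies in the left nullspace of $\sys$; hence $\modAppCustom{1}{\sys}$ is generated, as a $\polRing$-module, by any $\field$-basis of that nullspace together with $\var\idMat[\nbun]$, and by \cref{lem:degdet_appbasis} its $\shifts$-minimal degree $\minDegs$ satisfies $\sumVec{\minDegs}=\dim(\appSpace/\modAppCustom{1}{\sys})=\genRank$, with entries in $\{0,1\}$. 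So the $\shifts$-Popov basis of $\modAppCustom{1}{\sys}$ has exactly $\genRank$ rows of the form $\var\row{e}_i$ (the columns carrying a degree-$1$ pivot) and $\nbun-\genRank$ rows of degree $0$ forming a reduced-row-echelon basis of the left nullspace of $\sys$. For the permutation, I would invoke the remark opening \cref{subsec:stability_owP}: a basis of $\modAppCustom{1}{\pi_{\shifts}\sys}$ right-multiplied by $\pi_{\shifts}$ is a basis of $\modAppCustom{1}{\sys}$; since $\pi_{\shifts}$ sorts $[(\shift1,1),\dots,(\shift\nbun,\nbun)]$ increasingly, I may therefore assume for the analysis that $\shifts$ is weakly increasing and $\pi_{\shifts}=\idMat[\nbun]$, the extra left factor $\pi_{\shifts}^{-1}$ in Step~\textbf{5} merely reordering rows so that pivots land on the diagonal.

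The core step is to show that Steps~\textbf{2}--\textbf{4} produce the matrix $\matt{P}$ whose $i$th row is $\var\row{e}_i$ when $i\in\boldsymbol{\rho}$ and is $\row{e}_i-\sum_{j\in\boldsymbol{\rho},\,j<i}c_{ij}\row{e}_j$ when $i\notin\boldsymbol{\rho}$, where the $c_{ij}$ express row $i$ of $\sys$ as a combination of its earlier pivot rows. Writing $\sys=\mat{L}\mat{S}\mat{P}$ for the LSP decomposition, the left nullspace of $\sys$ equals that of $\mat{L}\mat{S}$, which is $\{\row{v}:\row{v}\mat{L}\in\mathrm{span}(\row{e}_i:i\notin\boldsymbol{\rho})\}$ because the nonzero rows of $\mat{S}$ are exactly those indexed by $\boldsymbol{\rho}$ and are linearly independent; using the structure of $\mat{L}$ coming from the LSP decomposition (unit lower triangular, identity columns off $\boldsymbol{\rho}$) together with the negated-off-diagonal/identity-row normalization of Step~\textbf{3}, one checks that the rows of $\mat{M}$ indexed by $\boldsymbol{\rho}^{\mathrm c}$ form precisely such a reduced-echelon nullspace basis. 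Then $\matt{P}=\xDiag{\boldsymbol{\mu}}\mat{M}$ is lower triangular with diagonal $\var^{\mu_i}$, hence nonsingular with $\det(\matt{P})$ a power of $\var$; all its rows lie in $\modAppCustom{1}{\sys}$, and they generate it because the degree-$0$ rows span the nullspace while those together with the rows $\var\row{e}_i$ ($i\in\boldsymbol{\rho}$) span $\var\appSpace$ (from $\row{e}_i=(\text{$i$th row of }\mat{M})+\sum_{j<i}c_{ij}\row{e}_j$ for $i\notin\boldsymbol{\rho}$). Thus $\matt{P}$ is a basis of $\modAppCustom{1}{\sys}$.

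It then remains to check that $\matt{P}$ is in $\shifts$-Popov form, which I would do by computing $\leadingMat[\shifts]{\matt{P}}$ and $\leadingMat[\unishift]{\trsp{\matt{P}}}$ directly: each row $\var\row{e}_i$ contributes $\row{e}_i$ to $\leadingMat[\shifts]{\matt{P}}$, while for a degree-$0$ nullspace row $i$ the $\shifts$-degree equals $\shift i$ (because $\shifts$ is weakly increasing and the row is supported on $\{i\}\cup\{j<i\}$), so its contribution is $\row{e}_i$ minus a combination of $\row{e}_j$ with $j<i$ (only those with $\shift j=\shift i$ survive); hence $\leadingMat[\shifts]{\matt{P}}$ is unit lower triangular, and $\leadingMat[\unishift]{\trsp{\matt{P}}}=\idMat[\nbun]$ follows from $\cdeg{\matt{P}}=\boldsymbol{\mu}$. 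Returning to a general shift, the same argument applies to $\pi_{\shifts}\sys$ and the weakly increasing shift obtained by sorting $\shifts$, and since the sort key $(\shift i,i)$ breaks ties by increasing index the conjugation of Step~\textbf{5} sends the resulting Popov basis to a matrix still in $\shifts$-Popov form, which is then the (unique) $\shifts$-Popov basis of $\modAppCustom{1}{\sys}$. For the cost, Step~\textbf{2} computes the row rank profile and LSP decomposition of an $\nbun\times\nbeq$ matrix of rank $\genRank$ in $\bigO{\genRank^{\expmatmul-2}\nbun\nbeq}$ operations by the classical reduction to matrix multiplication, whereas Step~\textbf{1} and Steps~\textbf{3}--\textbf{5} cost only $\bigO{\nbun\genRank}$ field operations since every nontrivial row of $\mat{L}$, $\mat{M}$ and $\matt{P}$ has $\bigO{\genRank}$ nonzero entries; as $\genRank\le\nbeq$ and $\expmatmul>2$ this is within the claimed bound.

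The main obstacle is the middle step: pinning down the shape of $\mat{L}$ (and of $\mat{M}$ after the Step~\textbf{3} normalization) precisely enough to be sure the rows of $\matt{P}$ really lie in $\modAppCustom{1}{\sys}$ and form its reduced basis, and matching the row rank profile of $\pi_{\shifts}\sys$ with the shift-induced pivot structure so that the conjugation by $\pi_{\shifts}$ lands exactly in $\shifts$-Popov form rather than merely in $\shifts$-weak Popov form. As an alternative to a fully self-contained argument, one may deduce correctness from \citep[Algo.\,M-Basis with $d=1$]{GiJeVi03} or \citep[Algo.\,9]{JeNeScVi17} and check that the modifications in Steps~\textbf{3}--\textbf{4} enforce exactly the $\shifts$-Popov normalization.
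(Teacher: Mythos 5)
Your proposal follows essentially the same route as the paper's proof: cost is charged to the LSP computation (citing the classical reduction to matrix multiplication), and correctness is established by checking (a) the rows of $\matt{P}$ lie in $\modAppCustom{1}{\sys}$ via the LSP nullspace property, (b) they generate the module by splitting $\app=\row{q}\var+\row{r}$, and (c) the leading matrices $\leadingMat[\shifts]{\cdot}$ and $\leadingMat[\unishift]{\trsp{(\cdot)}}$ have the required triangular/identity shapes using the tie-breaking rule of $\pi_{\shifts}$. The only organizational difference is that you first reduce WLOG to a weakly increasing shift with $\pi_{\shifts}=\idMat[\nbun]$ and then transfer the conclusion back via the conjugation in Step~\textbf{5}, whereas the paper threads the permutation indices through the whole argument directly; both versions hinge on the same two facts, namely the nullspace normalization of the $\mat{L}$-factor and the lexicographic, index-tie-breaking definition of $\pi_{\shifts}$. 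Your closing remark flags exactly the part the paper also treats lightly (the precise LSP normalization ensuring the negated non-pivot rows of $\mat{L}$ span the left nullspace), so there is no substantive gap relative to the paper's own level of detail.
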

\begin{proof}
  Concerning the cost bound, the LSP decomposition at Step~\textbf{2} uses
  $\bigO{\genRank^{\expmatmul-2} \nbun \nbeq}$ operations
  \citep[Sec.\,2.2]{Storjohann00}, and reveals the row rank profile.

  For the correctness, we prove the following three properties: all the rows of
  the output $\appbas=\pi_{\shifts}^{-1} \mat{\hat{P}} \pi_{\shifts}$ are in
  $\modAppCustom{1}{\sys}$, the rows of $\appbas$ generate
  $\modAppCustom{1}{\sys}$, and $\appbas$ is in $\shifts$-Popov form.

  First, we have that $\appbas \sys = 0 \bmod \var$ since the rows of $\appbas$
  are either multiples of $\var$ or, by definition of $\mat{M}$, in the left
  nullspace of $\sys$. Indeed, by property of the LSP decomposition,
  the rows $\matrow{\mat{L}}{i}$ with negated off-diagonal entries for
  all $i\not\in\boldsymbol{\rho}$ form a basis of the left nullspace of
  $\pi_{\shifts} \sys$.

  Second, we show that any $\app\in\modAppCustom{1}{\sys}$ belongs to the
  row space of $\appbas$. Writing $\app = \row{q}\var + \row{r}$ with $\row{q}
  \in \polMatSpace[1][\nbun]$ and $\row{r}
  \in \matSpace[1][\nbun]$, we have the identity $\row{q} X = \row{q}
  \pi_{\shifts}^{-1} \mat{M}^{-1} \xDiag{1-\boldsymbol{\mu}} \pi_{\shifts}
  \appbas$. Furthermore, $\app\sys = \row{r}\sys = \row{r} \pi_{\shifts}^{-1}
  \pi_{\shifts} \sys = 0 \bmod \var$, and therefore $\row{r} \pi_{\shifts}^{-1} =
  \rowgrk{\lambda} \mat{M}$ for some $\rowgrk{\lambda} = [\lambda_i]_i
  \in\matSpace[1][\nbun]$ such that $\lambda_i = 0$ if $i\in\boldsymbol{\rho}$.
  Recalling that $\mu_i=0$ if $i\not\in\boldsymbol{\rho}$, we obtain $\row{r} =
  \rowgrk{\lambda} \xDiag{\boldsymbol{\mu}} \mat{M} \pi_{\shifts} =
  \rowgrk{\lambda} \pi_{\shifts} \appbas$.

  Finally, we prove that $\appbas$ is in $\shifts$-Popov form.
  By construction, $\matcol{\mat{\hat{P}}}{j}$ is the $j$th column of the
  identity if $j\not\in\boldsymbol{\rho}$, while for $j\in\boldsymbol{\rho}$,
  it has constants everywhere but at position $j$, where $\hat{p}_{jj} = X$.
  It follows that $\leadingMat[\unishift]{\trsp{\mat{\hat{P}}}} = \idMat$, and
  it is then easily checked that $\leadingMat[\unishift]{\trsp{\appbas}} =
  \idMat$.

  It remains to prove that $\leadingMat[\shifts]{\appbas}$ is unit lower
  triangular, or, equivalently, that
   \begin{equation}
    \label{eqn:lmat_low_tri}
    p_{ii} \text{ is monic and }
    \left\{
      \begin{array}{ll}
        \deg(p_{ij})+\shift{j} \le \deg(p_{ii})+\shift{i} & \,\, \text{if } j \le i, \\
        \deg(p_{ij})+\shift{j} < \deg(p_{ii})+\shift{i}   & \,\, \text{if } j > i.
      \end{array}
     \right. 
     \qquad\qquad\qquad\qquad\quad
  \end{equation} 
  where $p_{ij}$ is the entry of $\appbas$ at $(i,j)$.
  Writing $[1 \;\cdots\; \nbun] \pi_{\shifts} = [\pi_1 \;\cdots\; \pi_\nbun]$,
  we have $p_{ij} = \hat{p}_{\pi_i \pi_j}$ for all $i,j$. If
  $\matrow{\appbas}{i}$ is nonconstant, then so is
  $\matrow{\mat{\hat{P}}}{\pi_i}$ and thus, by construction, its only nonzero
  entry is $\hat{p}_{\pi_i\pi_i} = X$. Hence $\matrow{\mat{P}}{i} =
  [0 \;\cdots\; 0 \; X \; 0 \;\cdots \; 0]$ with $X$ at index $i$, so
  that \cref{eqn:lmat_low_tri} holds.

  Let now $\matrow{\mat{P}}{i}$ be a constant row. In this case,
  $\matrow{\mat{\hat{P}}}{\pi_i}$ is constant as well and $\hat{p}_{\pi_i\pi_i}
  = 1$. Consequently, $p_{ii} = 1$ and \cref{eqn:lmat_low_tri} is now
  equivalent to
  \[
    \text{if } (j\le i \text{ and } s_j>s_i)
    \text{ or }
    (j>i \text{ and } s_j\ge s_i),
    \text{ then } p_{ij} = 0.
  \]
  Now, by definition of $\pi_{\shifts}$, if $i$ and $j$ are such that
  $\shift{j} > \shift{i}$, or such that $\shift{j} \ge \shift{i}$ and $j>i$,
  then $\pi_j > \pi_i$. Since $\mat{\hat{P}}$ is lower triangular, this implies
  $\hat{p}_{\pi_i\pi_j} = 0$, that is, $p_{ij} = 0$.
\end{proof}

Now, we recall \algoname{PM-Basis} in \cref{algo:dac_ab}. Note that it computes
a basis of degree at most $\order$, although there often exist
$\shifts$-minimal bases with larger degree. As a result, the two bases obtained
recursively can be multiplied in $\polmatmultime{\nbun,\order}$ operations.

\begin{algobox}
  \algoInfo
  {Minimal basis for a uniform order}
  {PM-Basis}
  {algo:dac_ab}

  \dataInfos{Input}{
    \item order $\order \in \ZZp$,
    \item matrix $\sys\in\sysSpace$ of degree less than $\order$,
    \item shift $\shifts \in \shiftSpace$.
  }

  \dataInfos{Output}{
    \item an $\shifts$-ordered weak Popov basis of
      $\modAppCustom{\order}{\sys}$ of degree at most $\order$.
  }

  \algoSteps{
    \item \algoword{If} $\order=1$ \algoword{then return} $\algoname{M-Basis-1}(\sys,\shifts)$
    \item \algoword{Else:}
      \begin{enumerate}[{\bf a.}]
        \item $\appbas_1 \assign \algoname{PM-Basis}(\lceil\order/2\rceil,\sys\bmod \var^{\lceil\order/2\rceil},\shifts)$
        \item $\mat{G} \assign (\var^{-\lceil\order/2\rceil} \appbas_1 \sys) \bmod \var^{\lfloor\order/2\rfloor}$;~
         $\shifts[t] \assign \rdeg[\shifts]{\appbas_1}$
        \item $\appbas_2 \assign \algoname{PM-Basis}(\lfloor\order/2\rfloor,\mat{G},\shifts[t])$
        \item \algoword{Return} $\appbas_2 \appbas_1$
      \end{enumerate}
  }
\end{algobox}

\begin{proposition}
  \label{prop:algo:dac_ab}
  \cref{algo:dac_ab} is correct and uses $\bigO{(1+\frac{\nbeq}{\nbun})
  \appbastime{\nbun,\order}}$ operations in~$\field$.
\end{proposition}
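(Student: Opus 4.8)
The plan is to establish correctness and the cost bound together, by induction on $\order$, letting the recursion of \cref{algo:dac_ab} mirror the module chain of \cref{lem:recursiveness}. The base case $\order=1$ is immediate from \cref{prop:algo:lin_pab}: \algoname{M-Basis-1} returns the $\shifts$-Popov basis of $\modAppCustom{1}{\sys}$, which is in particular in $\shifts$-ordered weak Popov form and has degree at most $1$. So the inductive step $\order\ge2$ carries all the content.

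For correctness of the inductive step, I would set $\module_1=\modAppCustom{\lceil\order/2\rceil}{\sys}$ and $\module=\modAppCustom{\order}{\sys}$, observe that both are free of rank $\nbun$ with $\module\subseteq\module_1$ (reducing componentwise mod $\var^\order$ forces reduction mod $\var^{\lceil\order/2\rceil}$), and note that $\module_1$ only depends on $\sys\bmod\var^{\lceil\order/2\rceil}$. By the inductive hypothesis applied to Step~\textbf{2a}, $\appbas_1$ is then an $\shifts$-ordered weak Popov basis of $\module_1$ of degree $\le\lceil\order/2\rceil$, and $\shifts[t]=\rdeg[\shifts]{\appbas_1}$ is the tuple computed in Step~\textbf{2b}. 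Item~$(i)$ of \cref{lem:recursiveness} gives that $\module_2=\{\rowgrk\lambda\in\polMatSpace[1][\nbun]\mid\rowgrk\lambda\appbas_1\in\module\}$ has rank $\nbun$ and that $\appbas_2\appbas_1$ is a basis of $\module$ for any basis $\appbas_2$ of $\module_2$. The one nontrivial identity is $\module_2=\modAppCustom{\lfloor\order/2\rfloor}{\mat{G}}$: writing $\appbas_1\sys=\var^{\lceil\order/2\rceil}\mat{H}$ (valid since $\appbas_1\sys\equiv0\bmod\var^{\lceil\order/2\rceil}$) so that $\mat{G}=\mat{H}\bmod\var^{\lfloor\order/2\rfloor}$, and using $\order-\lceil\order/2\rceil=\lfloor\order/2\rfloor$, for any $\rowgrk\lambda$ one has $\rowgrk\lambda\appbas_1\sys\equiv0\bmod\var^\order$ componentwise iff $\rowgrk\lambda\mat{H}\equiv0\bmod\var^{\lfloor\order/2\rfloor}$ componentwise iff $\rowgrk\lambda\mat{G}\equiv0\bmod\var^{\lfloor\order/2\rfloor}$. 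Since $\deg(\mat{G})<\lfloor\order/2\rfloor$, the inductive hypothesis applied to Step~\textbf{2c} shows $\appbas_2$ is an $\shifts[t]$-ordered weak Popov basis of $\module_2$ of degree $\le\lfloor\order/2\rfloor$, so item~$(iii)$ of \cref{lem:recursiveness} makes $\appbas_2\appbas_1$ an $\shifts$-ordered weak Popov basis of $\modAppCustom{\order}{\sys}$, of degree $\le\lceil\order/2\rceil+\lfloor\order/2\rfloor=\order$, as required.

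For the cost, I would first note that every sub-call operates on matrices with $\nbun$ rows and column dimension $\nbeq$ (the residual $\mat{G}$ being $\nbun\times\nbeq$), so $\nbun,\nbeq$ are invariant along the recursion tree. A leaf ($\order=1$) costs $\bigO{\genRank^{\expmatmul-2}\nbun\nbeq}\subseteq\bigO{\nbun^{\expmatmul-1}\nbeq}$ by \cref{prop:algo:lin_pab} (using $\genRank\le\nbun$ and $\expmatmul\ge2$), which is within $\bigO{(1+\nbeq/\nbun)\polmatmultime{\nbun,1}}=\bigO{(1+\nbeq/\nbun)\appbastime{\nbun,1}}$. At an internal node, Step~\textbf{2b} reduces to computing the product $\appbas_1\sys$ with $\deg(\appbas_1)\le\lceil\order/2\rceil$ and $\sumVec{\cdeg{\sys}}<\nbeq\order$ (the subsequent shift and truncation being free), which by the first item of \cref{lem:compute_residual} costs $\bigO{(1+\nbeq/\nbun)\polmatmultime{\nbun,\order}}$ since $\nbeq\order/(\lceil\order/2\rceil+1)\le2\nbeq$; and Step~\textbf{2d} multiplies two $\nbun\times\nbun$ matrices of degree $\le\lceil\order/2\rceil$, costing $\bigO{\polmatmultime{\nbun,\order}}$. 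Hence $T(\order)\le2T(\lceil\order/2\rceil)+c(1+\nbeq/\nbun)\polmatmultime{\nbun,\order}$ for some constant $c$; unrolling over the $\lceil\log_2\order\rceil$ levels (at level $k$ there are $\le2^k$ nodes, each of order $\le\lceil\order/2^k\rceil$) and using the standard regularity of $\polmatmultime{\nbun,\cdot}$ together with $\appbastime{\nbun,\order}=\sum_{0\le k\le\log\order}2^k\polmatmultime{\nbun,2^{-k}\order}$ yields $T(\order)\in\bigO{(1+\nbeq/\nbun)\appbastime{\nbun,\order}}$.

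The only delicate point is the cost accounting of Step~\textbf{2b}: the factor $\sys$ there can be both wide ($\nbeq$ possibly much larger than $\nbun$) and degree-unbalanced relative to $\appbas_1$, so one must use the column partial linearization underlying \cref{lem:compute_residual} rather than a naive block product to keep the per-node cost at $\bigO{(1+\nbeq/\nbun)\polmatmultime{\nbun,\order}}$; beyond that the argument is a routine induction together with the floor/ceiling bookkeeping needed to match the unrolled sum with the definition of $\appbastime{\nbun,\order}$.
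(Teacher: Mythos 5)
Your proof is correct and follows essentially the same route as the paper: \cref{prop:algo:lin_pab} for the base case, items~$(i)$ and~$(iii)$ of \cref{lem:recursiveness} for the inductive step, and the standard divide-and-conquer unrolling matching the definition of $\appbastime{\nbun,\order}$; you also helpfully spell out the identity $\module_2=\modAppCustom{\lfloor\order/2\rfloor}{\mat{G}}$, which the paper leaves implicit. One small overstatement in your closing remark: the per-node cost of Step~\textbf{2.b} does not actually require the column partial linearization of \cref{lem:compute_residual} --- a plain block product of $\appbas_1$ (degree $\le\lceil\order/2\rceil$) against $\lceil\nbeq/\nbun\rceil$ blocks of $\sys$ of $\nbun$ columns and degree $<\order$ already gives $\bigO{(1+\nbeq/\nbun)\polmatmultime{\nbun,\order}}$, which is what the paper does; the linearization trick only buys a constant-factor saving here.
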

\begin{proof}
  From \cref{prop:algo:lin_pab}, Step~\textbf{1} computes the $\shifts$-Popov
  basis of $\modAppCustom{1}{\sys}$, which has degree at most $1$.
  Then, it follows by induction that the output has degree at most $\order =
  \lceil\order/2\rceil + \lfloor\order/2\rfloor$, and items $(i)$ and $(iii)$
  of \cref{lem:recursiveness} prove the correctness.

  For the cost analysis, let us assume that $\order$ is a power of $2$. From
  \cref{prop:algo:lin_pab}, Step~\textbf{1} uses
  $\bigO{\nbun^{\expmatmul-1}\nbeq}$ operations. The tree of the recursion has
  $\order$ leaves, which altogether account for
  $\bigO{\nbun^{\expmatmul-1}\nbeq \order}$ field operations. Note that
  $\nbun^{\expmatmul-1}\nbeq \order \in
  \bigO{\frac{\nbeq}{\nbun}\appbastime{\nbun,\order}}$.

  Then, there are recursive calls at Steps~\textbf{2.a} and~\textbf{2.c}, in
  dimension $\nbun$ and at order $\order/2$. The residual $\mat{G}$ at
  Step~\textbf{2.b} is obtained from the product $\appbas_1 \sys$, where
  $\appbas_1$ is an $\nbun\times\nbun$ matrix of degree at most $\order/2$, and
  $\sys$ is an $\nbun\times\nbeq$ matrix of degree at most $\order$. This
  product is done in $\bigO{\polmatmultime{\nbun,\order}}$ operations if $\nbeq
  \le \nbun$, and in $\bigO{\frac{\nbeq}{\nbun} \polmatmultime{\nbun,\order}}$
  operations if $\nbun \le \nbeq$. The multiplication at Step~\textbf{2.d}
  involves two $\nbun\times\nbun$ matrices of degree at most $\order/2$, and
  hence is done in $\bigO{\polmatmultime{\nbun,\order/2}}$ operations in
  $\field$. The cost bound follows from the definition of the cost function
  \(\appbastime{\cdot,\cdot}\).
\end{proof}

Based on \cref{lem:mindeg_shift}, we show how to obtain the $\shifts$-Popov
approximant basis using two calls to \algoname{PM-Basis}
(\cref{algo:popov_pmbasis}). This yields an efficient solution to
\cref{pbm:pab} when $\nbeq \in \Omega(\nbun)$ and the order is balanced as in
$\hypobal$, and this proves the first item of \cref{thm:pm_basis}.  Note that
here we allow the order to be non-uniform, based on the following remark.

\begin{remark}
  \label{rmk:pmbasis_unbalancedorders}
  Let $\orders \in \orderSpace$ and $\sys \in \sysSpace$. Then, for any
  $\orders' \in \orderSpace$ such that $\orders' \ge \orders$, we have $\modApp
  = \modAppCustom{\orders'}{\sys \xDiag{\orders'-\orders}}$. In particular,
  algorithms for uniform orders can be used to solve the case of arbitrary
  orders: for $\order = \max(\orders)$ and $\mat{G} = \sys
  \xDiag{(\order,\ldots,\order)-\orders}$, we have $\modApp =
  \modAppCustom{\order}{\mat{G}}$. For example, for a balanced order ($\orders$
  such that $\hypobal$: $\order\in\bigO{\vsdim/\nbeq}$),
  \algoname{PM-Basis} uses
  $\bigO{(1+\nbeq/\nbun)\appbastime{\nbun,\vsdim/\nbeq}}$ operations, where
  $\vsdim = \sumVec{\orders}$.
\end{remark}

\begin{algobox}
  \algoInfo
  {Popov basis via PM-Basis}
  {Popov-PM-Basis}
  {algo:popov_pmbasis}

  \dataInfos{Input}{
    \item order $\orders \in \orderSpace$,
    \item matrix $\sys \in \sysSpace$ with $\cdeg{\sys} < \orders$,
    \item shift $\shifts \in \shiftSpace$.
  }

  \dataInfo{Output}{
    the $\shifts$-Popov basis of $\modApp$.
  }

  \algoSteps{
    \item $\order \assign \max(\orders)$; $\res \assign \sys
      \xDiag{(\order,\ldots,\order)-\orders}$
    \item $\appbas \assign \algoname{PM-Basis}(\order,\res,\shifts)$
    \item $\minDegs \assign$ the diagonal degrees of $\appbas$
    \item $\reduced \assign \algoname{PM-Basis}(\order,\res,-\minDegs)$
    \item \algoword{Return} $\leadingMat[-\minDegs]{\reduced}^{-1} \reduced$
  }
\end{algobox}

The correctness of \cref{algo:popov_pmbasis} follows from that of
\algoname{PM-Basis}, and from
\cref{lem:mindeg_shift,rmk:pmbasis_unbalancedorders}. Besides, the cost bound
$\bigO{(1 + \nbeq/\nbun) \appbastime{\nbun,\order}}$ follows from
\cref{prop:algo:dac_ab}, noting that Step~\textbf{5} uses
$\bigO{\nbun^{\expmatmul}\order} \subseteq \bigO{\appbastime{\nbun,\order}}$
operations since $\deg(\reduced)\le\order$.

\section{Reduction to the case \texorpdfstring{$\nbeq < \nbun$}{\nbeq less than \nbun}}
\label{sec:reduce_nbeq}

Let $\orders = (\order_1,\ldots,\order_\nbeq) \in \orderSpace$, $\sys \in
\sysSpace$ such that $\cdeg{\sys} < \orders$, and let $\shifts \in
\shiftSpace$. In this section we assume $\nbeq \ge \nbun$, which also implies
$\vsdim = \order_1+\cdots+\order_\nbeq \ge \nbun$, and we present an efficient
procedure relying on \algoname{PM-Basis} to reduce to the case $\nbeq < \nbun$.

Here is an overview of the reduction, assuming
$\order_1\ge\cdots\ge\order_\nbeq$ for simplicity. The idea is to efficiently
compute a basis $\appbas$ of a truncated instance, namely of
$\modAppCustom{\orders'}{\sys \bmod \xDiag{\orders'}}$ for the order
\[
  \orders'=(\order_\nbun,\ldots,\order_\nbun,\order_{\nbun+1},\ldots,\order_\nbeq)
\in \orderSpace.
\]
Then, the residual instance consists of the order $\ordersR = \orders -
\orders'$ and the matrix $\sysR = \appbas \sys \xDiag{-\orders'} \bmod
\xDiag{\ordersR}$: by \cref{lem:recursiveness}, for any basis $\appbasR$ of
$\modAppCustom{\ordersR}{\sysR}$, the product $\appbasR \appbas$ is a basis of
$\modApp$. By construction, the residual matrix \(\sysR\) has \(\nbun\) rows
and less than \(\nbun\) nonzero columns.

In \cref{algo:reduce_coldim}, we detail how to efficiently obtain $\appbas$ and
the residual instance $(\ordersR,\sysR)$. We now sketch this algorithm,
assuming that $\order_\nbun,\ldots,\order_\nbeq$ are powers of $2$ for ease of
presentation. How to reduce to this case follows from
\cref{rmk:pmbasis_unbalancedorders}.

Then, denoting by $\ell$ the integer such that $\order_\nbun = 2^\ell$, we
define
\[
  \nu_i = \card{\{ j \in \{1,\ldots,\nbeq\} \mid \order_j = 2^i \}}
\]
for $0\le i\le \ell$, as well as $\nu_{\ell+1} = \nbeq-\nu_0-\cdots-\nu_\ell$.
This can be illustrated as follows:
\begin{equation*}
  \orders \;=\;
  ( ~ 
  \lefteqn{\overbrace{\phantom{> 2^\ell, \ldots, > 2^\ell, ~ 2^{\ell},\ldots}}^{\nbun}}
  \underbrace{> 2^\ell, \ldots, > 2^\ell}_{\nu_{\ell+1}},
  ~ \underbrace{2^{\ell},\ldots,2^\ell}_{\nu_\ell},
  ~ \underbrace{2^{\ell-1},\ldots,2^{\ell-1}}_{\nu_{\ell-1}},
  ~ \ldots,
  ~ \underbrace{1,\ldots,1}_{\nu_{0}} ~).
\end{equation*}
Furthermore, we let $\mu_{i} = \nu_{\ell+1} + \nu_\ell + \cdots + \nu_{i} =
\max \{j \mid \order_j \ge 2^i\}$. Then, guided by this decomposition of
$\orders$, we obtain $\appbas$ in $\bigO{\appbastime{\nbun,\vsdim/\nbun}}$
operations via $\ell+1$ calls to \algoname{PM-Basis}. This is faster than the
straightforward approach consisting in a single call to \algoname{PM-Basis}
with order $\order=2^\ell\in \bigO{\vsdim/\nbun}$, which uses
$\bigO{\frac{\nbeq}{\nbun}\appbastime{\nbun,\vsdim/\nbun}}$ operations. 

The first call is with $\order=1$ and computes an approximant basis $\appbas_0$
for all $\mu_0 = \nbeq$ columns of $\sys \bmod \var$. After this, we are left
with the residual matrix $\res = \var^{-1} \appbas_0 \sys$ and the order
$(\order_1-1,\ldots,\order_\nbeq-1)$, whose last $\nu_0$ entries are zero.
Thus, the second call is with $\order=2^1-2^0 = 1$ and for the first
$\mu_1=\nbeq-\nu_0$ columns of $\res \bmod \var$, giving an approximant basis
$\appbas_1$. Then $\appbas_1\appbas_0$ is a basis of
$\modAppCustom{(2,\ldots,2)}{\sys}$. Considering the residual $\res = \var^{-2}
\appbas_1 \appbas_0 \sys$, the third call is with $\order=2^2-2^1=2$ and for
the first $\mu_2$ columns of $\res \bmod \var^2$, yielding an approximant basis
$\appbas_2$. Thus, $\appbas_2\appbas_1\appbas_0$ is a basis of
$\modAppCustom{(4,\ldots,4)}{\sys}$. Continuing this process until reaching the
order $(2^\ell,\ldots,2^\ell)$, we obtain $\appbas = \appbas_\ell \cdots
\appbas_0$ and we are left with a residual matrix having at most
$\nu_{\ell+1}=\mu_{\ell+1} < \nbun$ nonzero columns.

\begin{algobox}[htb]
  \algoInfo
  {Reduction to $\nbeq<\nbun$ via \algoname{PM-Basis}}
  {ReduceColDim}
  {algo:reduce_coldim}

  \dataInfos{Input}{
    \item order $\orders = (\order_1, \ldots, \order_\nbeq) \in
      \orderSpace$ with $\order_1 \ge \cdots \ge \order_\nbeq$,
    \item matrix $\sys \in \sysSpace$ with $\cdeg{\sys} <
      \orders$ and $\nbeq\ge\nbun$,
    \item shift $\shifts \in \shiftSpace$.
  }

  \dataInfos{Output}{
    \item $\ordersR = (\order_1-\order_\nbun,\ldots,\order_\nu-\order_\nbun)
      \in \ZZp^\nu$, where $\nu = \max \{ j \mid \order_j > \order_\nbun\}$,
    \item $\sysR = \var^{-\order_\nbun} 
      [ (\appbas \matcol{\mat{F}}{1}) \bmod \var^{\order_1} 
        | \cdots |
        (\appbas \matcol{\mat{F}}{\nu}) \bmod \var^{\order_{\nu}} ]
        \in \polMatSpace[\nbun][\nu]$, 
    \item $\shiftsR = \rdeg[\shifts]{\appbas} \in \shiftSpace[\nbun]$,
    \item $\appbas$ an $\shifts$-ordered weak Popov basis of
      $\modAppCustom{\orders-(\ordersR,\unishift)}{\sys}$.
  }

  \algoSteps{
    \item $\tilde\order_j \assign 2^{\lceil \log_2(\order_j)\rceil}$ for $\nbun\le j \le \nbeq$;
      and $\tilde\order_j \assign \order_j + \tilde\order_\nbun - \order_\nbun$ for $1 \le j < \nbun$
    \item $\mat{\tilde{F}} \assign \sys \xDiag{\tuple{\tilde\order}-\orders}$ where $\tuple{\tilde\order} = (\tilde{\order}_1,\ldots,\tilde{\order}_\nbeq)$
    \item $\ell \assign \log_2(\tilde\order_\nbun)$; $\mu_i \assign \max\{j
      \mid \tilde\order_j \ge 2^i\}$ for $1\le i\le \ell$; and $\nu \assign
      \max\{j \mid \tilde\order_j > 2^\ell\}$
    \item $\appbas \assign \algoname{M-Basis-1}(\mat{\tilde{F}} \bmod \var,\shifts)$
    \item \algoword{For} $i$ \algoword{from} $1$ \algoword{to} $\ell$:
      \begin{enumerate}[{\bf a.}]
        \item $\res \assign (\var^{-2^{i-1}} \appbas [\matcol{\mat{\tilde{F}}}{1} | \cdots | \matcol{\mat{\tilde{F}}}{\mu_{i}}]) \bmod \var^{2^{i-1}}$
        \item $\appbas_i \assign
          \algoname{PM-Basis}(2^{i-1},\res,\rdeg[\shifts]{\appbas})$
        \item $\appbas \assign \appbas_i \appbas$
      \end{enumerate}
    \item $\ordersR \assign (\order_1-\order_\nbun,\ldots,\order_{\nu}-\order_\nbun)$;
      and $\shiftsR \assign \rdeg[\shifts]{\appbas}$
    \item $\sysR \assign \var^{-\order_\nbun} [ (\appbas \matcol{\mat{F}}{1}) \bmod
        \var^{\order_1} | \cdots | (\appbas \matcol{\mat{F}}{\nu}) \bmod
      \var^{\order_{\nu}} ]$
    \item \algoword{Return} $(\ordersR,\sysR,\shiftsR,\appbas)$
  }
\end{algobox}

\begin{proposition}
  \label{prop:algo:reduce_coldim}
  \cref{algo:reduce_coldim} is correct and uses
  $\bigO{\appbastime{\nbun,\vsdim/\nbun}}$ operations in $\field$, where
  $\vsdim = \order_1+\cdots+\order_\nbeq$. Furthermore, the output is such that
  $\sysR$ has $\nbun$ rows and $\nu<\nbun$ columns, $\sumVec{\ordersR} \le
  \vsdim$, $\deg(\appbas) \le 2\vsdim/\nbun$, and for any basis
  $\mat{Q}$ of $\modAppCustom{\ordersR}{\sysR}$, then $\mat{Q}\appbas$ is a
  basis of $\modApp$.
\end{proposition}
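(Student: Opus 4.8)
The plan is to follow the informal description preceding the algorithm: correctness comes from applying \cref{lem:recursiveness} once per pass of the loop, and the cost bound from a telescoping argument exploiting that the column count $\mu_i$ shrinks roughly as fast as the processed order $2^i$ grows. First, Steps~\textbf{1}--\textbf{2} replace $(\orders,\sys)$ by $(\tuple{\tilde\order},\mat{\tilde{F}})$ with $\mat{\tilde{F}}=\sys\xDiag{\tuple{\tilde\order}-\orders}$, where $\tuple{\tilde\order}\ge\orders$ componentwise, $\tilde\order_1\ge\cdots\ge\tilde\order_\nbeq$, the entries $\tilde\order_\nbun,\ldots,\tilde\order_\nbeq$ are powers of $2$, and $\tilde\order_\nbun=2^\ell$; by \cref{rmk:pmbasis_unbalancedorders} this preserves the module, $\modAppCustom{\orders}{\sys}=\modAppCustom{\tuple{\tilde\order}}{\mat{\tilde{F}}}$, and still $\cdeg{\mat{\tilde{F}}}<\tuple{\tilde\order}$. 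Write $\orders^{(i)}$ for the tuple whose $j$th entry is $\min(\tilde\order_j,2^i)$, so $\orders^{(0)}=(1,\ldots,1)$. The induction to prove is: after Step~\textbf{4} (the case $i=0$) and after the $i$th pass of the loop, $\appbas$ is an $\shifts$-ordered weak Popov basis of $\modAppCustom{\orders^{(i)}}{\mat{\tilde{F}}}$ of degree at most $2^i$. The base case is \cref{prop:algo:lin_pab}, since $\algoname{M-Basis-1}(\mat{\tilde{F}}\bmod\var,\shifts)$ returns the $\shifts$-Popov basis of $\modAppCustom{1}{\mat{\tilde{F}}\bmod\var}=\modAppCustom{\orders^{(0)}}{\mat{\tilde{F}}}$, which has degree at most $1$.

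For the inductive step, $\orders^{(i-1)}$ and $\orders^{(i)}$ differ only in the first $\mu_i$ columns, where the order rises from $2^{i-1}$ to $2^i$ (for $j>\mu_i$ one has $j>\nbun$ and $\tilde\order_j$ a power of $2$ with $\tilde\order_j<2^i$, hence $\le 2^{i-1}$; and $\mu_i\ge\nbun$ since $\tilde\order_\nbun=2^\ell\ge 2^i$). Put $\module_1=\modAppCustom{\orders^{(i-1)}}{\mat{\tilde{F}}}$, $\module=\modAppCustom{\orders^{(i)}}{\mat{\tilde{F}}}\subseteq\module_1$, and let $\appbas$ be the basis at the start of the pass, an $\shifts$-ordered weak Popov basis of $\module_1$ by induction. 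A short computation — peeling the factor $\var^{2^{i-1}}$ off $\appbas\matcol{\mat{\tilde{F}}}{j}$ for each $j\le\mu_i$ (it divides that product since $\appbas\in\module_1$) and keeping the coefficients of degree below $2^{i-1}$ — shows that the module $\module_2=\{\rowgrk{\lambda}\mid\rowgrk{\lambda}\appbas\in\module\}$ of \cref{lem:recursiveness} equals $\modAppCustom{2^{i-1}}{\res}$, with $\res$ the $\nbun\times\mu_i$ residual formed at Step~\textbf{5.a}. By \cref{prop:algo:dac_ab}, $\appbas_i=\algoname{PM-Basis}(2^{i-1},\res,\rdeg[\shifts]{\appbas})$ is then a $\rdeg[\shifts]{\appbas}$-ordered weak Popov basis of $\module_2$ of degree at most $2^{i-1}$; hence, by items $(i)$ and $(iii)$ of \cref{lem:recursiveness}, $\appbas_i\appbas$ is an $\shifts$-ordered weak Popov basis of $\module$ of degree at most $2^{i-1}+2^{i-1}=2^i$, which completes the step.

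After the loop, $\appbas$ is an $\shifts$-ordered weak Popov basis of $\modAppCustom{\orders^{(\ell)}}{\mat{\tilde{F}}}$. Since $\mat{\tilde{F}}=\sys\xDiag{\tuple{\tilde\order}-\orders}$, the condition $\rowgrk{\lambda}\mat{\tilde{F}}=0\bmod\xDiag{\orders^{(\ell)}}$ is equivalent to $\rowgrk{\lambda}\sys=0\bmod\xDiag{\orders'}$ with $\orders'=\orders^{(\ell)}-(\tuple{\tilde\order}-\orders)$, and a componentwise check (using $\tilde\order_j-\order_j=2^\ell-\order_\nbun$ for $j<\nbun$, and $\order_j=\order_\nbun$ for $\nu<j\le\nbun$) gives $\orders'=\orders-(\ordersR,\unishift)$; thus $\appbas$ is a basis of $\modAppCustom{\orders-(\ordersR,\unishift)}{\sys}$ as claimed, and in particular $\appbas\matcol{\sys}{j}\equiv 0\bmod\var^{\order_\nbun}$ for $j\le\nu$, so each $\matcol{\sysR}{j}=\var^{-\order_\nbun}(\appbas\matcol{\sys}{j}\bmod\var^{\order_j})$ of Step~\textbf{7} is a polynomial vector. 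The same peeling argument as in the inductive step shows $\{\rowgrk{\lambda}\mid\rowgrk{\lambda}\appbas\in\modApp\}=\modAppCustom{\ordersR}{\sysR}$, and \cref{lem:recursiveness}$(i)$ yields that $\mat{Q}\appbas$ is a basis of $\modApp$ for every basis $\mat{Q}$ of $\modAppCustom{\ordersR}{\sysR}$. The remaining output properties are immediate: $\nu<\nbun$ because $\tilde\order_\nbun=2^\ell$ is not $>2^\ell$ (also $\nu=\max\{j\mid\tilde\order_j>2^\ell\}=\max\{j\mid\order_j>\order_\nbun\}$, the two sets coinciding); $\sumVec{\ordersR}=\sum_{j\le\nu}(\order_j-\order_\nbun)\le\vsdim$; and $\deg(\appbas)\le 2^\ell=\tilde\order_\nbun=2^{\lceil\log_2\order_\nbun\rceil}<2\order_\nbun\le 2\vsdim/\nbun$, the last inequality since $\vsdim\ge\order_1+\cdots+\order_\nbun\ge\nbun\order_\nbun$.

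Finally, the cost. Step~\textbf{4} uses $\bigO{\nbun^{\expmatmul-1}\nbeq}\subseteq\bigO{\appbastime{\nbun,\vsdim/\nbun}}$ operations by \cref{prop:algo:lin_pab} and $\nbeq\le\vsdim$; the divisions by powers of $\var$ only touch $\bigO{\nbun\vsdim}$ field elements; and forming $\sysR$ at Step~\textbf{7} costs $\bigO{\appbastime{\nbun,\vsdim/\nbun}}$ by the first item of \cref{lem:compute_residual}, since $\deg(\appbas)\le 2\vsdim/\nbun$ and $\nu<\nbun$. The core is the loop: pass $i$ consists of one $\algoname{PM-Basis}$ call on an $\nbun\times\mu_i$ instance at order $2^{i-1}$ — costing $\bigO{(1+\mu_i/\nbun)\appbastime{\nbun,2^{i-1}}}$ by \cref{prop:algo:dac_ab} — together with two polynomial matrix products of the same shape, absorbed into that bound; so the loop costs $\bigO{\sum_{i=1}^{\ell}(1+\mu_i/\nbun)\appbastime{\nbun,2^{i-1}}}$. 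I would split this sum: the contribution of the ``$1$'' is $\sum_{i=1}^{\ell}\appbastime{\nbun,2^{i-1}}\le\appbastime{\nbun,2^\ell}$ using $\appbastime{\nbun,2d}\ge 2\appbastime{\nbun,d}$ (immediate from the definition of $\appbastime{\cdot,\cdot}$), which together with $2^\ell\le 2\vsdim/\nbun$ gives $\bigO{\appbastime{\nbun,\vsdim/\nbun}}$; the contribution of the ``$\mu_i/\nbun$'' is rewritten as $\tfrac1\nbun\sum_{i=1}^{\ell}\mu_i2^{i-1}\cdot\tfrac{\appbastime{\nbun,2^{i-1}}}{2^{i-1}}$ and bounded using that $\appbastime{\nbun,d}/d$ is nondecreasing, that $\sum_{i=1}^{\ell}\mu_i2^{i-1}\le\sumVec{\tuple{\tilde\order}}\in\bigO{\vsdim}$ (a double count over columns, as $\mu_i=\card{\{j\mid\tilde\order_j\ge 2^i\}}$ and $\tuple{\tilde\order}$ is nonincreasing), and again $2^\ell\le 2\vsdim/\nbun$ — also landing in $\bigO{\appbastime{\nbun,\vsdim/\nbun}}$. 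The main obstacle, and where essentially all of the technical effort lies, is exactly this telescoping — turning the ``column count times order stays $\bigO{\vsdim}$'' phenomenon into a clean bound while keeping the argument valid for an abstract cost function $\polmatmultime{\cdot,\cdot}$ — together with the bookkeeping matching the algorithm's $\tuple{\tilde\order}$-based quantities ($\ell,\mu_i,\nu$) to the $\orders$-based data in the output specification.
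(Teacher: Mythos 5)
Your proposal follows the paper's proof very closely: the same loop invariant established via items~$(i)$ and~$(iii)$ of \cref{lem:recursiveness} (after passing to $\tuple{\tilde\order}$ and $\mat{\tilde{F}}$), the same identification of the residual module as $\modAppCustom{2^{i-1}}{\res}$, and the same two inequalities $\sum_i \mu_i 2^{i-1} \le \sumVec{\tuple{\tilde\order}} \le 2\vsdim$ and $\nbun 2^\ell \le 2\vsdim$ driving the telescoping of the loop cost. The one place to be careful is the last link of your cost chain: the monotonicity $\appbastime{\nbun,2d}/(2d) \ge \appbastime{\nbun,d}/d$ that you use does hold along powers of two (since $\appbastime{\nbun,2d} = \polmatmultime{\nbun,2d} + 2\appbastime{\nbun,d}$), which justifies reducing to the $i=\ell$ term, but $\appbastime{\nbun,d}/d$ is \emph{not} nondecreasing for arbitrary real $d$ without a super-linearity hypothesis on $\polmatmultime$ (take $\polmatmultime{\nbun,\cdot}$ constant), and $\vsdim/\nbun$ need not lie in $[2^{\ell-1},2^\ell)$; so the jump from $\tfrac{\vsdim}{\nbun}\cdot\tfrac{\appbastime{\nbun,2^{\ell-1}}}{2^{\ell-1}}$ to $\bigO{\appbastime{\nbun,\vsdim/\nbun}}$ should be done as the paper does, by unfolding $\appbastime{\nbun,\vsdim/\nbun} = \sum_{j} 2^j \polmatmultime{\nbun,2^{-j}\vsdim/\nbun}$, bounding $\polmatmultime{\nbun,2^{-j}\vsdim/\nbun} \ge \polmatmultime{\nbun,2^{\lfloor\log(\vsdim/\nbun)\rfloor - j}}$ termwise via monotonicity of $\polmatmultime$ in its second argument, and using $\lfloor\log(\vsdim/\nbun)\rfloor \ge \ell-1$, rather than by a second appeal to monotonicity of $\appbastime{\nbun,d}/d$.
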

\begin{proof}
  Steps~\textbf{1} and~\textbf{2} compute $\tuple{\tilde\order}$ and
  $\mat{\tilde{F}}$ such that $(\tilde\order_i)_{i\ge\nbun}$ are the smallest powers of
  two larger than or equal to $(\order_i)_{i\ge\nbun}$, and $\modApp =
  \modAppCustom{\tuple{\tilde\order}}{\mat{\tilde{F}}}$ (see
  \cref{rmk:pmbasis_unbalancedorders}). Step~\textbf{3} defines parameters, and
  Step~\textbf{4} computes the $\shifts$-Popov basis $\appbas$ of
  $\modAppCustom{(1,\ldots,1)}{\mat{\tilde{F}}}$.

  Then, \cref{lem:recursiveness} shows that we have the following invariant for
  the loop at Step~\textbf{5}: at the end of the iteration $i$, $\appbas$ is an
  $\shifts$-ordered weak Popov approximant basis for $\mat{\tilde{F}}$ at order
  $(2^i,\ldots,2^i,\tilde\order_{\mu_i+1},\ldots,\tilde\order_\nbeq)$. Thus,
  after exiting the loop, $\appbas$ is an $\shifts$-ordered weak Popov
  approximant basis for $\mat{\tilde{F}}$ at order
  \[
   (2^\ell,\ldots,2^\ell,\tilde\order_{\mu_{\ell}+1},\ldots,\tilde\order_\nbeq)
   = (\tilde\order_\nbun,\ldots,\tilde\order_\nbun,\tilde\order_{\nbun+1},\ldots,\tilde\order_\nbeq)
   = \tuple{\tilde\order} - (\ordersR,\unishift).
  \]
  By choice of $\mat{\tilde{F}}$, we obtain that $\appbas$ is an approximant
  basis for $\sys$ at order
  \[
    \orders - (\ordersR,\unishift) =
    (\order_\nbun,\ldots,\order_\nbun,\order_{\nbun+1},\ldots,\order_\nbeq).
  \]
  In particular, it follows from \cref{lem:recursiveness} that $\mat{Q}\appbas$
  is a basis of $\modApp$.

  Now, concerning the cost bound, \cref{prop:algo:lin_pab} states that
  Step~\textbf{4} costs $\bigO{\nbun^{\expmatmul-1}\nbeq}$ operations, since
  $\nbeq\ge\nbun$. This is within $\bigO{\appbastime{\nbun,\vsdim/\nbun}}$,
  since we have $\nbun^{\expmatmul-1}\nbeq \in
  \bigO{\appbastime{\nbun,\nbeq/\nbun}}$, with $\nbeq\le\vsdim$. The resulting
  basis $\appbas$ has degree at most $1$.

  To obtain the residual at Step~\textbf{5.a}, we compute $\appbas
  [\matcol{\mat{\tilde{F}}}{1}|\cdots|\matcol{\mat{\tilde{F}}}{\mu_i}] \bmod
  \var^{2^i}$; this is done in $\bigO{\frac{\mu_i}{\nbun}
  \polmatmultime{\nbun,2^i}}$ operations since $\mu_i \ge \nbun$. Then,
  according to \cref{prop:algo:dac_ab}, Step~\textbf{5.b} uses
  $\bigO{\frac{\mu_i}{\nbun} \appbastime{\nbun,2^{i-1}}}$ operations and
  $\deg(\appbas_i) \le 2^{i-1}$. Thus, at Step~\textbf{5.c} we multiply two
  $\nbun\times\nbun$ matrices of degree at most $2^{i-1}$, which uses
  $\bigO{\polmatmultime{\nbun,2^{i}}}$ operations.

  Altogether, the loop at Step~\textbf{5} uses
  $
    \bigO{\sum_{1\le i\le\ell} \frac{\mu_i}{\nbun} \appbastime{\nbun,2^{i-1}}}
    \;\subseteq\; \bigO{\appbastime{\nbun,\vsdim/\nbun}}
  $
  operations in $\field$, where we prove the inclusion as follows. By
  definition of $\appbastime{\cdot,\cdot}$,
  \begin{align*}
    \sum_{1\le i\le\ell} \frac{\mu_i}{\nbun} \appbastime{\nbun,2^{i-1}} & \;=\;
    \sum_{1\le i\le\ell,\;0\le k < i} \frac{\mu_i}{\nbun} 2^{i-1-k} \polmatmultime{\nbun,2^{k}} \\
    & \;\le\; 2 \sum_{0\le k<\ell} 2^{-k} \frac{\vsdim}{\nbun} \polmatmultime{\nbun,2^{k}}
    \;\le\; 4 \appbastime{\nbun,\vsdim/\nbun}.
  \end{align*}
  Both inequalities are consequences of the construction of
  $\tuple{\tilde\order}$: the first one follows from
  \[
    2\vsdim \ge \sumVec{\tuple{\tilde\order}} =
    \tilde\order_1 + \cdots + \tilde\order_\nu
    + (\mu_\ell-\mu_{\ell+1}) 2^\ell + \cdots + (\mu_1-\mu_2) 2 + (\nbeq-\mu_1)
    \ge \textstyle\sum_{1\le i\le\ell} \mu_i 2^{i-1},
  \]
  while the second one comes from the fact that we have $\ell-1 \le
  \log(\vsdim/\nbun)$, since
  \[
    \nbun 2^{\ell} = \nbun \tilde\order_\nbun \le \tilde\order_1 + \cdots +
    \tilde\order_\nbun \le \sumVec{\tuple{\tilde\order}} \le 2\vsdim.
  \]

  Finally, the matrix $\sysR$ at Step~\textbf{7} is directly obtained from the
  product $\appbas [\matcol{\sys}{1} | \cdots | \matcol{\sys}{\nu}]$. This is
  computed in $\bigO{\polmatmultime{\nbun,\vsdim/\nbun}}$ operations, according
  to the first item of \cref{lem:compute_residual} with $\order=2\vsdim/\nbun$,
  noting that $(\nu+\vsdim/(\order+1)) / \nbun < 2$ since $\nu<\nbun$.
\end{proof}

As a result, we obtain the second item in \cref{thm:pm_basis}; we only consider
the case $\nbeq\ge\nbun$, hence also $\vsdim\ge\nbun$, since otherwise the
claimed bound follows from that of the first item in the same theorem.  We
first apply \cref{algo:reduce_coldim} to reduce the column dimension in
$\bigO{\appbastime{\nbun,\vsdim/\nbun}}$ operations.  This gives a first basis,
in $\shifts$-ordered weak Popov form, and a new instance
$(\ordersR,\sysR,\shiftsR)$.  Then we compute a second basis, in
$\shiftsR$-ordered weak Popov form for $\modAppCustom{\ordersR}{\sysR}$, via
\cref{algo:dac_ab}; since $\sysR$ has fewer columns than rows by construction,
this uses $\bigO{\appbastime{\nbun,\max(\orders)}}$ operations.

Multiplying both bases costs
$\bigO{\polmatmultime{\nbun,\vsdim/\nbun+\max(\orders)}}$ and yields an
$\shifts$-ordered weak Popov basis of $\modApp$. To obtain the canonical basis,
one would rather deduce the $\shifts$-minimal degree $\minDegs$ from the two
bases (without computing the product), and then either restart the process with
the shift $-\minDegs$ (similarly to \cref{algo:popov_pmbasis}) or call the more
general algorithm in the next section.

\section{Computing approximant bases when the minimal degree is known}
\label{sec:knowndeg}

Let $(\orders,\sys,\shifts)$ be the input of \cref{pbm:pab}, and suppose that
the $\shifts$-minimal degree $\minDegs \in \NN^\rdim$ of $\modApp$ is known.
In this context, \cref{lem:mindeg_shift} suggests that we may focus on
computing a basis $\reduced$ of $\modApp$ which is $-\minDegs$-minimal; then,
the $\shifts$-Popov basis can be easily retrieved via the constant
transformation $\leadingMat[-\minDegs]{\reduced}^{-1} \reduced$.  An obstacle
towards computing $\reduced$ efficiently is the possible unbalancedness of
$\minDegs=\cdeg{\reduced}$, which also impacts the shift $-\minDegs$.  As
sketched in \cref{sec:intro} and in \cref{fig:linearizations} (bottom), we
handle this in \cref{algo:knowndeg_pab} by using the partial linearizations
from \citep{Storjohann06} which allow us to compute $\reduced$ using
essentially one call to \algoname{ReduceColDim} and then one call to
\algoname{PM-Basis}.  We defer the proof of \cref{prop:algo:knowndeg_pab} to
\cref{sec:knowndeg:proof}, and we first present the partial linearizations.

\begin{algobox}[htb]
  \algoInfo
  {Popov basis for known minimal degree}
  {KnownDegAppBasis}
  {algo:knowndeg_pab}

  \dataInfos{Input}{
    \item order $\orders \in \orderSpace$,
    \item matrix $\sys\in\polMatSpace[\nbun][\nbeq]$ with $\cdeg{\sys} <
      \orders$,
    \item shift $\shifts \in \shiftSpace$,
    \item the $\shifts$-minimal degree $\minDegs \in \NN^\nbun$ of $\modApp$.
  }

  \dataInfo{Output}{
    the $\shifts$-Popov basis of $\modApp$.
  }

  \algoSteps{
    \item \comment{Output column linearization $\Rightarrow$ balanced minimal degree} \\
      $\degExp \leftarrow \lceil \sumVec{\orders} / \nbun \rceil$ \\
      $(-\expand{\minDegs},\expandMat,(\quoExp_i)_{1\le i\le\nbun},\expand{\nbun}) \assign 
      \algoname{ColParLin}(-\minDegs,\degExp,\max(-\minDegs))$
      \eolcomment{see \cref{sec:knowndeg:output_parlin}}
    \item \comment{\algoname{ReduceColDim} $\Rightarrow$ fewer columns than rows} \\
      permute $\orders$ into nonincreasing order, and permute the columns of $\sys$ accordingly \\
      $(\ordersR,\sysR,-\minDegsR,\reduced_1) \assign
      \left\{
      \begin{array}{lc}
        \algoname{ReduceColDim}(\orders,\expandMat \sys \bmod \xDiag{\orders},-\expand{\minDegs}) & \algoword{if } \nbeq \ge \expand{\nbun} \\
        (\orders,\expandMat \sys \bmod \xDiag{\orders},-\expand{\minDegs},\idMat[\expand{\nbun}]) & \algoword{if } \nbeq < \expand{\nbun} 
      \end{array}\right.$ \\
      $\nu \assign$ the number of columns of $\sysR$ \eolcomment{$\sysR \in \polMatSpace[\expand{\nbun}][\nu]$ with $\nu < \expand{\nbun}$}
    \item \comment{Overlapping linearization $\Rightarrow$ balanced order and dimensions} \\  
      Construct $\ovlpLinOrd{\degExp}{\ordersR} \in
      \ZZp^{\expand{\nbun}+\expand{\nu}}$ and
      $\ovlpLin{\ordersR,\degExp}{\sysR} \in
      \polMatSpace[(\expand{\nbun}+\expand{\nu})][(\nu+\expand{\nu})]$ as in
      \cref{dfn:ovlplin} \\
      $\shifts[t] \assign (-\minDegsR,-\degExp,\ldots,-\degExp) \in \ZZ_{\le 0}^{\expand{\nbun}+\expand{\nu}}$
    \item \comment{Compute approximant basis for linearized instance} \\
      $\hat{d} \assign \max(\ovlpLinOrd{\degExp}{\ordersR})$;~ $\tuple{\Delta} \assign (\hat{d},\ldots,\hat{d}) - \ovlpLinOrd{\degExp}{\ordersR}$ \\
      $\expand{\appbas} \assign
      \algoname{PM-Basis}(\hat{d},\ovlpLin{\ordersR,\degExp}{\sysR}\xDiag{\tuple{\Delta}},\shifts[t])$
    \item \comment{Deduce basis for original instance and normalize} \\
      $\reduced_2 \assign $ leading principal $\expand{\nbun}\times\expand{\nbun}$ submatrix of $\expand{\appbas}$ \\
      $\reduced \assign$ submatrix of $\reduced_2\reduced_1\expandMat$ formed by its rows
      at indices $\quoExp_1+\cdots+\quoExp_i$ for $1\le i\le\nbun$ \\
      \algoword{Return} $\leadingMat[-\minDegs]{\reduced}^{-1} \reduced$
  }
\end{algobox}

\begin{proposition}
  \label{prop:algo:knowndeg_pab}
  \cref{algo:knowndeg_pab} is correct and uses
  $\bigO{\appbastime{\nbun,\vsdim/\nbun}}$ operations in $\field$, where we
  assume that $\vsdim= \sumVec{\orders} \in \Omega(\nbun)$.
\end{proposition}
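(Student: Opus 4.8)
The plan is to derive correctness by chaining the three reductions of Steps~\textbf{1}--\textbf{3}: each of them replaces the current instance by a new one from which a solution can be recovered, and after Step~\textbf{1} every manipulated matrix has $\bigO{\nbun}$ rows and columns and degree $\bigO{\vsdim/\nbun}$, so that Step~\textbf{4} applies \algoname{PM-Basis} to a balanced instance.  The cost bound then follows by adding the already-established costs of \cref{algo:reduce_coldim} and of \algoname{PM-Basis} to those of a few polynomial matrix products, all at dimension $\Theta(\nbun)$ and degree $\bigO{\vsdim/\nbun}$.

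For correctness, I would argue as follows, using $\sumVec{\minDegs}\le\vsdim$ (\cref{lem:degdet_appbasis}) and $\degExp=\lceil\vsdim/\nbun\rceil$.  The output column linearization of Step~\textbf{1} (whose properties are to be proved in \cref{sec:knowndeg:output_parlin}) produces $\expandMat\in\polMatSpace[\expand{\nbun}][\nbun]$ with $\nbun\le\expand{\nbun}\le2\nbun$ together with the linearized instance $(\orders,\expandMat\sys\bmod\mods,-\expand{\minDegs})$; writing $\module=\modAppCustom{\orders}{\expandMat\sys\bmod\mods}$, its properties should be that $\module$ has $-\expand{\minDegs}$-minimal degree $\expand{\minDegs}$ with $\max(\expand{\minDegs})\le\degExp$, and that the $\shifts$-Popov basis of $\modApp$ is recovered from any $-\expand{\minDegs}$-reduced basis $\reduced'$ of $\module$ by forming $\reduced'\expandMat$, keeping its rows at indices $\quoExp_1+\cdots+\quoExp_i$ for $1\le i\le\nbun$, and left-normalizing.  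Then \cref{prop:algo:reduce_coldim} applied to the column-reordered instance (or, when $\nbeq<\expand{\nbun}$, trivially with $\reduced_1=\idMat[\expand{\nbun}]$) provides a $-\expand{\minDegs}$-ordered weak Popov basis $\reduced_1$ of $\modAppCustom{\orders-(\ordersR,\unishift)}{\expandMat\sys\bmod\mods}$ with $\deg(\reduced_1)\le2\vsdim/\expand{\nbun}$, a residual $\sysR$ with $\nu<\expand{\nbun}$ columns and $\sumVec{\ordersR}\le\vsdim$, and the fact that $\mat{Q}\reduced_1$ is a basis of $\module$ for any basis $\mat{Q}$ of $\modAppCustom{\ordersR}{\sysR}$; combining this with \cref{lem:recursiveness}\,$(iv)$ and the minimal-degree identity above shows that $\modAppCustom{\ordersR}{\sysR}$ has $-\minDegsR$-minimal degree $\minDegsR$, with $\max(\minDegsR)\in\bigO{\vsdim/\nbun}$ by the bounds on $\deg(\reduced_1)$ and $\max(\expand{\minDegs})$.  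Next, the overlapping linearization of Step~\textbf{3} (\cref{dfn:ovlplin}) must be shown to transform $(\ordersR,\sysR,\shifts[t])$, with $\shifts[t]=(-\minDegsR,-\degExp,\ldots,-\degExp)$, into an instance of uniform order $\hat{d}=\max(\ovlpLinOrd{\degExp}{\ordersR})\in\bigO{\degExp}$ and dimensions $\Theta(\nbun)$, whose $\shifts[t]$-ordered weak Popov basis $\expand{\appbas}$ is computed by \algoname{PM-Basis} (\cref{prop:algo:dac_ab}) and whose leading principal $\expand{\nbun}\times\expand{\nbun}$ submatrix $\reduced_2$ is a $-\minDegsR$-reduced basis of $\modAppCustom{\ordersR}{\sysR}$.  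By \cref{lem:recursiveness}\,$(i)$--$(iii)$, $\reduced_2\reduced_1$ is then a $-\expand{\minDegs}$-reduced basis of $\module$, hence has column degree $\expand{\minDegs}$ by \cref{lem:mindeg_shift}; so the row submatrix $\reduced$ of $\reduced_2\reduced_1\expandMat$ extracted in Step~\textbf{5} is a $-\minDegs$-reduced basis of $\modApp$ with column degree $\minDegs$, and \cref{lem:mindeg_shift} applied once more yields that $\leadingMat[-\minDegs]{\reduced}^{-1}\reduced$ is the $\shifts$-Popov basis of $\modApp$.

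Turning to the cost, Step~\textbf{1}, the formation of $\expandMat\sys\bmod\mods$, and the overlapping-linearization construction only rearrange coefficients (every row involved being a monomial times a unit vector), for a total of $\bigO{\vsdim}$ operations.  By \cref{prop:algo:reduce_coldim}, Step~\textbf{2} costs $\bigO{\appbastime{\expand{\nbun},\vsdim/\expand{\nbun}}}\subseteq\bigO{\appbastime{\nbun,\vsdim/\nbun}}$, using $\nbun\le\expand{\nbun}\le2\nbun$ and $\sumVec{\orders}=\vsdim$.  Step~\textbf{4} acts on an instance with uniform order $\hat{d}\in\bigO{\vsdim/\nbun}$ and row and column dimensions in $\bigO{\nbun}$---which one checks from $\nu<\expand{\nbun}\le2\nbun$, $\sumVec{\ordersR}\le\vsdim$, and $\degExp=\lceil\vsdim/\nbun\rceil$---so its cost is $\bigO{\appbastime{\nbun,\vsdim/\nbun}}$ by \cref{prop:algo:dac_ab,rmk:pmbasis_unbalancedorders}.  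In Step~\textbf{5}, $\reduced_1$ and $\reduced_2$ have degree $\bigO{\vsdim/\nbun}$ so $\reduced_2\reduced_1$ costs $\bigO{\polmatmultime{\nbun,\vsdim/\nbun}}$; multiplying by $\expandMat$ and extracting rows cost $\bigO{\nbun\vsdim}\subseteq\bigO{\polmatmultime{\nbun,\vsdim/\nbun}}$; and the final product of the constant matrix $\leadingMat[-\minDegs]{\reduced}^{-1}$ by $\reduced$, whose column degrees $\minDegs$ sum to at most $\vsdim$, is done in $\bigO{\polmatmultime{\nbun,\vsdim/\nbun}}$ after a column partial linearization of $\reduced$.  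All these terms lie in $\bigO{\appbastime{\nbun,\vsdim/\nbun}}$, the assumption $\vsdim\in\Omega(\nbun)$ ensuring $\lceil\vsdim/\nbun\rceil\in\Theta(\vsdim/\nbun)$.

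The main difficulty will be the bookkeeping in the correctness part: formulating precisely, and then chaining at each of the three junctions, the two facts that a partial linearization carries the known, balanced shifted minimal degree to its new instance and that the sought basis occupies a prescribed set of rows (respectively the leading principal block) of the computed linearized basis, so that \cref{lem:recursiveness,lem:mindeg_shift} apply.  The degree and dimension estimates behind the cost analysis, by contrast, are immediate consequences of $\sumVec{\minDegs}\le\vsdim$, $\sumVec{\ordersR}\le\vsdim$, and $\degExp=\lceil\vsdim/\nbun\rceil$.
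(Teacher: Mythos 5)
Your proposal follows essentially the same route as the paper's proof — linearize to balance the minimal degree, reduce the column dimension with \algoname{ReduceColDim}, overlap-linearize into a balanced instance for \algoname{PM-Basis}, then unfold and left-normalize — and the degree and dimension estimates in your cost analysis are correct.

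There is, however, a real imprecision in the correctness argument that would not survive formalization as stated. You claim that ``the $\shifts$-Popov basis of $\modApp$ is recovered from any $-\expand{\minDegs}$-reduced basis $\reduced'$ of $\module$ by forming $\reduced'\expandMat$, keeping its rows at indices $\quoExp_1+\cdots+\quoExp_i$, and left-normalizing,'' and then track $\reduced_2$ as a $-\minDegsR$-reduced basis and $\reduced_2\reduced_1$ as a $-\expand{\minDegs}$-reduced basis. But reducedness alone is not enough for the row-extraction step: a reduced basis may have its pivots scattered over its rows, so the rows at indices $\quoExp_1+\cdots+\quoExp_i$ need not be the ones with pivot index $i$ after multiplication by $\expandMat$, and the extracted submatrix need not be a basis of $\modApp$. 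The recovery is justified in the paper via \cref{cor:parlin_appbas_knowndeg} and the last part of \cref{lem:parlin_app}, both of which require the stronger $\expand{\shifts}$-\emph{ordered weak Popov} form so that the pivot of row $\quoExp_1+\cdots+\quoExp_i$ is guaranteed to sit on the diagonal. Likewise, $\reduced_2$ must be in $-\minDegsR$-ordered weak Popov form (which is exactly what \cref{lem:ovlplin_appbas} gives) and item $(iii)$ of \cref{lem:recursiveness} must be invoked (not just item $(ii)$) to propagate ordered weak Popov form to $\reduced_2\reduced_1$. The algorithm does produce these stronger forms, so the fix is only a matter of stating them; but as written your chain of ``reduced'' claims does not support the final extraction.
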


\subsection{Output column linearization to balance the output degrees}
\label{sec:knowndeg:output_parlin}

Here, we detail the transformation used in Step~\textbf{1} of
\cref{algo:knowndeg_pab}, for which we closely follow ideas from
\citep[Sec.\,3]{Storjohann06} and \citep[Sec.\,6]{ZhoLab12}. Yet, there are a
few differences due to our goal of handling arbitrary orders \(\orders\) and
computing bases in Popov form.

This transformation corresponds to modifying the input matrix $\sys$ and the
input shift $\shifts$ so that the computed basis $\expand{\appbas}$ is a column
partial linearization of the sought approximant basis $\appbas$, the benefit
being that $\expand{\appbas}$ has uniformly small degrees.  Like all partial
linearizations, this increases the matrix dimensions, $\nbun$ in this case.
This transformation is thus mostly useful when we are able to predict which
columns of $\appbas$ may have large degree: then, we only perform partial
linearization for the columns that require it, and $\nbun$ is typically at most
doubled.  If the prediction was not completely accurate, this will only yield a
subset of the rows of $\appbas$ (see
\cref{subsec:weakly_unbalanced_around_max}).

When the shifted minimal degree is known, it directly gives the column degree
of the sought basis $\appbas$.  Thanks to this information, the original
transformation of \citet[Sec.\,3]{Storjohann06} allows us to reduce to the case
where the output has degree in $\bigO{\vsdim/\nbun}$, and yet to retrieve the
full Popov approximant basis $\appbas$. This has already been stated in
\citep[Lem.\,4.2]{JeNeScVi16} in a more general context; for the purpose of
this section, the latter result would be sufficient.

Still, in \cref{subsec:weakly_unbalanced_around_max} we will deal with
situations where the $\shifts$-minimal degree is not available a priori, but
where assumptions on the shift allow us to guess the locations of large degree
columns. Hence we present, in the next lemma, the details of a more
general transformation similar to that in \citep[Sec.\,6]{ZhoLab12} but for
arbitrary orders \(\orders\); in \cref{lem:parlin_appbas_knowndeg}, we apply it
to the specific case where the minimal degree is known. For more insight into
this transformation, we refer the reader to the latter reference as well as
to \citep[Sec.\,3]{Storjohann06}.

From the next lemma we derive a procedure \algoname{ColParLin} which, on input
$(\shifts,\degExp,\sdiff)$, returns the partial linearization objects
$(\expand{\shifts},\expandMat,(\quoExp_i)_{1\le i\le\nbun},\expand{\nbun})$.
It is used in \cref{algo:knowndeg_pab,algo:zhou_labahn_max}. The parameter
\(\degExp\) is a degree for partial linearization: roughly, columns of degree
more than \(\degExp\) will be split into several columns of degree less than
\(\degExp\), or shift entries that are more than \(\degExp\) will be split into
several shift entries that are less than \(\degExp\). On the other hand, the
parameter \(\sdiff\) has an impact on the degree threshold beyond which we can
recover the approximants for the original instance from those for the partially
linearized instance, as stated in \cref{lem:parlin_appbas}.

\begin{lemma}
  \label{lem:parlin_app}
  Let $\shifts \in \shiftSpace$ and consider two parameters $\degExp
  \in \ZZp$ and $\sdiff\in\ZZ$ for partial linearization.

  Define the shift $\shifts[t]  = (\shift[t]{1},\ldots,\shift[t]{\nbun}) =
  \shifts-\max(\shifts)+\sdiff \in \ZZ_{\le \sdiff}^\nbun$, and for each
  $i\in\{1,\ldots,\nbun\}$ write $-\shift[t]{i} = (\quoExp_i -1) \degExp +
  \remExp_i$ with $\quoExp_i = \lceil -\shift[t]{i}/ \degExp \rceil$ and $1 \le
  \remExp_i \le \degExp$ if $\shift[t]{i} < 0$, and with $\quoExp_i = 1$ and
  $\remExp_i = -\shift[t]{i}$ if $\shift[t]{i} \ge 0$.  Let $\expand{\nbun} =
  \quoExp_1 + \cdots + \quoExp_\nbun$, and define the shift
  $\expand{\shifts}\in \ZZ_{\le 0}^{\expand{\nbun}}$ as
  \begin{equation}
    \label{eqn:expandMinDegs_bis}
    \expand{\shifts} =
    ( \underbrace{-\degExp, \ldots, -\degExp, -\remExp_1}_{\quoExp_1}, \ldots,
    \underbrace{-\degExp, \ldots, -\degExp, -\remExp_\nbun}_{\quoExp_\nbun} ).
  \end{equation}
  We have  $-\degExp \le \expand{\shifts} \le \max(\sdiff,-1)$ and \(\nbun \le
  \expand{\nbun}\), and
  if $\sdiff\ge 0$ then $\expand{\nbun} \le \nbun +
  \sumVec{\!\max(\shifts)-\shifts} / \degExp$.

  Define also the compression-expansion matrix $\expandMat \in
  \polMatSpace[\expand{\nbun}][\nbun]$ as the transpose of
  \begin{equation}
    \label{eqn:expandMat}
    \trsp{\expandMat} = 
    \begin{bmatrix}
      1 & \var^\degExp & \cdots & \var^{(\quoExp_1-1)\degExp} \\
        &              &        &                             & \;\;\; \ddots \\
        &              &        &                             &                & 1 & \var^\degExp & \cdots & \var^{(\quoExp_\nbun-1)\degExp}
  \end{bmatrix} . 
  \end{equation}
  Then, for each $i\in\{1,\ldots,\nbun\}$,
  \begin{itemize}
    \item If a vector $\expand{\app} \in \polMatSpace[1][\expand{\nbun}]$ has
      $\expand{\shifts}$-pivot index $\quoExp_1+\cdots+\quoExp_i$ and
      $\expand{\shifts}$-pivot degree $\expand{\gamma}$, then
      $\expand{\app}\expandMat$ has $\shifts$-pivot index $i$ and
      $\shifts$-pivot degree $\expand{\gamma} + (\quoExp_i-1)\degExp =
      \expand{\gamma} - \shift[t]{i} - \remExp_i$.
    \item If a vector $\app\in\appSpace$ has $\shifts$-pivot index $i$ and
      $\shifts$-pivot degree $\gamma \ge -\shift[t]{i}$, then $\app =
      \expand{\app}\expandMat$ for some $\expand{\app} \in
      \polMatSpace[1][\expand{\nbun}]$ which has $\expand{\shifts}$-pivot index
      $\quoExp_1+\cdots+\quoExp_i$ and $\expand{\shifts}$-pivot degree
      $\gamma+\shift[t]{i} + \remExp_i$.
  \end{itemize}
\end{lemma}
\begin{proof}
  Since \(\quoExp_i\ge 1\) for \(1\le i\le \nbun\), we have \(\nbun \le
  \expand{\nbun}\). Besides, the bound on $\expand{\shifts}$ follows from
  $\min(-\sdiff,1) = \min(-\shifts[t],1) \le \remExp_i \le \degExp$, which
  holds by definition. Now, if $\sdiff\ge 0$, for all $i$ we have $\quoExp_i
  \le 1 + (\sdiff-\shift[t]{i})/\degExp$ since $\sdiff \ge \shift[t]{i}$,
  hence the upper bound on $\expand{\nbun}$.

  Let $\expand{\app}$ be as in the first item, and let $\app =
  \expand{\app}\expandMat$.  We write $\expand{\app} = [\expand{p}_j]_{1\le
  j\le \expand{\nbun}}$,  $\app = [p_j]_{1\le j\le \nbun}$, and
  $\expand{\shifts} = [\expand{s}_j]_{1\le j\le \expand{\nbun}}$.  Our
  assumption on the $\expand{\shifts}$-pivot of $\expand{\app}$ implies that
  $\deg(\expand{p}_j) \le \expand{\gamma} - \remExp_i - \shift[\expand{s}]{j}$
  holds for $1\le j\le\expand{\nbun}$, with equality if
  $j=\quoExp_1+\cdots+\quoExp_i$ (in which case
  \(\shift[\expand{s}]{j}=-\remExp_i\)) and strict inequality if
  $j>\quoExp_1+\cdots+\quoExp_i$.  By construction, $p_j =
  \sum_{1\le k\le \quoExp_j} \expand{p}_{\quoExp_1+\cdots+\quoExp_{j-1}+k}
  \var^{(k-1)\degExp}$ holds for $1\le j\le \nbun$, hence
  \begin{align*}
    \deg(p_j) & \le  \max_{1\le k \le \quoExp_j} \left(\expand{\gamma} - \remExp_i - \expand{s}_{\quoExp_1+\cdots+\quoExp_{j-1}+k} + (k-1)\degExp\right) \\
              & = \expand{\gamma} - \remExp_i + \remExp_j + (\quoExp_j-1)\degExp = \expand{\gamma} - \remExp_i -\shift[t]{j},
  \end{align*}
  with equality if $j=i$ and strict inequality if $j>i$.  Thus, $\app$ has
  $\shifts[t]$-pivot index $i$ and $\shifts[t]$-pivot degree $\expand{\gamma} -
  \remExp_i - \shift[t]{i}$; its $\shifts$-pivot index and degree are the same
  since $\shifts$ and $\shifts[t]$ only differ by a constant.

  Let $\app$ be as in the second item, and write $\app = [p_j]_{1\le j\le
  \nbun}$.  We define $\expand{\app} = [\expand{p}_k]_{1\le k\le\expand{\nbun}}
  \in \polMatSpace[1][\expand{\nbun}]$ as the (unique) vector such that $\app =
  \expand{\app}\expandMat$ and $\deg(\expand{p}_{k}) < \degExp$ if $k \not\in
  \{\quoExp_1+\cdots+\quoExp_j, 1\le j\le \nbun\}$.  Thus, the entry
  $\expand{p}_{\quoExp_1+\cdots+\quoExp_j}$ is the nonnegative degree part of
  $\var^{-(\quoExp_j-1)\degExp} p_j$.  In particular, for $j=i$, since by
  assumption $\deg(p_i) = \gamma \ge \max(-\shift[t]{i},0) \ge
  (\quoExp_i-1)\degExp$, we obtain that
  $\expand{p}_{\quoExp_1+\cdots+\quoExp_i}$ has degree exactly $\deg(p_i) -
  (\quoExp_i-1)\degExp = \gamma + \shift[t]{i} + \remExp_i$, which we denote by
  $\expand{\gamma}$.  Then, our assumption on the $\shifts$-pivot index and
  degree of $\app$, which are the same as its $\shifts[t]$-pivot index and
  degree, implies that
  \begin{align*}
    \deg(\expand{p}_{\quoExp_1+\cdots+\quoExp_j})
    \le \deg(p_j) - (\quoExp_j -1)\degExp
    & \le \gamma + \shift[t]{i} - \shift[t]{j} - (\quoExp_j -1)\degExp \\
    & = \expand{\gamma} - \remExp_i + \remExp_j
    = \expand{\gamma} + \expand{s}_{\quoExp_1+\cdots+\quoExp_i} - \expand{s}_{\quoExp_1+\cdots+\quoExp_j},
  \end{align*}
  where the second inequality is strict if $j>i$.  Furthermore, for $k \not\in
  \{\quoExp_1+\cdots+\quoExp_j, 1\le j\le \nbun\}$, the requirement
  $\deg(\expand{p}_k) < \degExp = -\expand{s}_k$ implies that
  $\deg(\expand{p}_k) + \expand{s}_k < 0 \le \gamma + \shift[t]{i} =
  \expand{\gamma} - \remExp_i =
  \expand{\gamma} + \expand{s}_{\quoExp_1+\cdots+\quoExp_i}$. Thus,
  $\expand{\app}$ has $\expand{\shifts}$-pivot index
  $\quoExp_1+\cdots+\quoExp_i$ and $\expand{\shifts}$-pivot degree
  $\expand{\gamma}$.
\end{proof}

\begin{lemma}
  \label{lem:parlin_appbas}
  Let $\orders \in \orderSpace$, let $\sys \in \sysSpace$ with
  $\cdeg{\sys}<\orders$, and let $\shifts \in \shiftSpace$. Let $\degExp \in
  \ZZp$ and $\sdiff\in\ZZ$. Below, we use notation from the construction in
  \cref{lem:parlin_app} on input $(\shifts,\degExp,t)$, and in particular,
  $(\expand{\shifts},\expandMat,(\quoExp_i)_{1\le i\le\nbun},\expand{\nbun}) =
  \algoname{ColParLin}(\shifts,\degExp,t)$. Then, we have $\modApp =
  \modAppCustom{\orders}{\expandMat \sys \bmod \mods}\expandMat$.

  Let $\minDegs = (\minDeg_1,\ldots,\minDeg_\nbun)\in\NN^\nbun$ be the
  $\shifts$-minimal degree of $\modApp$, let $\expand{\appbas} \in
  \polMatSpace[\expand{\nbun}]$ be an $\expand{\shifts}$-ordered weak Popov
  basis of $\modAppCustom{\orders}{\expandMat \sys \bmod \mods}$, and let
  $i\in\{1,\ldots,\nbun\}$.  If $\minDeg_i \ge -\shift[t]{i}$, the approximant
  $\matrow{\expand{\appbas}}{\quoExp_1+\cdots+\quoExp_i}\expandMat \in \modApp$
  has $\shifts$-pivot index $i$ and $\shifts$-pivot degree $\minDeg_i$.
  Furthermore, if $\matrow{\expand{\appbas}}{\quoExp_1+\cdots+\quoExp_i}$ has
  $\expand{\shifts}$-pivot degree larger than $\remExp_i$ (or, equivalently,
  $\rdeg[\expand{\shifts}]{\matrow{\expand{\appbas}}{\quoExp_1+\cdots+\quoExp_i}}
  > 0$), then $\minDeg_i > -\shift[t]{i}$.
\end{lemma}
\begin{proof}
  The inclusion $\modApp \supseteq \modAppCustom{\orders}{\expandMat \sys \bmod
  \mods}\expandMat$ is obvious: any $\expand{\app} \in
  \modAppCustom{\orders}{\expandMat \sys \bmod \mods}$ satisfies $\expand{\app}
  \expandMat \sys = 0 \bmod \mods$ by definition, hence $\expand{\app}
  \expandMat \in \modApp$.  Conversely, from any $\app \in \modApp$ one can
  construct $\expand{\app}$ such that $\app = \expand{\app}\expandMat$, since
  $\expandMat$ contains $\idMat[\nbun]$ as a submatrix; then
  $\expand{\app}\expandMat \sys = \app\sys = 0 \bmod \mods$, hence
  $\expand{\app} \in \modAppCustom{\orders}{\expandMat \sys \bmod \mods}$ and
  therefore $\app \in \modAppCustom{\orders}{\expandMat \sys \bmod
  \mods}\expandMat$.

  Now, let $\expand{\app} =
  \matrow{\expand{\appbas}}{\quoExp_1+\cdots+\quoExp_i}$. The above paragraph
  shows $\expand{\app}\expandMat \in \modApp$. Since $\expand{\appbas}$ is in
  $\expand{\shifts}$-ordered weak Popov form, $\expand{\app}$ has
  $\expand{\shifts}$-pivot index $\quoExp_1+\cdots+\quoExp_i$; let
  $\expand{\gamma}$ be the $\expand{\shifts}$-pivot degree of $\expand{\app}$.

  From the first item in \cref{lem:parlin_app}, we obtain that
  $\expand{\app}\expandMat$ has $\shifts$-pivot index $i$ and $\shifts$-pivot
  degree $\expand{\gamma} - \shift[t]{i} - \remExp_i$; this must be at least
  $\minDeg_i$ by minimality of $\minDegs$.  On the other hand, the second item
  implies that there exists an approximant in
  $\modAppCustom{\orders}{\expandMat \sys \bmod \mods}$ which has
  $\expand{\shifts}$-pivot index $\quoExp_1+\cdots+\quoExp_i$ and
  $\expand{\shifts}$-pivot degree $\minDeg_i + \shift[t]{i} + \remExp_i$; this
  must be at least $\expand{\gamma}$ by minimality of $\expand{\appbas}$.
  Thus, we have $\expand{\gamma} - \shift[t]{i} - \remExp_i = \minDeg_i$.

  To prove our last claim, we assume that $\expand{\gamma}>\remExp_i$, and we
  show that $\minDeg_i \le -\shift[t]{i}$ leads to a contradiction.  Indeed, in
  this case there exists $\app \in \modApp$ with $\shifts$-pivot index $i$ and
  $\shifts$-pivot degree $\gamma=-\shift[t]{i}$.  Then, the second item in
  \cref{lem:parlin_app} proves the existence of an approximant in
  $\modAppCustom{\orders}{\expandMat \sys \bmod \mods}$ with
  $\expand{\shifts}$-pivot index $\quoExp_1+\cdots+\quoExp_i$ and
  $\expand{\shifts}$-pivot degree $\gamma + \shift[t]{i} + \remExp_i =
  \remExp_i < \expand{\gamma}$, which is impossible by minimality of
  $\expand{\gamma}$.
\end{proof}

We now specialize this result to the case where the \(\shifts\)-minimal degree
of \(\modApp\) is known.

\begin{lemma}
  \label{lem:parlin_appbas_knowndeg}
  Let $\orders \in \orderSpace$, let $\sys \in \sysSpace$ with
  $\cdeg{\sys}<\orders$, let $\shifts \in \shiftSpace$, and let $\minDegs \in
  \NN^\nbun$ be the $\shifts$-minimal degree of $\modApp$.  Choose parameters
  $\degExp \ge \lceil \sumVec{\minDegs}/\nbun \rceil$ and
  $\sdiff=\max(-\minDegs)$. We use notation from \cref{lem:parlin_app} on input
  $(-\minDegs,\degExp,t)$; in particular,
  $(\expand{\shifts},\expandMat,(\quoExp_i)_{1\le i\le\nbun},\expand{\nbun}) =
  \algoname{ColParLin}(-\minDegs,\degExp,t)$.

  Then, we have $\nbun \le \expand{\nbun} \le 2\nbun$, $-\degExp \le
  \expand{\shifts} \le 0$, and $\expand{\shifts} = -\expand{\minDegs}$ where
  $\expand{\minDegs}$ is the $\expand{\shifts}$-minimal degree of
  $\modAppCustom{\orders}{\expandMat \sys \bmod \mods}$.  Let $\expand{\appbas}
  \in \polMatSpace[\expand{\nbun}]$ be an $\expand{\shifts}$-ordered weak Popov
  basis of $\modAppCustom{\orders}{\expandMat \sys \bmod \mods}$ and $\reduced
  \in \appbasSpace$ be the submatrix of $\expand{\appbas}\expandMat$ formed by
  its rows at indices $\{\quoExp_1+\cdots+\quoExp_i, 1\le i\le\nbun\}$.  Then,
  $\reduced$ is a $-\minDegs$-ordered weak Popov basis of $\modApp$ and
  therefore, as a consequence of \cref{lem:mindeg_shift},
  $\leadingMat[-\minDegs]{\reduced}^{-1} \reduced$ is the $\shifts$-Popov basis
  of $\modApp$.
\end{lemma}
\begin{proof}
  The lower bound on \(\expand{\nbun}\) follows directly from
  \cref{lem:parlin_app}, and so do the bounds on $\expand{\shifts}$ since
  \(\max(\sdiff,-1)\le 0\). By choice of \(\sdiff\), we have $\shifts[t] =
  -\minDegs$, whose entries are nonpositive. Thus, for each \(i \in \{
  1,\ldots,\nbun \}\), we have \(\quoExp_i = 1\) if \(\minDeg_i=t_i=0\) and
  \(\quoExp_i = \lceil \minDeg_i / \degExp \rceil\) otherwise; in both cases,
  \(\quoExp_i \le 1 +  \minDeg_i / \degExp \). Summing these inequalities, we
  obtain \(\expand{\nbun} \le \nbun + \sumVec{\minDegs} / \degExp \le 2\nbun\)
  by choice of \(\degExp\).  Furthermore, since $-\shifts[t] \le \minDegs$
  entry-wise, \cref{lem:parlin_appbas} shows that $\reduced$ is a
  $-\minDegs$-ordered weak Popov basis of $\modApp$.
  
  Our claim on $\expand{\minDegs}$ can be showed using the minimality of
  $\minDegs$ and the arguments used for proving the two items of
  \cref{lem:parlin_app}; details can be found in the proof of
  \citep[Lem.\,4.2]{JeNeScVi16} which contains an explicit description of the
  $\expand{\shifts}$-Popov basis of $\modAppCustom{\orders}{\expandMat \sys
  \bmod \mods}$.
\end{proof}

\subsection{Overlapping linearization to balance orders and dimensions}
\label{sec:knowndeg:overlapping_parlin}

Now, we study Step~\textbf{3} of \cref{algo:knowndeg_pab}: assuming that the
shifted minimal degree is known, balanced (Step~\textbf{1}), and that $\nbeq <
\nbun$ (Step~\textbf{2}), we reduce to an instance which is solved efficiently
by \algoname{PM-Basis}. Namely, we use the \emph{overlapping linearization} of
\citet[Sec.\,2]{Storjohann06} to further transform the instance of
\cref{pbm:pab} into one with a balanced order and $\nbeq\in\Theta(\nbun)$. In
the latter reference, as well as in \citep[Sec.\,3]{ZhoLab12}, this
linearization has been considered in the case of a uniform order
$\orders=(\order,\ldots,\order)$. Here, we extend the construction to arbitrary
orders, and we show how it can be used in our specific situation where the
$\shifts$-minimal degree is known.

We first give an overview of the construction and of its properties. Let
$\orders\in\orderSpace$ and $\sys \in \sysSpace$ with $\cdeg{\sys} < \orders$,
and choose a positive integer $\degExp$. Then, we build an order
$\ovlpLinOrd{\degExp}{\orders} \in \ZZp^{\nbeq+\expand{\nbeq}}$ and a matrix
$\ovlpLin{\orders,\degExp}{\sys} \in
\polMatSpace[(\nbun+\expand{\nbeq})][(\nbeq+\expand{\nbeq})]$ such that
\begin{itemize}
  \item the largest entry of the order $\ovlpLinOrd{\degExp}{\orders}$ is
    at most $2\degExp$,
  \item the increase in dimension is $\expand{\nbeq} < \vsdim/\degExp$,
    where $\vsdim = \sumVec{\orders}$,
  \item approximants $\app \in \modApp$ of degree at most $\degExp$ correspond
    to approximants $[\app \;\; \row{q}] \in
    \modAppOvlpLin$
    for some $\row{q}$ of degree less than $\rdeg{\app}$.
\end{itemize}

The last item, detailed in \cref{lem:ovlplin_app}, gives a link between the
original approximation instance and the one obtained after linearization. This
implies that a minimal basis for the original instance can be retrieved from a
minimal basis for the transformed instance, assuming we choose $\degExp$ as an
upper bound on the degree of the former basis; this approach is detailed in
\cref{lem:ovlplin_appbas}.

The first two items are direct consequences of the construction, given in
\cref{dfn:ovlplin}. They specify the dimensions of the transformed instance. In
the context of \cref{algo:knowndeg_pab}, the output column linearization of
\cref{sec:knowndeg:output_parlin} has already been applied, which ensures that
we are seeking a basis of degree about \(\vsdim/\nbun\), and hence that
$\degExp$ can be chosen to be about $\vsdim/\nbun$. Then, the new order is
balanced and the dimension increase is only about $\nbun$: the transformed
instance can be solved efficiently using a single call of \algoname{PM-Basis}.
More details about Step~\textbf{3} of \cref{algo:knowndeg_pab} can be found in
\cref{sec:knowndeg:proof}.

Note that, in general, the $\shifts$-Popov approximant basis may have degree up
to $\vsdim$, in which case one would choose $\degExp \ge \vsdim$ in the above
approach: this would not lead to any improvement since the entries of the order
have not been decreased by the linearization.
Still, in some contexts it is known that the sought basis has rows of small
degree: using a small parameter $\degExp$ will not yield the whole basis but
does give the small degree part of the basis (see \cref{lem:ovlplin_app}). This
was one of the key properties mentioned in the original design of this
linearization in \citep{Storjohann06}, and used in \citep{ZhoLab12} to handle
shifts that are weakly unbalanced around their minimum value (see also
\cref{subsec:weakly_unbalanced_around_min}).

Let us now present the construction of $\ovlpLinOrd{\degExp}{\orders}$ and
$\ovlpLin{\orders,\degExp}{\sys}$.

\begin{definition}
  \label{dfn:ovlplin}
  Let $\orders=(\order_1,\ldots,\order_\nbeq) \in \orderSpace$, let $\sys \in
  \sysSpace$ with $\cdeg{\sys} < \orders$, and let $\degExp\in\ZZp$. Then, for
  $1\le i\le \nbeq$, let $\order_i = \quoExp_i \degExp + \remExp_i$ with
  $\quoExp_i = \left\lceil \frac{\order_i}{\degExp} - 1 \right\rceil$ and $1\le
  \remExp_i\le \degExp$.
  
  Let also $\expand{\nbeq} = \max(\quoExp_1-1,0) + \cdots +
  \max(\quoExp_\nbeq-1,0)$, and define
  \[
    \ovlpLinOrd{\degExp}{\orders} =
    (\expand{\order}_1,\ldots,\expand{\order}_\nbeq) \in \ZZp^{\nbeq+\expand{\nbeq}},
  \]
  where $\expand{\order}_i = (2\degExp,\ldots,2\degExp,\degExp+\remExp_i) \in
  \ZZp^{\quoExp_i}$ if $\quoExp_i > 1$ and $\expand{\order}_i = \order_i$
  otherwise.
  Considering the $i$th column of $\sys$, we write its
  $\var^\degExp$-adic representation as
  \begin{align*}
    \matcol{\sys}{i} & = \matcol{\sys}{i}^{(0)} + \matcol{\sys}{i}^{(1)}
    \var^{\degExp} + \cdots + \matcol{\sys}{i}^{(\quoExp_i)} \var^{\quoExp_i\degExp}
     \\
    & \quad\text{where }\, \cdeg{[\matcol{\sys}{i}^{(0)} \;\; \matcol{\sys}{i}^{(1)} \;\; \cdots \;\; \matcol{\sys}{i}^{(\quoExp_i)}]} < (\degExp,\ldots,\degExp,\remExp_i) .
  \end{align*}
  If $\quoExp_i> 1$, we define
  \[
    \matcol{\expand{\sys}}{i} = 
    \begin{bmatrix}
      \matcol{\sys}{i}^{(0)} + \matcol{\sys}{i}^{(1)}\var^{\degExp} \,\;&\,\; \matcol{\sys}{i}^{(1)} + \matcol{\sys}{i}^{(2)}\var^{\degExp} \,\;&\,\; \cdots \,\;&\,\; \matcol{\sys}{i}^{(\quoExp_i-1)} + \matcol{\sys}{i}^{(\quoExp_i)}\var^{\degExp} \\
    \end{bmatrix}
    \in \polMatSpace[\nbun][\quoExp_i]
  \]
  and $\emat_i = [\matz \;\; \idMat[\quoExp_i-1]] \in
  \polMatSpace[(\quoExp_i-1)][\quoExp_i]$, and otherwise we let
  $\matcol{\expand{\sys}}{i} = \matcol{\sys}{i}$ and $\emat_i \in
  \polMatSpace[0][1]$. Then,
  \[
    \ovlpLin{\orders,\degExp}{\sys} =
    \begin{bmatrix}
      \matcol{\expand{\sys}}{1} & \matcol{\expand{\sys}}{2} & \cdots & \matcol{\expand{\sys}}{\nbeq} \\
      \emat_1 &  & &  \\
       & \emat_2 & &  \\
       &  & \ddots &  \\
       &  &  & \emat_\nbeq
    \end{bmatrix}
    \in \polMatSpace[(\nbun+\expand{\nbeq})][(\nbeq+\expand{\nbeq})]
  \]
  is called the \emph{overlapping linearization} of $\sys$ with respect to
  $\orders$ and $\degExp$.
\end{definition}

The next lemma gives a correspondence between the approximants of degree
bounded by $\degExp$ in $\modApp$ and in
$\modAppOvlpLin$.
It uses notation from \cref{dfn:ovlplin}.

\begin{lemma}
  \label{lem:ovlplin_app}
  Let $\orders \in \orderSpace$, let $\sys \in \sysSpace$ with $\cdeg{\sys} <
  \orders$, and let $\degExp\in\ZZp$.  Then,
  \begin{itemize}
    \item If $\app\in\appSpace$ is in $\modApp$, then there exists a
      unique $\row{q} \in \polMatSpace[1][\expand{\nbeq}]$ such that $[\app
      \;\; \row{q}] \in
      \modAppOvlpLin$,
      $\rdeg{\row{q}} < \rdeg{\app}$, and $\cdeg{\row{q}} <
      \ovlpLinOrd{\degExp}{\orders} \trsp{\emat}$ where $\emat =
      \diag{\emat_1,\ldots,\emat_\nbeq}$.  Explicitly, it is defined as
      $\row{q} = - \app [\matcol{\expand{\sys}}{1} \;\; \cdots \;\;
      \matcol{\expand{\sys}}{\nbeq}] \trsp{\emat} \bmod
      \xDiag{\ovlpLinOrd{\degExp}{\orders}\trsp{\emat}}$.
    \item If $[\app \;\; \row{q}] \in \polMatSpace[1][(\nbun+\expand{\nbeq})]$
      is in
      $\modAppOvlpLin$
      and such that $\rdeg{\row{q}} < \degExp$ and $\rdeg{\app} \le \degExp$,
      then $\app \in \modApp$; in particular, $\rdeg{\row{q}} < \rdeg{\app}$.
  \end{itemize}
\end{lemma}
\begin{proof}
  Concerning the first item, we first consider $i \in \{1,\ldots,\nbeq\}$ such
  that $\quoExp_i \in \{0,1\}$. Then, we have $\matcol{\expand{\sys}}{i} =
  \matcol{\sys}{i}$, $\expand{\order}_i = \order_i$, and $\emat_i \in
  \polMatSpace[0][1]$.  Defining $\row{q}_i$ as an empty matrix in
  $\polMatSpace[1][0]$, the identity $\app \matcol{\sys}{i} = 0 \bmod
  \var^{\order_i}$ can be rewritten as $\app \matcol{\expand{\sys}}{i} +
  \row{q}_i \emat_i = 0 \bmod \xDiag{\expand{\order}_i}$.


  Now, for $i$ such that $\quoExp_i > 1$, we define $\row{q}_i = [q_{1,i} \;\;
  \cdots \;\; q_{\quoExp_i-1,i}] \in \polMatSpace[1][(\quoExp_i-1)]$ as
  \begin{equation}
    \label{eqn:quotient_ab_parlin}
    \left\{
    \begin{array}{l}
      q_{j,i} = \var^{-j\degExp} \row{p} (\matcol{\sys}{i}^{(0)} + \cdots +
      \matcol{\sys}{i}^{(j-1)} \var^{(j-1)\degExp}) \bmod X^{2\degExp},
      \;\,\text{for}\;\, 1\le j<\quoExp_i-1,\\
      q_{\quoExp_i-1,i} = \var^{-(\quoExp_i-1)\degExp} \row{p}
      (\matcol{\sys}{i}^{(0)} + \cdots + \matcol{\sys}{i}^{(\quoExp_i-2)}
      \var^{(\quoExp_i-2)\degExp}) \bmod X^{\degExp+\remExp_i}.
    \end{array}
    \right.
  \end{equation}
 These are polynomials since $\app \matcol{\sys}{i} = 0 \bmod \var^{\order_i}$,
 and $\rdeg{\row{q}_i} < \rdeg{\app}$ holds since by construction
 $\cdeg{\matcol{\sys}{i}^{(k)}} < \degExp$ for all $k$.  For $j<\quoExp_i-1$,
 $\app (\matcol{\sys}{i}^{(0)} + \cdots + \matcol{\sys}{i}^{(j+1)}
 \var^{(j+1)\degExp}) = 0 \bmod \var^{(j+2)\degExp}$ becomes $q_{j,i}
 \var^{j\degExp} + \app(\matcol{\sys}{i}^{(j)} \var^{j\degExp} +
 \matcol{\sys}{i}^{(j+1)} \var^{(j+1)\degExp}) = 0 \bmod \var^{(j+2)\degExp}$,
 hence $\app(\matcol{\sys}{i}^{(j)} + \matcol{\sys}{i}^{(j+1)}\var^\degExp) +
 q_{j,i} = 0 \bmod \var^{2\degExp}$.  Similarly, we obtain
 $\app(\matcol{\sys}{i}^{(\quoExp_i-1)} +
 \matcol{\sys}{i}^{(\quoExp_i)}\var^\degExp) + q_{\quoExp_i-1,i} = 0 \bmod
 \var^{\degExp+\remExp_i}$.  In short, we have
  \begin{equation}
    \label{eqn:local_approx_ab_parlin}
    \begin{bmatrix} \app \: & \: \row{q}_i \end{bmatrix} \begin{bmatrix}
      \matcol{\expand{\sys}}{i} \\ \emat_i \end{bmatrix} = 0 \bmod
    \xDiag{\expand{\order_i}},
    \;\;\; \text{where } \expand{\order_i} =
    (2\degExp,\ldots,2\degExp,\degExp+\remExp_i).
  \end{equation}
  Thus, by construction of $\ovlpLin{\orders,\degExp}{\sys}$ and
  $\ovlpLinOrd{\degExp}{\orders}$, we have $[\app \;\, \row{q}_1 \;\, \cdots
  \;\, \row{q}_\nbeq] \in
  \modAppOvlpLin$.
  Besides, we have proved the degree bound for $[\row{q}_1 \;\; \cdots \;\;
  \row{q}_\nbeq]$; the explicit formula follows from
  \cref{eqn:local_approx_ab_parlin}, since the latter gives $\row{q}_i =
  \row{q}_i \emat_i \trsp{\emat}_i = - \app \matcol{\expand{\sys}}{i}
  \trsp{\emat}_i \bmod \xDiag{\expand{\order}_i \trsp{\emat}_i}$.

  Now, we prove the second item.  We write $\row{q} = [\row{q}_1 \;\; \cdots
  \;\; \row{q}_\nbeq]$ with $\row{q}_i \in \polMatSpace[1][0]$ if
  $\quoExp_i\in\{0,1\}$ and $\row{q}_i = [q_{1,i},\ldots,q_{\quoExp_i-1,i}] \in
  \polMatSpace[1][(\quoExp_i-1)]$ if $\quoExp_i>1$.  Let $i \in
  \{1,\ldots,\nbeq\}$. If $\quoExp_i \in \{0,1\}$, then we have $\row{p}
  \matcol{\sys}{i} = 0 \bmod \var^{\order_i}$.  If $\quoExp_i > 1$, then the
  identity in \cref{eqn:local_approx_ab_parlin} holds and yields
  \begin{align*}
    & \app (\matcol{\sys}{i}^{(0)} + \matcol{\sys}{i}^{(1)} \var^\degExp) = 0 \bmod
      \var^{2\degExp}, \\
    & \app(\matcol{\sys}{i}^{(j)} + \matcol{\sys}{i}^{(j+1)} \var^\degExp) =
      -q_{j,i} \bmod \var^{2\degExp} \quad \text{for }
      1\le j\le \quoExp_i-2, \\
    & \app(\matcol{\sys}{i}^{(\quoExp_i-1)} + \matcol{\sys}{i}^{(\quoExp_i)}
      \var^\degExp) = -q_{\quoExp_i-1,i} \bmod \var^{\degExp+\remExp_i},
  \end{align*}
  where $\row{q}_i = [q_{1,i},\ldots,q_{\quoExp_i-1,i}]$. The first identity
  and the second one for $j=1$ imply that
  \[
    \app(\matcol{\sys}{i}^{(0)} + \matcol{\sys}{i}^{(1)} \var^\degExp +
    \matcol{\sys}{i}^{(2)} \var^{2\degExp}) = \app \matcol{\sys}{i}^{(0)} -
    q_{1,i} \var^\degExp = 0 \bmod X^{2\degExp};
  \]
  using the bounds $\rdeg{\row{q}} < \degExp$ and $\rdeg{\app} \le \degExp$ we
  obtain $q_{1,i} = \var^{-\degExp} \app \matcol{\sys}{i}^{(0)}$ and $\app
  \matcol{\sys}{i} = 0 \bmod \var^{3\degExp}$. Then the same arguments with the
  above identity for $j=2$, we obtain $q_{2,i} = \var^{-2\degExp} \app
  (\matcol{\sys}{i}^{(0)} + \matcol{\sys}{i}^{(1)}\var^\degExp)$ and $\app
  \matcol{\sys}{i} = 0 \bmod \var^{4\degExp}$. Continuing this process, we
  eventually obtain $\app \matcol{\sys}{i} = 0 \bmod \var^{\order_i}$.
\end{proof}

We now show that the $\shifts$-Popov basis $\appbas$ of $\modApp$
can be deduced from one for the transformed problem, as long as $\degExp$ is
chosen to be at least $\deg(\appbas)$.

\begin{lemma}
  \label{lem:ovlplin_appbas}
  Let $\orders \in \orderSpace$, let $\sys \in \sysSpace$ with $\cdeg{\sys} <
  \orders$, let $\shifts\in\shiftSpace$, let $\minDegs\in\NN^\nbun$ be the
  $\shifts$-minimal degree of $\modApp$, and let $\degExp\in\ZZp$ be such that
  $\degExp\ge \max(\minDegs)$. Let  $\expand{\appbas}$ be a
  $(-\minDegs,-\degExp,\ldots,-\degExp)$-ordered weak Popov basis
  of
  $\modAppOvlpLin$.
  Then, the leading principal submatrix $\reduced \in \polMatSpace[\nbun]$ of
  $\expand{\appbas}$ is a $-\minDegs$-ordered weak Popov basis of
  $\modApp$ and therefore, as a consequence of \cref{lem:mindeg_shift},
  $\leadingMat[-\minDegs]{\reduced}^{-1} \reduced$ is the $\shifts$-Popov basis
  of $\modApp$.
\end{lemma}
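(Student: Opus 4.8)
The plan is to recognize $\reduced$ as a unimodular multiple of the $\shifts$-Popov basis $\appbas$ of $\modApp$, by pushing the degree information carried by the $(-\minDegs,-\degExp,\ldots,-\degExp)$-ordered weak Popov form of $\expand{\appbas}$ through the approximant correspondence of \cref{lem:ovlplin_app}. I would first record, via \cref{lem:mindeg_shift}, that $\appbas$ is in $-\minDegs$-Popov form with $\cdeg{\appbas}=\minDegs$ and $\rdeg[-\minDegs]{\appbas}=\unishift$; in particular $\deg(\appbas)=\max(\minDegs)\le\degExp$, and $\minDegs$ is the $-\minDegs$-minimal degree of $\modApp$. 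Writing $\shifts[t]=(-\minDegs,-\degExp,\ldots,-\degExp)\in\ZZ^{\nbun+\expand{\nbeq}}$ (the shift for which $\expand{\appbas}$ is ordered weak Popov, with $\expand{\nbeq}$ as in \cref{dfn:ovlplin}), each of the first $\nbun$ rows $\matrow{\expand{\appbas}}{k}$ has $\shifts[t]$-pivot index $k$; I denote its $\shifts[t]$-pivot degree by $\gamma_k$ and split it as $[\matrow{\reduced}{k}\ \row{q}_k]$ with $\matrow{\reduced}{k}\in\polMatSpace[1][\nbun]$, so that $\reduced$ is exactly the leading principal $\nbun\times\nbun$ submatrix of $\expand{\appbas}$.

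The first substantive step is the bound $\gamma_k\le\minDeg_k$ for $1\le k\le\nbun$. I would prove it by lifting row $k$ of $\appbas$: the first item of \cref{lem:ovlplin_app} yields $\row{q}^{(k)}$ with $[\matrow{\appbas}{k}\ \row{q}^{(k)}]\in\modAppOvlpLin$ and $\rdeg{\row{q}^{(k)}}<\rdeg{\matrow{\appbas}{k}}\le\degExp$. Hence the $\row{q}^{(k)}$-block contributes strictly negative $\shifts[t]$-weighted degrees, while $\matrow{\appbas}{k}$ has $-\minDegs$-row degree $0$ with pivot at column $k$; so $[\matrow{\appbas}{k}\ \row{q}^{(k)}]$ is a nonzero element of $\modAppOvlpLin$ of $\shifts[t]$-pivot index $k$ and $\shifts[t]$-pivot degree $\minDeg_k$, and since $\gamma_k$ is the $k$-th entry of the $\shifts[t]$-minimal degree of that module (as $\expand{\appbas}$ is $\shifts[t]$-weak Popov, by \citep[Lem.\,3.3]{JeNeScVi16}) we get $\gamma_k\le\minDeg_k$. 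Consequently $\rdeg[{\shifts[t]}]{\matrow{\expand{\appbas}}{k}}=\gamma_k-\minDeg_k\le0$; restricting to the first $\nbun$ columns gives $\rdeg{\matrow{\reduced}{k}}\le\max(\minDegs)\le\degExp$ and shows $\matrow{\reduced}{k}$ has $-\minDegs$-pivot index $k$ and $-\minDegs$-pivot degree $\gamma_k$, while restricting to the last $\expand{\nbeq}$ columns, whose indices all exceed $k$, gives $\rdeg{\row{q}_k}<\degExp$. Then the second item of \cref{lem:ovlplin_app} applies to $[\matrow{\reduced}{k}\ \row{q}_k]$ and yields $\matrow{\reduced}{k}\in\modApp$.

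The main obstacle is to upgrade ``$\reduced$ has $\nbun$ rows in $\modApp$ with distinct pivot indices'' to ``$\reduced$ is a \emph{basis} of $\modApp$'', since a priori its rows could generate a proper finite-index submodule; I would settle this by a determinantal degree count. As each row $k$ of $\reduced$ has $-\minDegs$-pivot index $k$, the matrix $\leadingMat[-\minDegs]{\reduced}$ is lower triangular with nonzero diagonal, hence invertible, so $\reduced$ is in $-\minDegs$-ordered weak Popov form and is nonsingular with $\deg(\det(\reduced))=\sum_{k=1}^{\nbun}(\gamma_k-\minDeg_k)+\sumVec{\minDegs}=\sum_{k=1}^{\nbun}\gamma_k$. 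Its rows lie in $\modApp$, so $\reduced=\mat{U}\appbas$ for some nonsingular $\mat{U}\in\polMatSpace[\nbun]$, whence $\sum_k\gamma_k=\deg(\det(\reduced))\ge\deg(\det(\appbas))=\sumVec{\minDegs}=\sum_k\minDeg_k$; combined with $\gamma_k\le\minDeg_k$ this forces $\gamma_k=\minDeg_k$ for every $k$ (equivalently, it uses that $\minDegs$ is the $-\minDegs$-minimal degree of $\modApp$, so no approximant of $-\minDegs$-pivot index $k$ has $-\minDegs$-pivot degree below $\minDeg_k$). Therefore $\deg(\det(\mat{U}))=0$, $\mat{U}$ is unimodular, and $\reduced$ is a basis of $\modApp$. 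Finally, $\reduced$ being a $-\minDegs$-reduced basis of $\modApp$ unimodularly equivalent to $\appbas$, \cref{lem:mindeg_shift} gives $\cdeg{\reduced}=\minDegs$ and $\leadingMat[-\minDegs]{\reduced}^{-1}\reduced=\appbas$, which is the $\shifts$-Popov basis of $\modApp$.
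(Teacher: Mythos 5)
Your proof is correct and follows essentially the same route as the paper's: you lift the $-\minDegs$-Popov basis $\appbas$ to $\modAppOvlpLin$ via the first item of \cref{lem:ovlplin_app} to bound the $\shifts[t]$-pivot degrees of the top $\nbun$ rows of $\expand{\appbas}$ from above by $\minDegs$, then invoke the second item to put the rows of $\reduced$ in $\modApp$, and finally use minimality of $\appbas$ to force equality. The only meaningful difference is one of explicitness: where the paper states tersely that $\rdeg[-\minDegs]{\reduced}=\unishift$ and $\reduced$ is therefore a basis, you spell out the determinantal-degree count ($\deg(\det(\reduced)) = \sum_k\gamma_k \ge \deg(\det(\appbas)) = \sumVec{\minDegs}$, whence $\mat{U}$ is unimodular), which is indeed the argument implicitly underlying the paper's conclusion.
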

\begin{proof}
  In this proof, we use the notation $\shifts[t] =
  (-\minDegs,-\degExp,\ldots,-\degExp)\in \shiftSpace[\nbun+\expand{\nbeq}]$.
  
  Let $\appbas \in \polMatSpace[\nbun]$ be a $-\minDegs$-ordered weak Popov
  basis of $\modApp$. Then, we have $\rdeg[-\minDegs]{\appbas} = \unishift$
  according to \cref{lem:mindeg_shift}, hence in particular all the rows of
  $\appbas$ have degree at most $\degExp$.  The first item of
  \cref{lem:ovlplin_app} implies that there exists a matrix $\mat{Q} \in
  \polMatSpace[\nbun][\expand{\nbeq}]$ such that all the rows of $[\appbas \;\;
  \mat{Q}]$ are in $\modAppOvlpLin$ and $\rdeg{\mat{Q}}<\rdeg{\appbas}$. Then,
  by choice of $\shifts[t]$, we have $\leadingMat[{\shifts[t]}]{[\appbas \;\;
  \mat{Q}]} = [\leadingMat[-\minDegs]{\appbas} \;\; \matz]$, with
  $\leadingMat[-\minDegs]{\mat{P}}$ lower triangular by assumption. Thus
  $[\appbas \;\; \mat{Q}]$ is in $\shifts[t]$-ordered weak Popov form with all
  $\shifts[t]$-pivots in $\appbas$.

  Now, let us write
  \[
    \expand{\appbas} =
    \begin{bmatrix}
      \reduced & \expand{\appbas}_{12} \\
      \expand{\appbas}_{21} & \expand{\appbas}_{22}
    \end{bmatrix}
    \text{ with } \reduced \in \polMatSpace[\nbun] \text{ and }
    \expand{\appbas}_{22} \in \polMatSpace[\expand{\nbeq}].
  \]
  Since the $\shifts[t]$-pivots of $[\reduced \;\; \expand{\appbas}_{12}]$ are
  on the diagonal of $\reduced$, by minimality of $\expand{\appbas}$ we obtain
  $\rdeg[-\minDegs]{\reduced} = \rdeg[{\shifts[t]}]{[\reduced \;\;
  \expand{\appbas}_{12}]} \le \rdeg[{\shifts[t]}]{[\appbas \;\; \mat{Q}]} =
  \unishift$. Thus $\deg(\reduced)\le\max(\minDegs)\le\degExp$ and
  $\deg(\expand{\appbas}_{12}) < \degExp$, and the second item of
  \cref{lem:ovlplin_app} applied to the rows of $[\reduced \;\;
  \expand{\appbas}_{12}]$ shows that each row of $\reduced$ is in $\modApp$.
  Since $\reduced$ is in $-\minDegs$-ordered weak Popov form, this gives
  $\rdeg[-\minDegs]{\reduced} \ge \rdeg[-\minDegs]{\appbas} = \unishift$ by
  minimality of $\appbas$. Thus, we have $\rdeg[-\minDegs]{\reduced}=\unishift$
  and $\reduced$ is a $-\minDegs$-ordered weak Popov basis of $\modApp$.
\end{proof}

\subsection{Proof of \texorpdfstring{\cref{prop:algo:knowndeg_pab}}{Proposition}}
\label{sec:knowndeg:proof}

We first give some properties of the manipulated quantities to verify that the
assumptions of the lemmas and corollary referred to in the next paragraph are
indeed satisfied.  In what follows, we let $\expand{\sys} = \expandMat \sys
\bmod \xDiag{\orders}$.  First, we have $\sumVec{\minDegs} \le \vsdim =
\sumVec{\orders}$ by \cref{lem:degdet_appbasis}, hence $\degExp =
\lceil\vsdim/\nbun\rceil \ge \lceil\sumVec{\minDegs}/\nbun\rceil$ and thus we
can apply \cref{lem:parlin_appbas_knowndeg}; it ensures that the tuple
$\expand{\minDegs}$ computed at Step~\textbf{1} is the
$-\expand{\minDegs}$-minimal degree of $\modAppCustom{\orders}{\expand{\sys}}$
and satisfies $-\expand{\minDegs} \ge -\degExp$, that is,
$\max(\expand{\minDegs}) \le \degExp$.  Besides, since $\reduced_1$ is in
$-\expand{\minDegs}$-ordered weak Popov form, it has $-\expand{\minDegs}$-pivot
degree $\rdeg[-\expand{\minDegs}]{\reduced_1} + \expand{\minDegs} = -\minDegsR
+\expand{\minDegs}$, by definition of $\minDegsR$ at Step~\textbf{2}.  Thus, by
the fourth item of \cref{lem:recursiveness} and by
\cref{prop:algo:reduce_coldim}, $\minDegsR$ is the $-\minDegsR$-minimal degree
of $\modAppCustom{\ordersR}{\sysR}$.  This further implies $\minDegsR \le
\expand{\minDegs}$, and therefore $\max(\minDegsR) \le \max(\expand{\minDegs})
\le \degExp$.

By \cref{rmk:pmbasis_unbalancedorders}, Step~\textbf{4} computes a
$\shifts[t]$-ordered weak Popov basis $\expand{\appbas}$ of
$\modAppCustom{\ovlpLinOrd{\degExp}{\ordersR}}{\ovlpLin{\ordersR,\degExp}{\sysR}}$.
Then, \cref{lem:ovlplin_appbas} applied to
$(\ordersR,\sysR,-\minDegsR,\minDegsR,\degExp)$ shows that $\reduced_2$ is a
$-\minDegsR$-ordered weak Popov basis of $\modAppCustom{\ordersR}{\sysR}$.
Then, \cref{prop:algo:reduce_coldim} implies that $\reduced_2\reduced_1$ is a
basis of $\modAppCustom{\orders}{\expand{\sys}}$ and the third item of
\cref{lem:recursiveness} shows that it is in $-\expand{\minDegs}$-ordered weak
Popov form, since $-\minDegsR =\rdeg[-\expand{\minDegs}]{\reduced_1}$.  It then
follows from \cref{lem:parlin_appbas_knowndeg} applied to
$(\orders,\sys,\shifts,\minDegs)$ that $\leadingMat[-\minDegs]{\reduced}^{-1}
\reduced$ is the $\shifts$-Popov basis of $\modApp$.

Concerning the cost, Steps~\textbf{1} and~\textbf{3} use no field operation.
At Step~\textbf{2}, obtaining the matrix $\expandMat\sys \bmod \xDiag{\orders}$
involves no field operation given the form of $\expandMat$, but only at most
$\expand{\nbun}\vsdim$ read/write of field elements, where $\expand{\nbun} \le
2\nbun$ according to \cref{lem:parlin_appbas_knowndeg}.  Then
\Cref{prop:algo:reduce_coldim} indicates that Step~\textbf{2} uses
$\bigO{\appbastime{\nbun,\vsdim/\nbun}}$ operations, which is within the
announced bound.

From $\expand{\nu} \le \sumVec{\ordersR} / \degExp$ by \cref{dfn:ovlplin} and
$\sumVec{\ordersR} \le \vsdim$ by \cref{prop:algo:reduce_coldim}, we get
$\expand{\nu} \le \vsdim / \lceil \vsdim/\nbun \rceil \le \nbun$.  Thus,
$\ovlpLin{\ordersR,\degExp}{\expand{\sys}}$ has $\expand{\nbun} + \expand{\nu}
\le 3\nbun$ rows and $\nu + \expand{\nu} < \expand{\nbun} + \expand{\nu} \le
3\nbun$ columns.  Besides, by construction of $\ovlpLinOrd{\degExp}{\ordersR}$
we have $\hat{d} \le 2\degExp = 2 \lceil \vsdim/\nbun \rceil$, hence $\hat{d}
\in \bigO{\vsdim/\nbun}$.  Note that we can discard the ceiling since we have
assumed $\vsdim \in \Omega(\nbun)$.  Then, according to
\cref{prop:algo:dac_ab}, the call to \algoname{PM-Basis} at Step~\textbf{4}
uses $\bigO{\appbastime{\expand{\nbun}+\expand{\nu},\hat{d}}} \subseteq
\bigO{\appbastime{\nbun,\vsdim/\nbun}}$ operations.

Now, $\deg(\reduced_1) \le 2 \vsdim/\nbun$ by \cref{prop:algo:reduce_coldim}.
We have seen that $\reduced_2$ has $-\tuple{\hat\minDegs}$-pivot degree
$\tuple{\hat\minDegs}$, which implies $\cdeg{\reduced_2} =
\tuple{\hat\minDegs}$ by \cref{lem:mindeg_shift}.  Thus $\deg(\reduced_2) =
\max(\tuple{\hat\minDegs}) \le \lceil\vsdim/\nbun\rceil$, which gives
$\deg(\reduced_2)\in\bigO{\vsdim/\nbun}$ (remark that here only the case
$\vsdim \ge \nbun$ is relevant, since otherwise $\nbeq \le \vsdim < \nbun \le
\expand{\nbun}$ and then $\reduced_1=\idMat[\expand{\nbun}]$).  Thus, computing
$\reduced_2\reduced_1$ uses $\bigO{\polmatmultime{\nbun,\vsdim/\nbun}}$
operations.  Then, given the shape of $\expandMat$, obtaining $\reduced$ from
$\reduced_2\reduced_1$ uses $\bigO{\nbun \expand{\nbun} \vsdim/\nbun} \subseteq
\bigO{\nbun \vsdim}$ additions in $\field$.

Finally, the computation of $\leadingMat[-\minDegs]{\reduced}^{-1}$ at
Step~\textbf{5} uses $\bigO{\nbun^\expmatmul}$ operations.  Since
$\cdeg{\reduced} = \minDegs$ by \cref{lem:mindeg_shift} and $\sumVec{\minDegs}
\le \vsdim$ by \cref{lem:degdet_appbasis}, applying the first item of
\cref{lem:compute_residual} with $\order=0$ shows that the product
$\leadingMat[-\minDegs]{\reduced}^{-1} \reduced$ costs $\bigO{\lceil
(\nbun+\vsdim)/\nbun\rceil \nbun^{\expmatmul}}$ operations.  Since $\vsdim\in
\Omega(\nbun)$ this bound is in $\bigO{\nbun^{\expmatmul-1}\vsdim}$, which
itself is in $\bigO{\appbastime{\nbun,\vsdim/\nbun}}$.

\section{Computing approximant bases for arbitrary shifts}
\label{sec:fastpab}

We now describe our algorithm for solving the general case of \cref{pbm:pab}
(\cref{algo:fast_pab}), and we prove that it is correct and admits the cost
bound announced in \cref{thm:fast_pab}.

\begin{algobox}
  \algoInfo
  {Shifted Popov approximant basis}
  {PopovAppBasis}
  {algo:fast_pab}

  \dataInfos{Input}{
    \item order $\orders = (\order_1,\ldots,\order_\nbeq) \in \orderSpace$,
    \item matrix $\sys\in\polMatSpace[\nbun][\nbeq]$ with $\cdeg{\sys} <
      \orders$,
    \item shift $\shifts \in \shiftSpace$.
  }

  \dataInfo{Output}{the $\shifts$-Popov basis of $\modApp$.}

  \algoSteps{
    \item \algoword{If} $\vsdim=\order_1+\cdots+\order_\nbeq \le \nbun$ : \eolcomment{Base case}
      \begin{enumerate}[{\bf a.}]
        \item \algoword{For} $i$ \algoword{from} $1$ \algoword{to} $\nbeq$:
        \begin{enumerate}[(i)]
        \item $\mat{E}_i \assign \begin{bmatrix} \col{f}_i^{(0)} & \col{f}_i^{(1)} & \cdots & \col{f}_i^{(\order_i-1)} \end{bmatrix} \in \matSpace[\nbun][\order_i]$ where $\matcol{\sys}{i} = \sum_{0\le k<\order_i} \col{f}_i^{(k)} \var^k$
          \item $\mulshift_i \assign 
                \left[\begin{smallmatrix}
                  0 & 1                   \\
                    & \sddots & \sddots   \\
                    &        &   0    & 1 \\
                    &        &        & 0
                \end{smallmatrix}\right] \in \matSpace[\order_i]$
        \end{enumerate}
        \item $\mat{E} \assign \begin{bmatrix} \mat{E}_1 & \cdots & \mat{E}_\nbeq \end{bmatrix} \in \matSpace[\nbun][\vsdim]$;~
              $\mulshift \assign \diag{\mulshift_1,\ldots,\mulshift_\nbeq} \in \matSpace[\vsdim]$
        \item \algoword{Return} $\algoname{LinearizationInterpolationBasis}(\mat{E},\mulshift,\shifts,\max(\orders))$ \\
          \strut\eolcomment{\citep[Algo.\,9]{JeNeScVi17}}
      \end{enumerate}
    \item \algoword{Else if} $\nbeq\ge\nbun$: \eolcomment{Entered at most once at initial call}
      \begin{enumerate}[{\bf a.}]
        \item permute $\orders$ into nonincreasing order, and the columns of
          $\sys$ accordingly
        \item
          $(\ordersR,\sysR,\shiftsR,\appbas_1)
          \assign \algoname{ReduceColDim}(\orders,\sys,\shifts)$
        \item $\popov_2 \assign \algoname{PopovAppBasis}(\ordersR,\sysR,\shiftsR)$
        \item $\minDegs_1 \assign$ diagonal degrees of $\appbas_1$;~
        $\minDegs_2 \assign$ diagonal degrees of $\popov_2$
        \item \algoword{Return} $\algoname{KnownDegAppBasis}(\orders,\sys,\shifts,\minDegs_1+\minDegs_2)$
      \end{enumerate}
    \item \algoword{Else:} \eolcomment{Divide and conquer}
      \begin{enumerate}[{\bf a.}]
        \item $1\le i_0 \le \nbeq$ and $1\le d \le \order_{i_0}$ such that
          $\order_1 + \cdots + \order_{i_0-1}+d = \lfloor \vsdim/2 \rfloor$
        \item $\col{f}_{i_0,1} \assign \matcol{\sys}{i_0} \bmod \var^d$;~
              $\col{f}_{i_0,2} \assign \var^{-d} (\matcol{\sys}{i_0} - \col{f}_{i_0,1})$
        \item $\orders_1 \assign (\order_1,\ldots,\order_{i_0-1},d)$;~
              $\sys_1 \assign [ \matcol{\sys}{1} | \cdots | \matcol{\sys}{i_0-1} | \col{f}_{i_0,1} ]$
        \item $\orders_2 \assign (\order_{i_0}-d,\order_{i_0+1},\ldots,\order_{\nbeq})$;~
              $\sys_2 \assign [ \col{f}_{i_0,2} | \matcol{\sys}{i_0+1} | \cdots | \matcol{\sys}{\nbeq} ]$
        \item $\popov_1 \assign \algoname{PopovAppBasis}(\orders_1,\sys_1,\shifts)$;
              $\minDegs_1 \assign$ diagonal degrees of $\popov_1$
        \item $\res \assign \popov_1 \sys_2 \bmod \xDiag{\orders_2}$ \eolcomment{using partial linearization}
        \item $\popov_2 \assign \algoname{PopovAppBasis}(\orders_2,\mat{G},\shifts+\minDegs_1)$;
              $\minDegs_2 \assign$ diagonal degrees of $\popov_2$
        \item \algoword{Return} $\algoname{KnownDegAppBasis}(\orders,\sys,\shifts,\minDegs_1+\minDegs_2)$
      \end{enumerate}
  }
\end{algobox}

\begin{proof}[Proof of \cref{thm:fast_pab}]
  Concerning the base case of the recursion at Step~\textbf{1},
  \citep[Prop.\,7.1]{JeNeScVi17} shows that it correctly computes the
  $\shifts$-Popov basis of $\modApp$ using $\bigO{\nbun^\expmatmul
  \log(\nbun)}$ operations. When the algorithm is called on an instance with
  $\vsdim>\nbun$, Step~\textbf{1} is performed less than $2\vsdim/\nbun$ times
  in the whole computation, thus leading to a total contribution of
  $\bigO{\nbun^{\expmatmul-1} \vsdim \log(\nbun)}$ operations in the cost
  bound.

  Let us now study Step~\textbf{3}, where $\vsdim>\nbun$ and $\nbeq<\nbun$. The
  instance $(\orders,\sys)$ is first split into two instances
  $(\orders_1,\sys_1)$ and $(\orders_2,\sys_2)$ such that $\sumVec{\orders_1} =
  \lfloor\vsdim/2\rfloor$ and $\sumVec{\orders_2} = \lceil\vsdim/2\rceil$, and
  with $\cdeg{\sys_1} < \orders_1$ and $\cdeg{\sys_2} < \orders_2$.
  Furthermore, since $\nbeq<\nbun$, the column dimensions of both $\sys_1$ and
  $\sys_2$ are less than their row dimension, so that the recursive calls at
  Steps~\textbf{3.e} and~\textbf{3.g} will not lead to entering
  Step~\textbf{2}. We note that when $d=\order_{i_0}$ the first entry of
  $\orders_2$ is zero; then, one can discard this entry and the corresponding
  zero column of $\sys_2$.

  At Step~\textbf{3.f}, the residual $\res$ is computed in
  $\bigO{\polmatmultime{\nbun,\vsdim/\nbun}}$ operations according to the
  second item of \cref{lem:compute_residual}. Indeed, we have
  $\vsdim>\nbun>\nbeq$, $\sumVec{\cdeg{\appbas_1}} \le \lfloor\vsdim/2\rfloor
  \le \vsdim$ by \cref{lem:degdet_appbasis}, and $\sumVec{\orders_2} =
  \lceil\vsdim/2\rceil \le \vsdim$ by construction.

  Let us define the shift $\shifts[t] \in \shiftSpace$ as $\shifts[t] =
  \rdeg[\shifts]{\appbas_1} = \shifts+\minDegs_1$. Suppose that the recursive
  calls correctly compute the $\shifts$- and $\shifts[t]$-Popov
  bases $\appbas_1$ and $\appbas_2$ of $\modAppCustom{\orders_1}{\sys_1}$ and
  $\modAppCustom{\orders_2}{\res}$. Then, the $\shifts$-minimal degree of
  $\modApp$ is $\minDegs_1 + \minDegs_2$ according to the fourth item of
  \cref{lem:recursiveness}. Thus, by \cref{prop:algo:knowndeg_pab},
  Step~\textbf{3.h} computes the sought approximant basis in
  $\bigO{\appbastime{\nbun,\vsdim/\nbun}}$ operations.

  The recursive calls (Steps~\textbf{3.e} and \textbf{3.g}) are with the same
  dimension $\nbun$ and half the total order $\vsdim/2$, hence the cost bound
  in the case $\nbeq<\nbun$.

  Step~\textbf{2} deals with the case $\nbeq\ge\nbun$, and starts by calling
  \cref{algo:reduce_coldim} to efficiently reduce to $\nbeq<\nbun$. According
  to the above discussion, Step~\textbf{2} may only be entered once, at the
  initial call to the algorithm. The correctness and cost bound in the case
  $\nbeq\ge\nbun$ then follow from \cref{prop:algo:reduce_coldim} and from the
  arguments used above concerning Step~\textbf{3}.
\end{proof}

\section{Computing approximant bases for weakly unbalanced shifts}
\label{sec:weakly_unbalanced_shift}

In this section, we describe approximant basis algorithms which are efficient
when the shift is weakly unbalanced around its minimum value
(\cref{subsec:weakly_unbalanced_around_min}) or around its maximum value
(\cref{subsec:weakly_unbalanced_around_max}).  We recall these notions from
\cref{sec:intro}. In the first case, this means that $\shifts$ satisfies the
assumption $\hypsmin$, that is, $\sumVec{\shifts-\min(\shifts)} \in
\bigO{\vsdim}$ with $\vsdim = \sumVec{\orders}$. In particular, a balanced
shift (that is, one which satisfies $\hypsbal$: $\max(\shifts) - \min(\shifts)
\in \bigO{\vsdim/\nbun}$) also satisfies $\hypsmin$.  In the second case, this
means that $\shifts$ satisfies $\hypsmax$: $\sumVec{\!\max(\shifts)-\shifts}
\in \bigO{\vsdim}$.

For shifts satisfying $\hypsmin$, any $\shifts$-minimal approximant basis
$\appbas$ has small average row degree $\degExp$, which means that the
overlapping linearization of \cref{sec:knowndeg:overlapping_parlin} at degree
$\degExp$ will efficiently recover a large number of the rows of $\appbas$ (all
those of degree $\le \degExp$).  Then, \citet{ZhoLab12} show how the computed
rows allow us to discard a corresponding large number of rows and columns in
the overlapping linearization at degree $2\degExp$, making it efficient to
recover the rows of $\appbas$ of degree $\le 2\degExp$.  This process is
continued until all rows are obtained.

In \cref{subsec:weakly_unbalanced_around_min}, we present a generalization of
\citep[Algo.\,1]{ZhoLab12} which supports arbitrary orders and returns the
basis in $\shifts$-ordered weak Popov form. We do not assume that $\shifts$
satisfies $\hypsmin$, but we describe the algorithm and a complexity analysis
using the parameter $\sumVec{\shifts-\min(\shifts)}$ (see
\cref{prop:algo:zhou_labahn_min}). Besides, we observe that this generalized
algorithm remains efficient: it has the same cost bound as in
\citepalias[Thm.\,5.3]{ZhoLab12} if we assume $\hypsmin$.

For shifts satisfying $\hypsmax$, an $\shifts$-minimal approximant basis
$\appbas$ may have both large average row degree and large average column
degree. Nevertheless, under this assumption, the size of $\appbas$ remains in
$\bigO{\nbun\vsdim}$, and we can guess the location of the columns of $\appbas$
which may have uniformly large degrees: they correspond to the smallest entries
of the shift. For example, with $\shifts = (-\vsdim,0,\ldots,0)$, only the
first column of $\appbas$ may have all its entries of degree close to $\vsdim$.
Based on this, \citepalias[Algo.\,2]{ZhoLab12} uses output column linearization
to balance the degrees according to this guessed column degree profile of
$\appbas$.  This is similar to the output column linearization of
\cref{algo:knowndeg_pab}, except that here we have no guarantee that the
guessed column degree is the actual column degree of $\appbas$.  As a result,
the linearization will be called a logarithmic number of times, until all rows
of $\appbas$ are revealed.  The efficiency of each step depends on the quantity
$\sumVec{\!\max(\shifts)-\shifts}$, which is assumed small in $\hypsmax$.

In \cref{subsec:weakly_unbalanced_around_max}, we present a generalization of
\citepalias[Algo.\,2]{ZhoLab12} which supports arbitrary orders and returns a
basis in $\shifts$-ordered weak Popov form. We do not assume that $\shifts$
satisfies $\hypsmax$ but the cost bound is parametrized by
$\sumVec{\!\max(\shifts)-\shifts}$ (see \cref{prop:algo:zhou_labahn_max}).  As
above, this generalized algorithm is efficient: it has the same cost bound as
in \citepalias[Thm.\,6.14]{ZhoLab12} if we assume \(\hypsmax\).

Before going into detail, we remark that the first item (resp.~second item) of
\cref{thm:zhou_labahn} follows as a corollary of
\cref{prop:algo:zhou_labahn_min} (resp.~\cref{prop:algo:zhou_labahn_max}),
although these propositions only prove that we can compute an $\shifts$-ordered
weak Popov basis of $\modApp$ within the claimed cost bound. Indeed, such a
basis reveals the $\shifts$-minimal degree of $\modApp$ and therefore it only
remains to call \cref{algo:knowndeg_pab}, which also fits within the claimed
cost bound, to obtain the $\shifts$-Popov basis.

\subsection{Weakly unbalanced shift around its minimum value}
\label{subsec:weakly_unbalanced_around_min}

Here we consider \(\shifts\)-minimal approximant bases for shifts such that
$\sumVec{\shifts-\min(\shifts)}$ is small. We extend the approach of
\citep[Sec.\,3~to\,5]{ZhoLab12} to work with an arbitrary order, and we seek a
basis in $\shifts$-ordered weak Popov form. In this approach, one computes
approximants for overlapping linearizations of $(\orders,\sys)$, for a
linearization degree parameter $\degExp$ which is doubled iteratively until a
basis of $\modApp$ is obtained. The correctness is based on the next result,
which shows how to use the knowledge of a basis of $\modAppOvlpLin$ to find a
basis of $\modAppOvlpLin[2]$ (see \cref{dfn:ovlplin} for the overlapping
linearization giving the matrix \(\ovlpLin{\orders,\degExp}{\sys}\) and the order
\(\ovlpLinOrd{\degExp}{\orders}\)).

Hereafter, for $\rdim\in\ZZp$, we write $\idMatEven{\rdim}$ for the
$\rdim\times(\lceil\rdim/2\rceil-1)$ matrix whose column $k$ is the column $2k$
of $\idMat[\rdim]$, and $\idMatOdd{\rdim}$ for the
$\rdim\times(\lfloor\rdim/2\rfloor+1)$ submatrix of $\idMat[\rdim]$ formed by
the remaining columns.  We stress that if $\rdim$ is even, the last column of
$\idMat[\rdim]$ does not appear in $\idMatEven{\rdim}$ but in
$\idMatOdd{\rdim}$.  In particular, $\idMatEven{1}$ and $\idMatEven{2}$ are the
empty $1\times 0$ and $2\times 0$ matrices, while $\idMatOdd{2} = \idMat[2]$.
Besides, in what follows $\idMatEven{m}$ and $\idMatOdd{m}$ refer to the
$0\times 0$ matrix when $\rdim \in \{-1,0\}$, and we use the notation
$\matz_{\rdim\times?}$ or $\matz_{?\times\cdim}$ for the zero matrix when the
row dimension $\rdim$ or the column dimension $\cdim$ is not clear from the
context.

\begin{lemma}
  \label{lem:correctness_zholab_min}
  Let $\orders=(\order_1,\ldots,\order_\nbeq) \in \orderSpace$, let $\sys \in
  \sysSpace$ with $\cdeg{\sys} < \orders$, let $\shifts\in\shiftSpace$, and let
  $\degExp\in\ZZp$.  As in \cref{dfn:ovlplin}, let $\quoExp_i =
  \lceil\frac{\order_i}{\degExp}-1\rceil$ for $1\le i\le \nbeq$ and
  $\expand{\nbeq} = \sum_{1\le i\le\nbeq} \max(\quoExp_i-1,0)$.  Then, consider
  the overlapping linearization $\ovlpLin{\orders,2\degExp}{\sys} \in
  \polMatSpace[(\nbun+\expand{\nbeq}_2)][(\nbeq+\expand{\nbeq}_2)]$, with
  \[
    \expand{\nbeq}_2 = \sum_{1\le i\le\nbeq}
  \max\left(\left\lceil\frac{\order_i}{2\degExp}-1\right\rceil-1,0\right) = \sum_{1\le i\le\nbeq}
  \max(\lfloor\quoExp_i/2\rfloor-1,0).
  \]
  We augment this matrix with \(\expand{\nbeq}-\expand{\nbeq}_2\) zero rows in
  order to define
  \[
    \sysL_2 = 
    \diag{\idMat[\nbun],\idMatEven{\quoExp_1-1},\ldots,\idMatEven{\quoExp_\nbeq-1}}
    \ovlpLin{\orders,2\degExp}{\sys}
    =
    \permMat^{-1}
    \begin{bmatrix}
      \ovlpLin{\orders,2\degExp}{\sys} \\
      \matz
    \end{bmatrix}
    \;\;
    \in \polMatSpace[(\nbun+\expand{\nbeq})][(\nbeq+\expand{\nbeq}_2)],
  \]
  where $\permMat$ is the inverse of the permutation matrix
  \[
    \permMat^{-1} =
    \begin{bmatrix}
      \idMat[\nbun] \\
      & \idMatEven{\quoExp_1-1} & & & \idMatOdd{\quoExp_1-1} \\
      & & \ddots & & & \ddots \\
      & & &  \idMatEven{\quoExp_\nbeq-1} & & & \idMatOdd{\quoExp_\nbeq-1}
    \end{bmatrix}
    \; \in \matSpace[(\nbun+\expand{\nbeq})].
  \]
  Now define a matrix $\selCol$ which, through right-multiplication, selects
  a given set of $\nbeq+\expand{\nbeq}_2$ columns from any matrix with
  $\nbeq+\expand{\nbeq}$ columns, and a matrix $\selColComp$ which selects
  the $\expand{\nbeq}-\expand{\nbeq}_2$ remaining columns:
  $\selCol = \diag{\selCol_1,\ldots,\selCol_\nbeq} \in
  \matSpace[(\nbeq+\expand{\nbeq})][(\nbeq+\expand{\nbeq}_2)]$
  and
  $\selColComp = \diag{\selColComp_1,\ldots,\selColComp_\nbeq} \in
  \matSpace[(\nbeq+\expand{\nbeq})][(\expand{\nbeq}-\expand{\nbeq}_2)]$
  with, for $1\le i\le \nbeq$,
  \[
    \selCol_i =
    \begin{bmatrix}
      1 \\
      & \idMatEven{\quoExp_i-1} \\
    \end{bmatrix}
    \in \matSpace[\max(\quoExp_i,1)][\max(\lfloor\quoExp_i/2\rfloor,1)]
    \quad\text{and}\quad
    \selColComp_i =
    \begin{bmatrix}
      \matz_{1\times ?} \\
      \idMatOdd{\quoExp_i-1}
    \end{bmatrix}
    \in \matSpace[\max(\quoExp_i,1)][(\max(\quoExp_i,1)-\max(\lfloor\quoExp_i/2\rfloor,1))]
    .
  \]
  By construction, we have $\ovlpLin{\orders,\degExp}{\sys} \selCol = \sysL_2
  \bmod \xDiag{\ovlpLinOrd{\degExp}{\orders}\selCol}$ and $0 \le
  \ovlpLinOrd{2\degExp}{\orders} - \ovlpLinOrd{\degExp}{\orders}\selCol \le
  2\degExp$.

  Let us define the order $\ordersL =
  (\ovlpLinOrd{\degExp}{\orders},\ovlpLinOrd{2\degExp}{\orders}) \in
  \ZZp^{2\nbeq+\expand{\nbeq}+\expand{\nbeq}_2}$, the shifts $\shiftsL =
  (\shifts-\min(\shifts),\unishift) \in \ZZ^{\nbun+\expand{\nbeq}}$ and
  $\expand{\shifts} = (\shifts-\min(\shifts),\unishift) \in
  \ZZ^{\nbun+\expand{\nbeq}_2}$, and the matrix $\sysL =
  [\ovlpLin{\orders,\degExp}{\sys} \;\; \sysL_2] \in
  \polMatSpace[(\nbun+\expand{\nbeq})][(2\nbeq+\expand{\nbeq}+\expand{\nbeq}_2)]$.
  Then,
  \begin{itemize}
    \item For any $\expand{\shifts}$-ordered weak
      Popov basis $\appbas \in \polMatSpace[(\nbun+\expand{\nbeq}_2)]$ of
      $\modAppOvlpLin[2]$,
      the matrix
      \begin{equation}
        \label{eqn:build_appbasL}
        \permMat^{-1}
        \begin{bmatrix}
          \appbas & -\appbas_\ell \, \expand{\sys} \, \selColComp \bmod
                        \xDiag{\ovlpLinOrd{\degExp}{\orders}\selColComp} \\
          \matz & \xDiag{\ovlpLinOrd{\degExp}{\orders}\selColComp}
        \end{bmatrix}
        \permMat
        \;\in \polMatSpace[(\nbun+\expand{\nbeq})]
      \end{equation}
      is an $\shiftsL$-ordered weak Popov basis of $\modAppL$,
      where $\appbas_\ell \in \polMatSpace[(\nbun+\expand{\nbeq}_2)][\nbun]$ is
      the submatrix of $\appbas$ formed by its leftmost $\nbun$ columns and
      $\expand{\sys} \in \polMatSpace[\nbun][(\nbeq+\expand{\nbeq})]$ is the
      submatrix of $\ovlpLin{\orders,\degExp}{\sys}$ formed by its top $\nbun$
      rows.
    \item For any $\shiftsL$-ordered weak Popov basis $\appbasL \in
      \polMatSpace[(\nbun+\expand{\nbeq})]$ of $\modAppL$, the leading
      principal $(\nbun+\expand{\nbeq}_2)\times(\nbun+\expand{\nbeq}_2)$
      submatrix of $\permMat\appbasL\permMat^{-1}$ is an
      $\expand{\shifts}$-ordered weak Popov basis of $\modAppOvlpLin[2]$.
    \item For any vectors $\app\in\polMatSpace[1][\nbun]$ and
      $\row{q} \in \polMatSpace[1][\expand{\nbeq}]$ such that $\rdeg{\row{q}} <
      \rdeg{\app} \le \degExp$ and $[\app \;\; \row{q}] \in \modAppOvlpLin$, we
      have $[\app \;\; \row{q}] \in \modAppL$.
  \end{itemize}
\end{lemma}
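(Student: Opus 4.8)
My plan is to first record an explicit description of $\modAppL$, then settle item~(iii), then the two basis correspondences~(ii) and~(i); the ordered-weak-Popov shape in~(i) is the only delicate point. Unfolding $\sysL = [\ovlpLin{\orders,\degExp}{\sys}\;\;\sysL_2]$, $\ordersL = (\ovlpLinOrd{\degExp}{\orders},\ovlpLinOrd{2\degExp}{\orders})$ and $\sysL_2 = \diag{\idMat[\nbun],\idMatEven{\quoExp_1-1},\ldots,\idMatEven{\quoExp_\nbeq-1}}\ovlpLin{\orders,2\degExp}{\sys}$, one sees at once that a row vector $[\row{p}_0\;\;\row{p}_1]$ with $\row{p}_0$ of length $\nbun$ lies in $\modAppL$ exactly when $[\row{p}_0\;\;\row{p}_1]\in\modAppOvlpLin$ and $[\row{p}_0\;\;\row{p}_1']\in\modAppOvlpLin[2]$, where $\row{p}_1' = \row{p}_1\diag{\idMatEven{\quoExp_1-1},\ldots,\idMatEven{\quoExp_\nbeq-1}}$ is the part of $\row{p}_1$ supported on the even-indexed columns inside each $\emat_i$-block. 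Granting this, item~(iii) is quick: from $[\app\;\;\row{q}]\in\modAppOvlpLin$ with $\rdeg{\row{q}}<\degExp$ and $\rdeg{\app}\le\degExp$, the second item of \cref{lem:ovlplin_app} (at degree $\degExp$) gives $\app\in\modApp$, and then its first item at degree $2\degExp$ extends $\app$ to an approximant of $\ovlpLin{\orders,2\degExp}{\sys}$; a block-by-block check — pairing consecutive degree-$\degExp$ relations of $[\app\;\;\row{q}]$ and lifting moduli using $\rdeg{\app},\rdeg{\row{q}}\le\degExp$, exactly as in the last paragraph of the proof of \cref{lem:ovlplin_app} — identifies this extension with $[\app\;\;\row{q}']$. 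Hence $[\app\;\;\row{q}']\in\modAppOvlpLin[2]$, and the description gives $[\app\;\;\row{q}]\in\modAppL$.

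For item~(ii), let $\appbasL$ be an $\shiftsL$-ordered weak Popov basis of $\modAppL$, and partition $\{1,\ldots,\nbun+\expand{\nbeq}\}$ into the ``kept'' indices (the first $\nbun$, together with the even positions inside each $\emat_i$-block) and the ``remaining'' ones (the odd positions); this is the partition underlying $\permMat$. By the first item of \cref{lem:permuted_owpopov}, the leading principal $(\nbun+\expand{\nbeq}_2)\times(\nbun+\expand{\nbeq}_2)$ submatrix $\reduced$ of $\permMat\appbasL\permMat^{-1}$ is in $\expand{\shifts}$-ordered weak Popov form, since the sub-shift of $\shiftsL$ on the kept indices equals $\expand{\shifts}$ (the entries of $\shiftsL$ beyond the first $\nbun$ are zero and merely get permuted). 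That $\reduced$ is a basis of $\modAppOvlpLin[2]$ then follows from the description above by an argument of the same flavour as \cref{lem:ok_rows}: an element of $\modAppL$ supported on the ``remaining'' coordinates has each such coordinate divisible by the order of the corresponding $\emat_i$-column, hence lies in the $\polRing$-span of the associated elementary rows; eliminating these, the map $[\row{p}_0\;\;\row{p}_1]\mapsto[\row{p}_0\;\;\row{p}_1']$ carries a basis of $\modAppL$ onto a basis of $\modAppOvlpLin[2]$, and the rows of $\reduced$ are precisely the images of the rows of $\permMat\appbasL\permMat^{-1}$ under this map.

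For item~(i), write $M = [\begin{smallmatrix}\appbas & \mat{Q}\\ \matz & \xDiag{\ovlpLinOrd{\degExp}{\orders}\selColComp}\end{smallmatrix}]$ with $\mat{Q} = -\appbas_\ell\expand{\sys}\selColComp\bmod\xDiag{\ovlpLinOrd{\degExp}{\orders}\selColComp}$, so the claimed matrix is $\permMat^{-1}M\permMat$. To see that its rows lie in $\modAppL$, I split $(\permMat^{-1}M\permMat)\sysL$ into its $\sysL_2$- and $\ovlpLin{\orders,\degExp}{\sys}$-parts. Using $\permMat\sysL_2 = [\begin{smallmatrix}\ovlpLin{\orders,2\degExp}{\sys}\\ \matz\end{smallmatrix}]$, the $\sysL_2$-part equals $\appbas\,\ovlpLin{\orders,2\degExp}{\sys} = 0 \bmod \xDiag{\ovlpLinOrd{2\degExp}{\orders}}$, which holds because $\appbas$ is a basis of $\modAppOvlpLin[2]$. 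For the $\ovlpLin{\orders,\degExp}{\sys}$-part, working block by block, the columns selected by $\selCol$ come out by truncating the matching $\modAppOvlpLin[2]$-relations modulo the relevant order (at most $2\degExp$), since each of them agrees modulo $\var^{2\degExp}$ with a column of $\ovlpLin{\orders,2\degExp}{\sys}$; the columns selected by $\selColComp$ hold by the very definition of $\mat{Q}$; and the bottom rows of $M$ supply the elementary approximants handling those $\selColComp$-columns. Generation of $\modAppL$ is the argument of~(ii) run backwards: modulo the rows of $\permMat^{-1}M\permMat$, any element of $\modAppL$ is supported on the ``remaining'' coordinates, hence in the $\polRing$-span of the bottom rows of $M$.

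The delicate point is the ordered-weak-Popov shape in~(i). It comes from the second item of \cref{lem:permuted_owpopov} (same partition as in~(ii), noting $\shiftsL\permMat = \shiftsL$), whose hypothesis is that $M$ is in $\shiftsL$-ordered weak Popov form; since $\appbas$ is in $\expand{\shifts}$-ordered weak Popov form and the bottom-right block of $M$ is diagonal with monic diagonal, this reduces to the bound $\deg(\matrow{\mat{Q}}{r}) < \rdeg[\expand{\shifts}]{\matrow{\appbas}{r}}$ for every row $r$. I would prove it by cases. If $\rdeg[\expand{\shifts}]{\matrow{\appbas}{r}}\ge 2\degExp$, this is immediate, because every entry of $\mat{Q}$ is reduced modulo an entry of $\ovlpLinOrd{\degExp}{\orders}$ and so has degree below $2\degExp$. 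If $\rdeg[\expand{\shifts}]{\matrow{\appbas}{r}} < 2\degExp$, then $\deg(\matrow{\appbas}{r}) < 2\degExp$ (all entries of $\expand{\shifts}$ are nonnegative), so the second item of \cref{lem:ovlplin_app} at degree $2\degExp$ shows that the length-$\nbun$ prefix of $\matrow{\appbas}{r}$ lies in $\modApp$; then, by the explicit quotient formula in the first item of \cref{lem:ovlplin_app}, the row $\matrow{\mat{Q}}{r}$ is the subvector, indexed by the $\selColComp$-columns, of the degree-$\degExp$ overlapping-linearization quotient of that prefix, and this quotient has row degree strictly less than that of the prefix, itself at most $\rdeg[\expand{\shifts}]{\matrow{\appbas}{r}}$. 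Carrying out this case split carefully — in particular matching $\matrow{\mat{Q}}{r}$ with the relevant quotient in the second case — is where most of the effort lies.
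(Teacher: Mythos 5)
Your description of $\modAppL$ as the set of $[\row{p}_0\;\;\row{p}_1]$ with $[\row{p}_0\;\;\row{p}_1]\in\modAppOvlpLin$ and $[\row{p}_0\;\;\row{p}_1']\in\modAppOvlpLin[2]$ is correct and is a clean way to package the lemma; your treatments of items~(i) and~(iii) are sound and close in spirit to the paper's (the paper proves~(iii) by constructing the candidate $\row{r}$ explicitly and appealing to the uniqueness clause of \cref{lem:ovlplin_app} rather than identifying the degree-$2\degExp$ quotient directly, but the two routes are essentially equivalent).

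The gap is in item~(ii), at the point where you argue that $\reduced$ is a \emph{basis} of $\modAppOvlpLin[2]$. You write that the projection ``carries a basis of $\modAppL$ onto a basis of $\modAppOvlpLin[2]$'' and that ``the rows of $\reduced$ are precisely the images of the rows of $\permMat\appbasL\permMat^{-1}$''; neither can be right as stated, since $\modAppL$ has rank $\nbun+\expand{\nbeq}$ while $\modAppOvlpLin[2]$ has rank $\nbun+\expand{\nbeq}_2$, so the projection of a basis of $\modAppL$ is a generating family with $\expand{\nbeq}-\expand{\nbeq}_2$ excess members, and $\reduced$ keeps only the projections of the first $\nbun+\expand{\nbeq}_2$ rows of $\permMat\appbasL\permMat^{-1}$. \cref{lem:permuted_owpopov} does give that $\reduced$ is nonsingular, in $\expand{\shifts}$-ordered weak Popov form, with rows in $\modAppOvlpLin[2]$; but a nonsingular matrix with rows in a module need not generate it, and nothing in your argument shows that the $\expand{\nbeq}-\expand{\nbeq}_2$ discarded projections already lie in the row space of $\reduced$. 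The paper closes this by routing through item~(i): writing $\appbasL=\mat{U}\mat{B}$ with $\mat{U}$ unimodular, the block upper-triangular shape of $\permMat\mat{B}\permMat^{-1}$ yields $\reduced=\mat{U}_0\appbas$ for $\mat{U}_0$ the leading principal block of $\permMat\mat{U}\permMat^{-1}$, and then $\deg\det\mat{U}_0=0$ because $\reduced$ and $\appbas$ have the same $\expand{\shifts}$-row degree (both $\appbasL$ and $\mat{B}$ being $\shiftsL$-ordered weak Popov bases of $\modAppL$). Some determinant-degree comparison of this kind against a known basis is needed; the elimination picture alone does not supply it.
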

\begin{proof}
  \emph{(First item.)}  We define $\mat{Q} = -\appbas_\ell \, \expand{\sys} \,
  \selColComp \bmod \xDiag{\ovlpLinOrd{\degExp}{\orders}\selColComp} \in
  \polMatSpace[(\nbun+\expand{\nbeq}_2)][(\expand{\nbeq}-\expand{\nbeq}_2)]$
  and we denote by $\mat{B}$ the matrix in \cref{eqn:build_appbasL}.  Then, we
  start by showing that all rows of $\mat{B}$ are in $\modAppL$, that is,
  $\mat{B} \sysL_2 = 0 \bmod \xDiag{\ovlpLinOrd{2\degExp}{\orders}}$ and
  $\mat{B} \ovlpLin{\orders,\degExp}{\sys} = 0 \bmod
  \xDiag{\ovlpLinOrd{\degExp}{\orders}}$.  First, we have
  \[
    \mat{B} \sysL_2 =
    \permMat^{-1}
    \begin{bmatrix}
      \appbas & \mat{Q} \\
      \matz & \xDiag{\ovlpLinOrd{\degExp}{\orders}\selColComp}
    \end{bmatrix}
    \permMat
    \permMat^{-1}
    \begin{bmatrix}
      \ovlpLin{\orders,2\degExp}{\sys} \\
      \matz
    \end{bmatrix}
    =
    \permMat^{-1}
    \begin{bmatrix}
      \appbas \ovlpLin{\orders,2\degExp}{\sys} \\
      \matz
    \end{bmatrix}
    =
    0 \bmod \xDiag{\ovlpLin{2\degExp}{\orders}}
  \]
  by assumption on $\appbas$. Since $\ovlpLinOrd{2\degExp}{\orders} \ge
  \ovlpLinOrd{\degExp}{\orders}\selCol$, this also gives $\mat{B}
  \ovlpLin{\orders,\degExp}{\sys} \selCol = \mat{B} \sysL_2 = 0 \bmod
  \xDiag{\ovlpLinOrd{\degExp}{\orders}\selCol}$ and thus it remains to show
  that $\mat{B} \ovlpLin{\orders,\degExp}{\sys} \selColComp = 0 \bmod
  \xDiag{\ovlpLinOrd{\degExp}{\orders}\selColComp}$.  By construction, the last
  $\expand{\nbeq}$ rows of $\permMat\ovlpLin{\orders,\degExp}{\sys}\selColComp$
  are formed by $\expand{\nbeq}_2$ zero rows followed by the identity matrix:
  \begin{equation}
    \label{eqn:zholab_formula_zero_id}
    \begin{bmatrix}
      \matz_{?\times\nbun} & \idMat[\expand{\nbeq}]
    \end{bmatrix}
    \permMat\ovlpLin{\orders,\degExp}{\sys}\selColComp
    =
    \left[ \begin{smallmatrix}
      \trsp{(\idMatEven{\quoExp_1-1})} \\
      & \ddots \\
      & & \trsp{(\idMatEven{\quoExp_\nbeq-1})} \\
      \trsp{(\idMatOdd{\quoExp_1-1})} \\
      & \ddots \\
      & & \trsp{(\idMatOdd{\quoExp_\nbeq-1})} \\
    \end{smallmatrix} \right]
    \left[ \begin{smallmatrix}
      \matz_{?\times 1} & \idMat[\quoExp_1-1] \\
      & & \ddots \\
      & & & \matz_{?\times 1} & \idMat[\quoExp_\nbeq-1]
    \end{smallmatrix}
    \right]
    \left[ \begin{smallmatrix}
      \matz_{1\times ?} \\
      \idMatOdd{\quoExp_1-1} \\
      & \ddots \\
      & & \matz_{1\times?} \\
      & & \idMatOdd{\quoExp_\nbeq-1}
    \end{smallmatrix} \right]
    =
    \begin{bmatrix}
      \matz_{\expand{\nbeq}_2\times?} \\
      \idMat[\expand{\nbeq}-\expand{\nbeq}_2]
    \end{bmatrix} .
  \end{equation}
  As a consequence, we have
  \[
    \mat{B} \ovlpLin{\orders,\degExp}{\sys} \selColComp
    =
    \permMat^{-1}
    \begin{bmatrix}
      \appbas & \mat{Q} \\
      \matz & \xDiag{\ovlpLinOrd{\degExp}{\orders}\selColComp}
    \end{bmatrix}
    \permMat \ovlpLin{\orders,\degExp}{\sys} \selColComp
    =
    \permMat^{-1}
    \begin{bmatrix}
      \appbas_\ell \, \expand{\sys} \, \selColComp + \mat{Q} \\
      \xDiag{\ovlpLinOrd{\degExp}{\orders}\selColComp}
    \end{bmatrix}
    = 0 \bmod \xDiag{\ovlpLinOrd{\degExp}{\orders}\selColComp}.
  \]

  Now, we prove that any $\appL \in \modAppL$ is a combination of the rows of
  $\mat{B}$.  Write $\appL = [\app \;\; \row{q}]\permMat$ with $\app \in
  \polMatSpace[1][(\nbun+\expand{\nbeq}_2)]$ and $\row{q} \in
  \polMatSpace[1][(\expand{\nbeq}-\expand{\nbeq}_2)]$.  Then, $\appL \in
  \modAppL$ implies first $\app \in
  \modAppOvlpLin[2]$,
  hence $\app = \rowgrk{\lambda} \appbas$ for some $\rowgrk{\lambda} \in
  \polMatSpace[1][(\nbun+\expand{\nbeq}_2)]$, and second $\rowgrk{\lambda}
  \appbas_\ell \expand{\sys} \selColComp + \row{q} = [\app \;\; \row{q}]
  \permMat \ovlpLin{\orders,\degExp}{\sys}\selColComp = 0 \bmod
  \xDiag{\ovlpLinOrd{\degExp}{\orders}\selColComp}$, hence $\row{q} =
  \rowgrk{\lambda} \mat{Q} + \rowgrk{\mu}
  \xDiag{\ovlpLinOrd{\degExp}{\orders}\selColComp}$ for some $\rowgrk{\mu} \in
  \polMatSpace[1][(\expand{\nbeq}-\expand{\nbeq}_2)]$. Thus, $\appL =
  [\rowgrk{\lambda} \;\; \rowgrk{\mu}] \permMat \mat{B}$.

  It remains to prove that $\permMat\mat{B}\permMat^{-1}$ is in
  $\shiftsL$-ordered weak Popov form; then, the second item of
  \cref{lem:permuted_owpopov} shows that $\mat{B}$ is also in
  $\shiftsL$-ordered weak Popov form (note that $\shiftsL \permMat =
  \shiftsL$). Since the bottom-right block of $\permMat\mat{B}\permMat^{-1}$ is
  a diagonal matrix and the top-left block is already in
  $\expand{\shifts}$-ordered weak Popov form, where
  $\shiftsL=(\expand{\shifts},\unishift)$, it is enough to show that
  $\rdeg{\mat{Q}} < \rdeg[\expand{\shifts}]{\appbas}$.  Since
  $\expand{\shifts}\ge0$, we have $\rdeg{\appbas} \le
  \rdeg[\expand{\shifts}]{\appbas}$ and thus it is enough to show that
  $\rdeg{\mat{Q}} < \rdeg{\appbas}$.  Consider a row $[\app \;\; \row{q}]$ of
  $[\appbas \;\; \mat{Q}]$.  If $\rdeg{\app} \ge 2\degExp$, then
  $\rdeg{\row{q}} < \rdeg[\expand{\shifts}]{\app}$ follows since by
  construction we have $\rdeg{\row{q}} < \max(\ovlpLinOrd{\degExp}{\orders})
  \le 2\degExp$.  If $\rdeg{\app} < 2\degExp$, since $\app$ is in
  $\modAppOvlpLin[2]$,
  the second item of \cref{lem:ovlplin_app} (with parameter $2\degExp$) shows
  that the $\nbun$ leftmost entries of $\app$ are in $\modApp$; then, the first
  item of the same lemma (with parameter $\degExp$) gives in particular
  $\rdeg{\row{q}} < \rdeg{\app}$.

  \emph{(Second item.)}  The first item implies that $\appbasL = \mat{U} \mat{B}$ for
  some unimodular matrix $\mat{U}$.  Let $\mat{U}_0$ and $\appbas_0$ denote the
  leading principal $(\nbun+\expand{\nbeq}_2)\times(\nbun+\expand{\nbeq}_2)$
  submatrices of $\permMat \mat{U} \permMat^{-1}$ and $\permMat \appbasL
  \permMat^{-1}$.  The first item of \cref{lem:permuted_owpopov} shows that
  $\appbas_0$ is in $\expand{\shifts}$-ordered weak Popov form.  Besides, the
  identity $\permMat\appbasL\permMat^{-1} = \permMat\mat{U}\permMat^{-1}
  \permMat \mat{B}\permMat^{-1}$ and the triangular shape of
  $\permMat\mat{B}\permMat^{-1}$ yield $\appbas_0 = \mat{U}_0 \appbas$.
  Furthermore, $\permMat\appbasL\permMat^{-1}$ and
  $\permMat\mat{B}\permMat^{-1}$ being $\shiftsL$-ordered weak Popov bases of
  the same module, they have the same $\shiftsL$-minimal degree (see
  \cref{subsec:forms}), and thus the same $\shiftsL$-row degree.  This implies
  that their leading principal submatrices $\appbas_0$ and $\appbas$ have the
  same $\expand{\shifts}$-row degree, hence
  \[
    \deg(\det(\mat{U}_0)) = \deg(\det(\appbas_0)) -\deg(\det(\appbas))
    = \sumVec{\rdeg[\expand{\shifts}]{\appbas_0}} -  \sumVec{\rdeg[\expand{\shifts}]{\appbas}}
    = 0.
  \]
  This means that $\mat{U}_0$ is unimodular, and therefore $\appbas_0$ is a
  basis of
  $\modAppOvlpLin[2]$.

  \emph{(Third item.)}  We want to prove that $[\app \;\; \row{q}] \in
  \modAppCustom{\ovlpLinOrd{2\degExp}{\orders}}{\sysL_2}$.  The second item of
  \cref{lem:ovlplin_app} implies that $\app\in\modApp$, while its first item
  gives the uniqueness of $\row{q}$: if $\row{r} \in
  \polMatSpace[1][\expand{\nbeq}]$ is such that $\rdeg{\row{r}}<\rdeg{\app}$
  and $[\app \;\; \row{r}] \in \modAppOvlpLin$, then $\row{r} = \row{q}$.
  (Note that here the constraint $\cdeg{\col{r}} <
  \ovlpLinOrd{\degExp}{\orders}\trsp{\emat}$ from \cref{lem:ovlplin_app} is
  implied by $\rdeg{\row{r}}<\degExp <
  \min(\ovlpLinOrd{\degExp}{\orders}\trsp{\emat})$.)

  \cref{lem:ovlplin_app} gives $\row{q}_2 \in
  \polMatSpace[1][\expand{\nbeq}_2]$ such that $\rdeg{\row{q}_2} < \rdeg{\app}$
  and $[\app \;\; \row{q}_2] \in \modAppOvlpLin[2]$.  Then, define $\row{q}_3 =
  -\app\expand{\sys}\selColComp \bmod
  \xDiag{\ovlpLinOrd{\degExp}{\orders}\selColComp}$, which is a subvector of
  $\row{q} = -\app\expand{\sys}\trsp{\emat} \bmod
  \xDiag{\ovlpLinOrd{\degExp}{\orders}\trsp{\emat}}$ since $\selColComp$
  selects a subset of the columns selected by $\trsp{\emat}$.  Let further
  $\row{r} = [\row{q}_2 \;\; \row{q}_3] [\matz \;\;
  \idMat[\expand{\nbeq}]]\permMat \in \polMatSpace[1][\expand{\nbeq}]$; by
  construction, we have $\rdeg{\row{r}} < \rdeg{\app}$.  We are going to show
  that $[\app \;\; \row{r}] \in
  \modAppCustom{\ovlpLinOrd{2\degExp}{\orders}}{\sysL_2}$ and $[\app \;\;
  \row{r}] \in \modAppOvlpLin$: the latter point implies $\row{r} = \row{q}$ by
  the mentioned uniqueness, and then the former point gives $[\app \;\;
  \row{q}] \in \modAppCustom{\ovlpLinOrd{2\degExp}{\orders}}{\sysL_2}$, thus
  concluding the proof.

  Noticing that $[\app \;\; \row{r}] = [\app \;\; \row{q}_2 \;\; \row{q}_3]
  \permMat$, the first point follows by construction of $\sysL_2$:
  \[
    [\app \;\; \row{r}] \sysL_2
    =
    [\app \;\; \row{q}_2 \;\; \row{q}_3] \permMat \permMat^{-1}
      \begin{bmatrix}
        \ovlpLin{\orders,2\degExp}{\sys} \\ \matz
      \end{bmatrix}
    =
    [\app \;\; \row{q}_2] \ovlpLin{\orders,2\degExp}{\sys} 
    = 0 \bmod \xDiag{\ovlpLinOrd{2\degExp}{\orders}}.
  \]
  Furthermore, since $\ovlpLinOrd{2\degExp}{\orders} \ge
  \ovlpLinOrd{\degExp}{\orders}\selCol$ we can consider the same identity
  modulo $\xDiag{\ovlpLinOrd{\degExp}{\orders}\selCol}$.  Using
  $\ovlpLin{\orders,\degExp}{\sys} \selCol = \sysL_2 \bmod
  \xDiag{\ovlpLinOrd{\degExp}{\orders}\selCol}$, this directly yields $[\app
  \;\; \row{r}] \ovlpLin{\orders,\degExp}{\sys} \selCol = 0 \bmod
  \xDiag{\ovlpLinOrd{\degExp}{\orders}\selCol}$.  For the second point, it
  remains to show $[\app \;\; \row{r}] \ovlpLin{\orders,\degExp}{\sys}
  \selColComp = 0 \bmod \xDiag{\ovlpLinOrd{\degExp}{\orders}\selColComp}$.
  This follows from the definition of $\row{q}_3$ since
  \cref{eqn:zholab_formula_zero_id} gives
  \(
    [\app \;\; \row{r}] \ovlpLin{\orders,\degExp}{\sys} \selColComp
    =
    [\app \;\; \row{q}_2 \;\; \row{q}_3] \permMat \ovlpLin{\orders,\degExp}{\sys} \selColComp
    = \app \expand{\sys} \selColComp + \row{q}_3
  \).
\end{proof}

We remark that working with matrices in ordered weak Popov form allows us to
directly locate the submatrix that contains the sought basis, and thus to avoid
resorting to computations of row rank profiles as was done for example in
\citep[Thm.\,3.15~and~Algo.\,1]{ZhoLab12}.

The second item in this lemma implies that, knowing a basis of
$\modAppOvlpLin$, we can obtain a basis of $\modAppOvlpLin[2]$ via the
classical approach of computing a residual, a second approximant basis, and the
product of the two bases.  Furthermore, the third item shows that rows of degree
less than $\degExp$ in the first basis are already in $\modAppOvlpLin[2]$.
Thus, they can be discarded when computing the second basis (see
\cref{lem:ok_rows}); this is a key property for the efficiency of
\cref{algo:zhou_labahn_min}.  The next result formalizes these remarks, using
notation from \cref{lem:correctness_zholab_min}.

\begin{corollary}
  \label{cor:correctness_zholab_min}
  Let $\appbas \in \polMatSpace[(\nbun+\expand{\nbeq})]$ be an
  $\shiftsL$-ordered weak Popov basis of $\modAppOvlpLin$, let $\okRows
  \subseteq \{1,\ldots,\nbun+\expand{\nbeq}\}$ be the set of indices $i$ of the
  rows $\matrow{\appbas}{i} = [\app \;\; \row{q}]$ such that $\rdeg{\row{q}} <
  \rdeg{\app} \le \degExp$, where $\app \in \polMatSpace[1][\nbun]$ and
  $\row{q} \in \polMatSpace[1][\expand{\nbeq}]$.  Let further $\okRowsComp =
  \{1,\ldots,\nbun+\expand{\nbeq}\}\setminus\okRows$ be the complement of
  $\okRows$ and let $\okRowsNb$ denote the cardinality of $\okRows$.  We have
  $\okRows \subseteq \{1,\ldots,\nbun\}$.

  Now, consider the tuples $\tuplegrk{\mu} =
  \ovlpLinOrd{\degExp}{\orders}\selCol$ and $\tuplegrk{\nu} =
  \ovlpLinOrd{2\degExp}{\orders}$ both in $\ZZp^{\nbeq+\expand{\nbeq}_2}$, as
  well as the residual $\res = \matrows{\appbas}{\okRowsComp} \sysL_2
  \xDiag{-\tuplegrk{\mu}} \bmod \xDiag{\tuplegrk{\nu}-\tuplegrk{\mu}} \in
  \polMatSpace[(\nbun+\expand{\nbeq}-\okRowsNb)][(\nbeq+\expand{\nbeq}_2)]$ and
  a basis $\appbas_2 \in \polMatSpace[(\nbun+\expand{\nbeq}-\okRowsNb)]$ of
  $\modAppCustom{\tuplegrk{\nu}-\tuplegrk{\mu}}{\res}$ in
  $\rdeg[\shiftsL]{\matrows{\appbas}{\okRowsComp}}$-ordered weak Popov form.
  Modify $\appbas$ by left-multiplying its submatrix
  $\matrows{\appbas}{\okRowsComp}$ by $\appbas_2$, that is, perform
  the operation $\matrows{\appbas}{\okRowsComp} \assign
  \appbas_2\matrows{\appbas}{\okRowsComp}$.  Then, the leading principal
  $(\nbun+\expand{\nbeq}_2)\times(\nbun+\expand{\nbeq}_2)$ submatrix of
  $\permMat\appbas\permMat^{-1}$ is an $\expand{\shifts}$-ordered weak Popov
  basis of $\modAppOvlpLin[2]$.
\end{corollary}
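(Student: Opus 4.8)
The plan is to present the described operation as an instance of the two-basis recursive scheme of \cref{lem:recursiveness}, in the refined form of \cref{lem:ok_rows}, applied with $\module_1 = \modAppOvlpLin$ and $\module = \modAppL$.  These are both free of rank $\nbun+\expand{\nbeq}$, and $\module \subseteq \module_1$ because $\sysL = [\ovlpLin{\orders,\degExp}{\sys} \;\; \sysL_2]$ and $\ordersL = (\ovlpLinOrd{\degExp}{\orders},\ovlpLinOrd{2\degExp}{\orders})$, so any approximant for $\sysL$ is in particular an approximant for its first block $\ovlpLin{\orders,\degExp}{\sys}$ at order $\ovlpLinOrd{\degExp}{\orders}$.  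The conclusion will then follow from the second item of \cref{lem:correctness_zholab_min}.

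First I would check that $\okRows \subseteq \{1,\ldots,\nbun\}$: if a row of the $\shiftsL$-ordered weak Popov matrix $\appbas$ has index $i>\nbun$, then its $\shiftsL$-pivot lies in column $i$, that is, in its $\row{q}$-block; since the last $\expand{\nbeq}$ entries of $\shiftsL = (\shifts-\min(\shifts),\unishift)$ vanish while the first $\nbun$ are nonnegative, this forces $\rdeg{\row{q}} \ge \rdeg{\app}$, so such a row cannot be in $\okRows$.  The third item of \cref{lem:correctness_zholab_min} then shows that each row $\matrow{\appbas}{i}$ with $i\in\okRows$ (which has the required shape, with $\rdeg{\row{q}} < \rdeg{\app}\le\degExp$, and lies in $\modAppOvlpLin$ since $\appbas$ is a basis of that module) belongs to $\modAppL$.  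Permuting the rows of $\appbas$ so that those indexed by $\okRows$ come first then places us in the situation of \cref{lem:ok_rows}, with $\popov_1 = \appbas$, $k = \okRowsNb$, $\module_1 = \modAppOvlpLin$, and $\module = \modAppL$.

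The crux is to identify the module $\module_3 = \{\rowgrk{\lambda} \mid \rowgrk{\lambda}\matrows{\appbas}{\okRowsComp} \in \modAppL\}$ appearing in \cref{lem:ok_rows} with $\modAppCustom{\tuplegrk{\nu}-\tuplegrk{\mu}}{\res}$, so that $\appbas_2$ is a basis of $\module_3$.  Since the rows of $\matrows{\appbas}{\okRowsComp}$ already lie in $\modAppOvlpLin$, the membership $\rowgrk{\lambda}\matrows{\appbas}{\okRowsComp} \in \modAppL$ reduces to the single condition $\rowgrk{\lambda}\matrows{\appbas}{\okRowsComp}\sysL_2 = 0 \bmod \xDiag{\tuplegrk{\nu}}$.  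Using $\ovlpLin{\orders,\degExp}{\sys}\selCol = \sysL_2 \bmod \xDiag{\tuplegrk{\mu}}$ with $\tuplegrk{\mu} = \ovlpLinOrd{\degExp}{\orders}\selCol$, together with the fact that the rows of $\matrows{\appbas}{\okRowsComp}$ are approximants for $\ovlpLin{\orders,\degExp}{\sys}$ at order $\ovlpLinOrd{\degExp}{\orders}$, one obtains $\matrows{\appbas}{\okRowsComp}\sysL_2 = \mat{H}\,\xDiag{\tuplegrk{\mu}}$ for some polynomial matrix $\mat{H}$ with $\res = \mat{H} \bmod \xDiag{\tuplegrk{\nu}-\tuplegrk{\mu}}$; since $0 \le \tuplegrk{\nu}-\tuplegrk{\mu}$, a columnwise division by powers of $\var$ then shows that the displayed condition is equivalent to $\rowgrk{\lambda}\res = 0 \bmod \xDiag{\tuplegrk{\nu}-\tuplegrk{\mu}}$, that is, $\module_3 = \modAppCustom{\tuplegrk{\nu}-\tuplegrk{\mu}}{\res}$.

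Finally, \cref{lem:ok_rows} applied with $\popov_3 = \appbas_2$ gives that the matrix obtained from $\appbas$ by the operation $\matrows{\appbas}{\okRowsComp} \assign \appbas_2\matrows{\appbas}{\okRowsComp}$ is a basis of $\modAppL$; moreover $\appbas$ is in $\shiftsL$-ordered weak Popov form and, by hypothesis, $\appbas_2$ is in $\rdeg[\shiftsL]{\matrows{\appbas}{\okRowsComp}}$-ordered weak Popov form, which is exactly the form required by the last part of \cref{lem:ok_rows} (the tuple $\rdeg[\shiftsL]{\matrows{\appbas}{\okRowsComp}}$ being the reordered restriction of $\rdeg[\shiftsL]{\appbas}$ to the indices of the non-leading rows), so the modified $\appbas$ is an $\shiftsL$-ordered weak Popov basis of $\modAppL$.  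The second item of \cref{lem:correctness_zholab_min} then yields that the leading principal $(\nbun+\expand{\nbeq}_2)\times(\nbun+\expand{\nbeq}_2)$ submatrix of $\permMat\appbas\permMat^{-1}$ is an $\expand{\shifts}$-ordered weak Popov basis of $\modAppOvlpLin[2]$, as claimed.  The main obstacle I anticipate is the module identification $\module_3 = \modAppCustom{\tuplegrk{\nu}-\tuplegrk{\mu}}{\res}$, which requires careful tracking of the column selector $\selCol$, the two orders $\ovlpLinOrd{\degExp}{\orders}$ and $\ovlpLinOrd{2\degExp}{\orders}$, and the formal divisions by powers of $\var$; the remainder is routine bookkeeping on top of the cited lemmas.
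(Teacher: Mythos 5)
Your proposal is correct and takes essentially the same route as the paper's proof, which simply states that the claims "follow directly from Lemmas~\ref{lem:correctness_zholab_min} and~\ref{lem:ok_rows}". You unpack the necessary bookkeeping — in particular the identification of the module $\module_3$ of \cref{lem:ok_rows} with $\modAppCustom{\tuplegrk{\nu}-\tuplegrk{\mu}}{\res}$ (using the relation $\ovlpLin{\orders,\degExp}{\sys}\selCol = \sysL_2 \bmod \xDiag{\tuplegrk{\mu}}$ and the divisibility of $\matrows{\appbas}{\okRowsComp}\sysL_2$ by $\xDiag{\tuplegrk{\mu}}$), and the use of the third item of \cref{lem:correctness_zholab_min} to place the $\okRows$-indexed rows in $\modAppL$ — which the paper leaves implicit; your contrapositive argument for $\okRows\subseteq\{1,\ldots,\nbun\}$ is likewise equivalent to the paper's direct one.
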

\begin{proof}
  The fact that $\okRows \subseteq \{1,\ldots,\nbun\}$ follows by definition of
  the $\shiftsL$-ordered weak Popov form.  Indeed, since
  $\shiftsL=(\shifts-\min(\shifts),\unishift)$, such a row $[\app \;\;
  \row{q}]$ with $\rdeg{\row{q}} < \rdeg{\app} \le
  \rdeg[\shifts-\min(\shifts)]{\app}$ must have its $\shiftsL$-pivot entry in
  $\app$, or in other words, its $\shiftsL$-pivot index in
  $\{1,\ldots,\nbun\}$.  Since the $\shiftsL$-pivot entries are on the
  diagonal, $[\app \;\; \row{q}]$ must be one of the first $\nbun$ rows of
  $\appbas$.

  The other claims follow directly from
  \cref{lem:correctness_zholab_min,lem:ok_rows}. 
\end{proof}

This suggests an algorithm which computes approximant bases iteratively for the
overlapping linearized problems with a linearization parameter $\degExp$ which
is doubled at each step.  When the parameter reaches $\degExp > \max(\orders)$,
we actually have $\ovlpLinOrd{\degExp}{\orders} = \orders$ and
$\ovlpLin{\orders,\degExp}{\sys} = \sys$, and therefore the computed basis is a
basis of $\modApp = \modAppOvlpLin$.  In what follows,
let $\vsdim=\sumVec{\orders}$.

In this process, the number of columns of the approximant instances steadily
decreases.  On the first hand, the number of columns $\expand{\nbeq}$ added by
the overlapping linearization is roughly halved when $\degExp$ is doubled.  On
the other hand, only the $\le 2\vsdim/\degExp$ columns of $\sys$ with
corresponding order $\order_i \ge \degExp/2$ need to be considered in the
iteration with linearization parameter $\degExp$, since all the others have
been fully processed already (see the proof of
\cref{prop:algo:zhou_labahn_min} for more details).

Furthermore, the corollary above indicates that if at some iteration one of the
computed approximants in $\modAppOvlpLin$ has degree less than $\degExp$, then
it can be stored as a row of the sought basis and can be discarded in the
computation of the residual and of the second basis.  In the process outlined
above, this allows us to decrease the row dimension each time such a small
degree approximant has been found.

Yet, there remains an obstacle towards efficiency: if the output basis has no
row of small degree, there will be no such row dimension decrease before the
very last few iterations.  In this case, some iterations may ask us to solve
instances with roughly the same dimensions and degrees as the original instance
$(\orders,\sys)$; then, this approach is not faster than a direct call to
\algoname{PM-Basis}.

Nevertheless, there are many shifts for which this worst-case scenario cannot
occur, since the sum of the row degree of an $\shifts$-minimal basis of
$\modApp$ is at most $\sumRdeg = \vsdim + \sumVec{\shifts-\min(\shifts)}$
\cite[Thm.\,4.1]{BarBul92}.  Thus, this $\shifts$-minimal basis has at most
$\sumRdeg/\degExp$ rows of degree $\ge \degExp$; this is especially beneficial
when $\sumRdeg$ is small, that is, for shifts that are weakly unbalanced around
their minimum value (assumption $\hypsmin$).  For example, for the uniform
shift, a $\unishift$-minimal basis has at most $\nbun/2^i$ rows of degree $\ge
2^i \lceil \vsdim/\nbun \rceil$, which means that in our process at least
$\nbun-\nbun/2^i$ rows can be discarded when $\degExp$ has reached $2^i \lceil
\vsdim/\nbun \rceil$.

\begin{algobox}
  \algoInfo
  {Minimal basis for small $\sumVec{\shifts-\min(\shifts)}$}
  {ShiftAroundMinAppBasis}
  {algo:zhou_labahn_min}

  \dataInfos{Input}{
    \item order $\orders \in \orderSpace$,
    \item matrix $\sys \in \sysSpace$ with $\cdeg{\sys} < \orders$,
    \item shift $\shifts \in \shiftSpace$.
  }

  \dataInfo{Output}{an $\shifts$-ordered weak Popov basis of $\modApp$.}

  \algoSteps{
    \item \algoword{If} $\nbeq \ge \nbun$:
      \begin{enumerate}[{\bf a.}]
        \item permute $\orders$ into nonincreasing order,
          and the columns of $\sys$ accordingly
        \item $(\ordersR,\sysR,\shiftsR,\appbas_1) \assign
                \algoname{ReduceColDim}(\orders,\sys,\shifts)$
        \item $\popov_2 \assign \algoname{ShiftAroundMinAppBasis}(\ordersR,\sysR,\shiftsR)$
        \item $\minDegs_1 \assign$ diagonal degrees of $\appbas_1$;~
        $\minDegs_2 \assign$ diagonal degrees of $\popov_2$
        \item \algoword{Return} $\algoname{KnownDegAppBasis}(\orders,\sys,\shifts,\minDegs_1+\minDegs_2)$
      \end{enumerate}
    \item \algoword{Else}:
      \begin{enumerate}[{\bf a.}]
        \item $\degExp \assign \lceil (\sumVec{\orders}+\sumVec{\shifts-\min(\shifts)}) / \nbun\rceil$ \\
          Construct $\ovlpLinOrd{\degExp}{\orders} \in
          \ZZp^{\nbun+\expand{\nbeq}}$ and
          $\ovlpLin{\orders,\degExp}{\sys} \in
          \polMatSpace[(\nbun+\expand{\nbeq})][(\nbeq+\expand{\nbeq})]$ as in
          \cref{dfn:ovlplin} \\
          $\appbas \assign \algoname{PM-Basis}(
                2\degExp,
                \ovlpLin{\orders,\degExp}{\sys}\xDiag{2\degExp-\ovlpLinOrd{\degExp}{\orders}},
                (\shifts-\min(\shifts),\unishift)
                )$ \\
          $\okRows \assign \{i\in\{1,\ldots,\nbun+\expand{\nbeq}\} \mid
            \matrow{\appbas}{i}=[\app\;\;\row{q}] \;\text{is such that}\;
            \rdeg{\row{q}} < \rdeg{\app} \le \degExp \}$, \\
           \phantom{$\okRows \assign$}
           where $\app \in \polMatSpace[1][\nbun]$ and $\row{q} \in
           \polMatSpace[1][\expand{\nbeq}]$
           \eolcomment{for these rows, $\app \in \modApp$}
        \item \algoword{While} $\card{\okRows} < \nbun$:
          \eolcomment{$\okRows \subseteq \{1,\ldots,\nbun\}$ holds, cf.~\cref{cor:correctness_zholab_min}}
          \begin{enumerate}[(i)]
            \item Construct matrices
              $\permMat \in
              \matSpace[(\nbun+\expand{\nbeq})]$ and $\selCol \in
              \matSpace[(\nbeq+\expand{\nbeq})][(\nbeq+\expand{\nbeq}_2)]$ as
              in \cref{lem:correctness_zholab_min}, \\
              tuples $\tuplegrk{\mu} \assign \ovlpLinOrd{\degExp}{\orders}\selCol$
              and $\tuplegrk{\nu} \assign \ovlpLinOrd{2\degExp}{\orders}$
              both in $\ZZp^{\nbeq+\expand{\nbeq}_2}$, and sets\\
              $\remCols \assign \{ j \in \{1,\ldots,\nbeq+\expand{\nbeq}_2\} \mid \nu_j-\mu_j > 0 \}$ and
              $\okRowsComp \assign \{1,\ldots,\nbun+\expand{\nbeq}\} \setminus \okRows$
            \item $\res \assign \matrows{\appbas}{\okRowsComp} \, \permMat^{-1}
                    \left[\begin{smallmatrix} \matcols{\ovlpLin{\orders,2\degExp}{\sys}}{\remCols} \\ \matz \end{smallmatrix}\right]
                      \xDiag{\subTuple{-\tuplegrk{\mu}}{\remCols}} 
                      \bmod \xDiag{\subTuple{\tuplegrk{\nu}}{\remCols}-\subTuple{\tuplegrk{\mu}}{\remCols}}$
            \item $\appbas_2 \assign \algoname{PM-Basis}(
                      2\degExp,
                      \res\xDiag{2\degExp-\subTuple{\tuplegrk{\nu}}{\remCols}+\subTuple{\tuplegrk{\mu}}{\remCols}},
                    \rdeg[(\shifts-\min(\shifts),\unishift)]{\matrows{\appbas}{\okRowsComp}}
                      )$
            \item $\matrows{\appbas}{\okRowsComp} \assign \appbas_2 \matrows{\appbas}{\okRowsComp}$
              \eolcomment{this modifies $\appbas$}
            \item $\appbas \assign$ leading principal $(\nbeq+\expand{\nbeq}_2)
              \times (\nbeq+\expand{\nbeq}_2)$ submatrix of $\permMat \appbas \permMat^{-1}$ \\
              $\degExp \assign 2\degExp$;
              $\expand{\nbeq} \assign \expand{\nbeq}_2$;
              $\okRows \assign \okRows \cup \{i\in\okRowsComp \mid 
                \matrow{\appbas}{i}=[\app\;\;\row{q}] \;\text{is such that}\;
                \rdeg{\row{q}} < \rdeg{\app} \le \degExp \}$,
                 where $\app \in \polMatSpace[1][\nbun]$ and $\row{q} \in
                \polMatSpace[1][\expand{\nbeq}]$
          \end{enumerate}
        \item \algoword{Return} $\appbas$
      \end{enumerate}
  }
\end{algobox}

\begin{proposition}
  \label{prop:algo:zhou_labahn_min}
  \cref{algo:zhou_labahn_min} is correct.  Let $\vsdim = \sumVec{\orders}$, let
  $\sumRdeg = \vsdim + \sumVec{\shifts-\min(\shifts)}$, and let $\order =
  \max(\orders)$.  If $\sumRdeg \le \nbun\order$, then
  \cref{algo:zhou_labahn_min} uses $\costZhoLabMin{\sumRdeg,\nbun,\order}$
  operations in $\field$, where $\costZhoLabMin{\cdot}$ is defined as in
  \cref{eqn:cost_zholab_min}.  If $\sumRdeg > \nbun\order$, it uses
  $\bigO{\appbastime{\nbun,\lceil\vsdim/\nbun\rceil} +
  \appbastime{\nbun,\order}}$ operations in $\field$.
\end{proposition}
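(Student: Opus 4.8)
The plan is to treat the two branches of the algorithm separately.

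\emph{Correctness.}  For $\nbeq\ge\nbun$ (Step~\textbf{1}), \cref{prop:algo:reduce_coldim} gives that $\appbas_1$ is an $\shifts$-ordered weak Popov basis of $\module_1=\modAppCustom{\orders-(\ordersR,\unishift)}{\sys}$ and that $(\ordersR,\sysR,\shiftsR)$ is a residual instance with $\sysR$ having $\nu<\nbun$ columns and $\shiftsR=\rdeg[\shifts]{\appbas_1}$; in particular the recursive call falls into Step~\textbf{2}, so Step~\textbf{1} is executed at most once.  Since the diagonal degrees of an ordered weak Popov basis are the minimal degree of the module it spans, $\minDegs_1$ and $\minDegs_2$ are the $\shifts$- and $\shiftsR$-minimal degrees of $\module_1$ and $\module_2=\modAppCustom{\ordersR}{\sysR}$, so item~$(iv)$ of \cref{lem:recursiveness} shows $\minDegs_1+\minDegs_2$ is the $\shifts$-minimal degree of $\modApp$ and \cref{prop:algo:knowndeg_pab} finishes.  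For $\nbeq<\nbun$ (Step~\textbf{2}), I would prove the loop invariant that, with the current value of $\degExp$ (equal to $\lceil\sumRdeg/\nbun\rceil$ after Step~\textbf{2.a} and doubled at each pass), $\appbas$ is a $(\shifts-\min(\shifts),\unishift)$-ordered weak Popov basis of $\modAppOvlpLin$ and $\okRows$ is the set of indices $i$ with $\matrow{\appbas}{i}=[\app\;\;\row{q}]$ such that $\rdeg{\row{q}}<\rdeg{\app}\le\degExp$ (for which $\app\in\modApp$ by the second item of \cref{lem:ovlplin_app}).  Step~\textbf{2.a} gives this initially through \cref{prop:algo:dac_ab,rmk:pmbasis_unbalancedorders}, and each pass preserves it by \cref{cor:correctness_zholab_min}, the matrices built in steps~(i)--(iv) being exactly those the corollary calls for (columns $j$ with $\nu_j=\mu_j$ are vacuous, so restricting to $\remCols$ changes nothing).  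For termination, once $\degExp>\max(\orders)$ we have $\ovlpLinOrd{\degExp}{\orders}=\orders$ and $\ovlpLin{\orders,\degExp}{\sys}=\sys$, so $\appbas$ is an $\shifts$-ordered weak Popov basis of $\modApp$ with all rows of degree at most $\max(\orders)$ (because $\var^{\max(\orders)}\idMat[\nbun]$ is a left multiple of it, as in the proof of \cref{lem:degdet_appbasis}); hence $\card{\okRows}=\nbun$ and the loop returns this basis.

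\emph{Cost for $\nbeq\ge\nbun$.}  \algoname{ReduceColDim} and \algoname{KnownDegAppBasis} each cost $\bigO{\appbastime{\nbun,\vsdim/\nbun}}$ by \cref{prop:algo:reduce_coldim,prop:algo:knowndeg_pab}, which is within the claimed bound since $\vsdim\le\sumRdeg$.  The parameters of the single recursive call do not grow: $\sumVec{\ordersR}\le\vsdim$ and $\max(\ordersR)\le\max(\orders)$ by \cref{prop:algo:reduce_coldim}, and since $\shiftsR=\shifts+\minDegs_1$ up to the column permutation of Step~\textbf{2.a}, with $\min(\shiftsR)\ge\min(\shifts)$ and $\sumVec{\minDegs_1}\le\vsdim-\sumVec{\ordersR}$ by \cref{lem:degdet_appbasis}, one has
\[
  \sumVec{\ordersR}+\sumVec{\shiftsR-\min(\shiftsR)}
  \;\le\; \sumVec{\ordersR}+\sumVec{\shifts-\min(\shifts)}+\sumVec{\minDegs_1}
  \;\le\; \vsdim+\sumVec{\shifts-\min(\shifts)}=\sumRdeg .
\]
Thus the recursive instance has its analogues of $\sumRdeg$ and $\max(\orders)$ bounded by $\sumRdeg$ and $\order$, so (using the monotonicity of $\costZhoLabMin{\cdot,\cdot,\cdot}$ in its first and third arguments) its cost is within the bound claimed for $(\orders,\sys,\shifts)$.

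\emph{Cost for $\nbeq<\nbun$.}  Let $\degExp$ denote the linearization degree at the $k$-th pass, so $\degExp=2^k\lceil\sumRdeg/\nbun\rceil$.  Two quantities decay geometrically.  First, the number $\expand{\nbeq}$ of columns added by the overlapping linearization is at most $2\vsdim/\degExp\le 2^{1-k}\nbun$ (using $\vsdim\le\sumRdeg$ and the dimension estimate of \cref{sec:knowndeg:overlapping_parlin}).  Second, and this is the crux, the number $\card{\okRowsComp}$ of rows still to be found is $\bigO{\sumRdeg/\degExp}=\bigO{2^{-k}\nbun}$: the $(\shifts-\min(\shifts),\unishift)$-row degree sum of the ordered weak Popov basis $\appbas$ at that pass equals $\deg(\det(\appbas))+\sumVec{\shifts-\min(\shifts)}$, and $\deg(\det(\appbas))\le\sumVec{\ovlpLinOrd{\degExp}{\orders}}\le 2\vsdim$, so at most $\bigO{\sumRdeg/\degExp}$ rows have that degree $\ge\degExp$, while every other row $[\app\;\;\row{q}]$ has $\rdeg{\app}<\degExp$ and $\rdeg{\row{q}}<\degExp$ and hence, by the second item of \cref{lem:ovlplin_app}, lies in $\modApp$ with $\rdeg{\row{q}}<\rdeg{\app}$, i.e.\ in $\okRows$.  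Therefore the $k$-th pass handles matrices of dimension $\bigO{2^{-k}\nbun}$ (balanced in rows and columns after discarding $\okRows$, since the columns of $\sys$ with small order have already been fully processed and drop out of $\remCols$) and of degree $\bigO{\degExp}$: the residual of step~(ii) is a product of a $\bigO{2^{-k}\nbun}\times\bigO{\nbun}$ matrix, truncated at degree $\bigO{\degExp}$, by a $\bigO{\nbun}\times\bigO{2^{-k}\nbun}$ matrix, carried out as $\bigO{2^k}$ products of $\bigO{2^{-k}\nbun}$-square matrices of degree $\bigO{\degExp}$; the \algoname{PM-Basis} call of step~(iii) costs $\bigO{\appbastime{2^{-k}\nbun,\degExp}}$ by \cref{prop:algo:dac_ab}; and the update of step~(iv) fits the same budget after column partial linearization of $\matrows{\appbas}{\okRowsComp}$ (whose total row degree is $\bigO{\sumRdeg}$), along the lines of \cref{lem:compute_residual}.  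Summing the per-pass bound $\bigO{\appbastime{2^{-k}\nbun,2^k\lceil\sumRdeg/\nbun\rceil}+2^k\polmatmultime{2^{-k}\nbun,2^k\lceil\sumRdeg/\nbun\rceil}}$ over $k=0,\dots,\lceil\log_2(\order/\lceil\sumRdeg/\nbun\rceil)\rceil$ --- the range beyond which $\degExp>\max(\orders)$ and the loop has already stopped --- yields exactly $\costZhoLabMin{\sumRdeg,\nbun,\order}$.  If $\sumRdeg>\nbun\order$, then $\lceil\sumRdeg/\nbun\rceil>\max(\orders)$, so the overlapping linearization at Step~\textbf{2.a} is already trivial, the loop is not entered, and the cost is that of the one \algoname{PM-Basis} call of Step~\textbf{2.a}; its input is $\sys$ padded by powers of $\var$, so \algoname{PM-Basis} does only negligible work above order $\Theta(\max(\orders))$ and the call costs $\bigO{\appbastime{\nbun,\order}}\subseteq\bigO{\appbastime{\nbun,\lceil\vsdim/\nbun\rceil}+\appbastime{\nbun,\order}}$ (here $\vsdim<\nbun\order$), which together with the $\nbeq\ge\nbun$ preprocessing matches the stated bound.

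The step I expect to be the main obstacle is the geometric-decay estimate $\card{\okRowsComp}=\bigO{\sumRdeg/\degExp}$ on the number of rows not yet determined, together with the matching cost analysis at that reduced dimension: making both precise requires carefully relating the ordered weak Popov bases of the successive overlapping-linearized modules to the $\shifts$-minimal basis of $\modApp$ (via \cref{lem:correctness_zholab_min,lem:ovlplin_app}), and maintaining balance between row and column dimensions so that the $(1+\nbeq/\nbun)$-type factors of \cref{prop:algo:dac_ab,lem:compute_residual} stay $\bigO{1}$.
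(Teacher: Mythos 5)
Your proposal follows essentially the same structure as the paper's proof: Step~\textbf{1} is handled via \algoname{ReduceColDim} plus one recursive call and \cref{lem:recursiveness,prop:algo:reduce_coldim,prop:algo:knowndeg_pab}; Step~\textbf{2} is handled by showing that $\card{\okRowsComp}$ and $\card{\remCols}$ decay geometrically while $\degExp$ doubles, giving the cost $\costZhoLabMin{\sumRdeg,\nbun,\order}$. Two details differ in a substantive but correct way, and one is wrong.

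\emph{What you do differently (and what it buys).} (i) For the key bound on the number of not-yet-found rows, the paper bounds $\card{\okRows}>\nbun-\sumRdeg/\degExp$ by appealing to \citep[Thm.\,4.1]{BarBul92} for the sum of row degrees of the sought $\shifts$-minimal basis of $\modApp$, and then separately bounds $\expand{\nbeq}<\vsdim/\degExp$. You instead bound the $(\shifts-\min(\shifts),\unishift)$-row-degree sum of the full linearized basis $\appbas$ by $\deg(\det(\appbas))+\sumVec{\shifts-\min(\shifts)}\le 2\vsdim+\sumVec{\shifts-\min(\shifts)}$ via \cref{lem:degdet_appbasis}; this folds both counts into one and is a self-contained derivation. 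Both reach $\card{\okRowsComp}\in\bigO{\sumRdeg/\degExp}$. (ii) For the recursive call in Step~\textbf{1.c}, you prove $\sumVec{\ordersR}+\sumVec{\shiftsR-\min(\shiftsR)}\le\sumRdeg$ exactly (using $\shiftsR=\shifts+\minDegs_1$, $\min(\shiftsR)\ge\min(\shifts)$, and $\sumVec{\minDegs_1}\le\vsdim-\sumVec{\ordersR}$), where the paper only gets $\le 3\sumRdeg$ and then implicitly invokes a monotonicity/constant-factor argument for $\costZhoLabMin{\cdot}$. Your bound is sharper and makes that implicit step unnecessary.

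\emph{The error.} Your parenthetical claim that once $\degExp>\max(\orders)$ the computed ordered weak Popov basis of $\modApp$ has ``all rows of degree at most $\max(\orders)$ (because $\var^{\max(\orders)}\idMat[\nbun]$ is a left multiple of it, as in the proof of \cref{lem:degdet_appbasis})'' is false. That predictable-degree argument bounds only the $\shifts$-pivot degrees $\minDegs$ by $\max(\orders)$; for a shift as unbalanced as $\shifts-\min(\shifts)$ an $\shifts$-ordered weak Popov basis can have off-pivot entries $p_{ij}$ with $\deg(p_{ij})$ as large as $(\shift{i}-\min(\shifts))+\minDeg_i$, which is unbounded in terms of $\max(\orders)$. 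What is actually true, and what the paper uses, is that \algoname{PM-Basis} returns a basis of degree at most the \emph{order argument} it is called with (here $2\degExp$), cf.\ the output specification in \cref{algo:dac_ab}; termination of the loop and the trailing-iteration cost must be argued from that, not from a $\max(\orders)$ degree bound on arbitrary ordered weak Popov bases. The paper itself glosses over this point, so your proof is not more wrong than the paper's in outcome, but the justification you inserted is incorrect and should be removed or replaced by the degree bound from \cref{prop:algo:dac_ab}.
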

\begin{proof}
  The correctness of Step~\textbf{1} follows from
  \cref{lem:recursiveness,prop:algo:reduce_coldim}.  Concerning
  Step~\textbf{2}, we first note that if $\lceil\sumRdeg/\nbun\rceil >
  \order$, then $\ovlpLin{\orders,\degExp}{\sys} = \sys$ and
  $\ovlpLinOrd{\degExp}{\orders} = \orders$ and therefore the call to
  \algoname{PM-Basis} at Step~\textbf{2.a} computes a whole $\shifts$-ordered
  weak Popov basis of $\modApp$.  Then, the loop at Step~\textbf{2.b} is not
  entered, and Step~\textbf{2} uses $\bigO{\appbastime{\nbun,\order}}$ operations
  according to \cref{prop:algo:dac_ab}.

  On the other hand, if $\lceil\sumRdeg/\nbun\rceil \le \order$, the
  correctness of Step~\textbf{2} follows from
  \cref{cor:correctness_zholab_min}, noticing that the loop terminates after at
  most $1+\lfloor\log_2(\order/\lceil\sumRdeg/\nbun\rceil)\rfloor$ iterations
  since $\degExp$ is doubled at each iteration, and as mentioned above
  $\ovlpLin{\orders,\degExp}{\sys} = \sys$ and $\ovlpLinOrd{\degExp}{\orders} =
  \orders$ for $\degExp > \order$.  Furthermore, in this algorithm we use the
  set $\remCols$ to explicitly filter out columns for which the correct order
  has already been reached, thus for which the residual columns are zero.  This
  was not done in \cref{cor:correctness_zholab_min} which focused on
  correctness, yet here it makes it easier to describe column dimensions in the
  following cost analysis.

  Concerning Step~\textbf{2}, we place ourselves at the beginning of an
  iteration, and we start by describing the dimensions and the degrees of the
  matrices involved in the computations.  Then,
  \begin{itemize}
    \item $\matrows{\appbas}{\okRowsComp}$ has dimensions $\card{\okRowsComp}
      \times (\nbun+\expand{\nbeq})$ and degree $< 2\degExp$;
    \item $\permMat^{-1}
      \left[\begin{smallmatrix}
      \matcols{\ovlpLin{\orders,2\degExp}{\sys}}{\remCols} \\
  \matz\end{smallmatrix}\right]$ has dimensions $(\nbun+\expand{\nbeq}) \times
  \card{\remCols}$ and degree $< \max(\ovlpLinOrd{2\degExp}{\orders})
  \le 4\degExp$;
    \item $\res$ has dimensions $\card{\okRowsComp} \times \card{\remCols}$ and
      degree $< \max(\boldsymbol{\nu}-\boldsymbol{\mu}) \le 2\degExp$;
    \item $\appbas_2$ has dimensions $\card{\okRowsComp} \times
      \card{\okRowsComp}$ and degree $< 2\degExp$.
  \end{itemize}
  As above, $\expand{\nbeq}$ is such that $\ovlpLin{\orders,\degExp}{\sys}$ has
  dimensions $(\nbun+\expand{\nbeq}) \times (\nbeq+\expand{\nbeq})$, and
  $\expand{\nbeq} < \vsdim / \degExp$ where $\vsdim = \sumVec{\orders}$.

  Besides, as a consequence of \cite[Thm.\,4.1]{BarBul92}, the sum of the
  degrees of the rows of the sought basis is at most $\sumRdeg$, and thus this
  basis has more than $\nbun - \sumRdeg/\degExp$ rows of degree $\le\degExp$;
  \cref{lem:ovlplin_app} shows that the set
  $\okRows\subseteq\{1,\ldots,\nbun\}$ precisely contains the indices of the
  latter rows. Thus, $\card{\okRows} > \nbun-\sumRdeg/\degExp$, and
  $\card{\okRowsComp} = \nbun+\expand{\nbeq} - \card{\okRows} < \expand{\nbeq}
  + \sumRdeg/\degExp \le 2\sumRdeg/\degExp$.

  Furthermore, note that the entries of $\ovlpLinOrd{\degExp}{\orders}\selCol$
  and $\ovlpLinOrd{2\degExp}{\orders}$ which coincide are exactly those
  corresponding to columns with order $\order_i \le 2\degExp$ (or,
  equivalently, $\quoExp_i=1$): these are columns $\matcol{\sys}{i}$ which
  appear as such in $\ovlpLin{\orders,\degExp}{\sys}$ and also in
  $\ovlpLin{\orders,2^k\degExp}{\sys}$ for all subsequent iterations.  Indeed,
  if $\order_i>2\degExp$, the corresponding entries in
  $\ovlpLinOrd{\degExp}{\orders}\selCol$ are at most $2\degExp$ and cannot
  coincide with those in $\ovlpLinOrd{2\degExp}{\orders}$ which are at least
  $2\degExp+1$.  As a result, $\card{\remCols}$ is the sum of the number
  $\expand{\nbeq}_2$ of columns added by the overlapping linearization with
  degree parameter $2\degExp$, and of the number of indices $i\in
  \{1,\ldots,\nbeq\}$ such that $\order_i>2\degExp$; both numbers are less than
  $\vsdim/(2\degExp)$.  Thus, $\card{\remCols} <
  \vsdim/\degExp$.

  Now, let $\degExp_0 = \lceil \sumRdeg / \nbun\rceil$ be the initial value of
  $\degExp$.  Then, at the beginning of the $k$-th iteration of the loop (the
  first one being for $k=1$), we have $\degExp = 2^{k-1} \degExp_0$ and the
  dimensions satisfy $\nbun+\expand{\nbeq} < 2\nbun$, $\card{\okRowsComp} <
  2\sumRdeg/\degExp = 2^{2-k} \sumRdeg/\degExp_0 \le 2^{2-k} \nbun$, and
  $\card{\remCols} < \vsdim/\degExp = 2^{1-k} \vsdim/\degExp_0 \le
  2^{1-k}\sumRdeg/\degExp_0 \le 2^{1-k} \nbun$.

  Then, both matrix multiplications at Steps~\textbf{2.b.}(ii)
  and~\textbf{2.b.}(iv) use $\bigO{2^{k-1} \polmatmultime{2^{1-k}\nbun,2^{k-1}
      \degExp_0}}$ operations.  Besides, the call to \algoname{PM-Basis} at
      Step~\textbf{2.b.}(iii) uses $\bigO{\appbastime{2^{1-k}\nbun,2^{k-1}
      \degExp_0}}$ operations according to \cref{prop:algo:dac_ab}, while the
      call at Step~\textbf{2.a} uses $\bigO{\appbastime{\nbun,\degExp_0}}$
      operations.  Summing these terms over all iterations gives the cost bound
      announced in the statement, since as explained above the loop terminates
      before or when $k$ reaches $1+\lfloor
      \log_2(\order/\lceil\sumRdeg/\nbun\rceil)\rfloor$.

  Now, independently from assumptions on $\lceil\sumRdeg/\nbun\rceil$,
  Steps~\textbf{1.b} and\,\textbf{1.e} both use
  $\bigO{\appbastime{\nbun,\vsdim/\nbun}}$ operations according to
  \cref{prop:algo:reduce_coldim,prop:algo:knowndeg_pab}; here
  $\lceil\vsdim/\nbun\rceil\in\Theta(\vsdim/\nbun)$ since
  $\vsdim\ge\nbeq\ge\nbun$.  Besides, the former proposition and the
  specification of \algoname{ReduceColDim} ensure that:
  \begin{itemize}
    \item $\deg(\appbas_1) \le 2\vsdim/\nbun$, hence
      $\shifts \le \shiftsR \le \shifts + 2\vsdim/\nbun$ since
      $\shiftsR = \rdeg[\shifts]{\appbas_1}$;
    \item $\sumVec{\ordersR} \le \vsdim$, hence $\vsdim +
      \sumVec{\shiftsR-\min(\shiftsR)} \le \sumRdeg +2\vsdim \le
      3\sumRdeg$;
    \item $\sysR$ has fewer columns than rows, hence the call at
      Step~\textbf{1.c} will enter Step~\textbf{2}.
  \end{itemize}
  Then, the cost bounds given above hold for Step~\textbf{1.c}: if
  $\lceil\sumRdeg/\nbun\rceil \le \order$ this step is thus the bottleneck of
  Step~\textbf{1}, and if $\lceil\sumRdeg/\nbun\rceil > \order$ we obtain the
  claimed bound $\bigO{\appbastime{\nbun,\vsdim/\nbun} +
  \appbastime{\nbun,\order}}$.
\end{proof}

We remark that it would also be correct, instead of Steps~\textbf{1.d}
and\,\textbf{1.e}, to directly compute and return the product
$\appbas_2\appbas_1$; this uses
$\bigO{\polmatmultime{\nbun,\lceil\sumRdeg/\nbun\rceil}}$ operations and thus
does not impact the cost bound if $\sumRdeg\in\bigO{\vsdim}$.  In addition, for
input instances with $\vsdim \ll \nbun$, one may rather rely on linear algebra
over $\field$ instead of the above algorithm (see Steps~\textbf{1.a},
\textbf{1.b}, and \textbf{1.c} of \cref{algo:fast_pab}).

We now show the upper bound on $\costZhoLabMin{\sumRdeg,\nbun,\order}$ given in
\cref{thm:zhou_labahn}, for the case $\sumRdeg\le \nbun\order$.  Under the
assumption $\hyppolmul$, we obtain
\begin{align*}
  & \appbastime{2^{-k}\nbun,2^k\lceil\sumRdeg/\nbun\rceil} +
  2^k \polmatmultime{2^{-k}\nbun,2^k\lceil\sumRdeg/\nbun\rceil} \\
  & \quad \in \bigOPar{(2^{-k}\nbun)^{\expmatmul} \polmultime{2^k\lceil\sumRdeg/\nbun\rceil}  \log(2^k\lceil\sumRdeg/\nbun\rceil)
    + 2^k (2^{-k}\nbun)^{\expmatmul} \polmultime{2^k\lceil\sumRdeg/\nbun\rceil}} \\
  & \quad \subseteq \bigOPar{\nbun^\expmatmul \polmultime{\lceil\sumRdeg/\nbun\rceil} (2^{-k}(k+\log(\lceil\sumRdeg/\nbun\rceil)) + 1)},
\end{align*}
since $\hyppolmul$ implies in particular $\polmultime{2^k
\lceil\sumRdeg/\nbun\rceil} \in \bigO{2^{(\expmatmul-1)k}
\polmultime{\lceil\sumRdeg/\nbun\rceil}}$.  Since $\sum_{k\ge 0} k2^{-k}$ is
the constant $2$, summing over $0 \le k \le 1+
\log(\order/\lceil\sumRdeg/\nbun\rceil)$ gives the sought bound
\[
  \costZhoLabMin{\sumRdeg,\nbun,\order} \in
  \bigO{\nbun^\expmatmul \polmultime{\lceil\sumRdeg/\nbun\rceil}
  (\log(\lceil\sumRdeg/\nbun\rceil) + \log(\order/\lceil\sumRdeg/\nbun\rceil))}
  = \bigO{\nbun^\expmatmul \polmultime{\lceil\sumRdeg/\nbun\rceil}
  \log(\order)},
\]
valid under $\hyppolmul$ and for an arbitrary order and shift.

We remark that the latter bound is precisely the one which was obtained
\citep[Thm.\,5.3]{ZhoLab12}, under the additional assumptions that
$\sumRdeg\in\bigO{\vsdim}$ and that $\orders = (\order,\ldots,\order) \in
\orderSpace$ with $\nbeq \le \nbun \le \vsdim = \nbeq \order$; in that case the
bound can be written $\bigO{\nbun^\expmatmul \polmultime{\nbeq\order/\nbun}
\log(\order)}$.

\subsection{Weakly unbalanced shift around its maximum value}
\label{subsec:weakly_unbalanced_around_max}

Here, we will only sketch the correctness and cost bound of the algorithm, and
refer to \citep[Sec.\,6]{ZhoLab12} for more details and examples.  Indeed, it
can be noticed that the output column linearization does not modify the order
$\orders$ and does not depend on it.  As a result, generalizing
\citepalias[Algo.\,2]{ZhoLab12} to the case of arbitrary orders was mostly done
in \cref{sec:knowndeg:output_parlin} where the definition and
properties of the output column linearization were presented.

In \cref{algo:zhou_labahn_max}, we interrupt the iterative use of output column
linearization as soon as it becomes more efficient to directly resort to
\algoname{PM-Basis} (Step~\textbf{4}).  We remark that, while this may seem to
differ from \citepalias[Algo.\,2]{ZhoLab12}, it is in fact mentioned in the
proof of \citepalias[Thm.\,6.14]{ZhoLab12} that the algorithm should behave
like this to avoid weakening its efficiency.

\begin{algobox}[htb]
  \algoInfo
  {Minimal basis for small $\sumVec{\!\max(\shifts)-\shifts}$}
  {ShiftAroundMaxAppBasis}
  {algo:zhou_labahn_max}

  \dataInfos{Input}{
    \item order $\orders \in \orderSpace$,
    \item matrix $\sys \in \sysSpace$ with $\cdeg{\sys} < \orders$,
    \item shift $\shifts \in \shiftSpace$.
  }

  \dataInfo{Output}{an $\shifts$-ordered weak Popov basis of $\modApp$.}

  \algoSteps{
    \item $\appbas \assign$ empty matrix in $\polMatSpace[0][\nbun]$
    \item $I \assign \{1,\ldots,\nbun\}$ \eolcomment{indices of rows still to be found}
    \item \algoword{While} $\vsdim + \sumVec{\!\max(\shifts)-\shifts} \le \card{I} \order$:
      \begin{enumerate}[{\bf a.}]
        \item $\degExp \assign 1 + 2 \lfloor \sumVec{\!\max(\shifts)-\shifts} / \card{I} \rfloor$
        \item $(\expand{\shifts},\expandMat, (\quoExp_i)_{1\le i\le\nbun},\expand{\nbun}) \assign
          \algoname{ColParLin}(\subTuple{\shifts}{I},\degExp,\degExp)$
          \eolcomment{see \cref{sec:knowndeg:output_parlin}}
        \item $\expand{\appbas} \assign \algoname{ShiftAroundMinAppBasis}(\orders,\expandMat\matrows{\sys}{I}\bmod\mods,\expand{\shifts})$
        \item $\mat{E} \in \matSpace[\nbun] \assign \diag{e_1,\ldots,e_\nbun}$ with $e_i = 1$ if $i\in I$ and $e_i = 0$ otherwise
        \item \algoword{For} $i\in I$ such that $\shift{i} \ge \max(\shifts)-\degExp$
              or $\rdeg[\expand{\shifts}]{\matrow{\expand{\appbas}}{\quoExp_1+\cdots+\quoExp_i}} > 0$: \\
              \phantom{space} $\matrow{\appbas}{i} \assign
                \matrow{\expand{\appbas}}{\quoExp_1+\cdots+\quoExp_i} \expandMat \mat{E}$;
                $I \assign I \setminus \{i\}$
      \end{enumerate}
    \item \algoword{If} $I \neq \emptyset$: \eolcomment{compute remaining rows via \algoname{PM-Basis}}
      \begin{enumerate}[{\bf a.}]
        \item permute $\orders$ into nonincreasing order, and the columns of
          $\matrows{\sys}{I}$ accordingly
        \item $(\ordersR,\sysR,\shiftsR,\appbas_1) \assign
            \algoname{ReduceColDim}(\orders,\matrows{\sys}{I},\subTuple{\shifts}{I})$
        \item $\appbas_2 \assign \algoname{PM-Basis}(\ordersR,\sysR,\shiftsR)$
        \item $\minDegs_1 \assign$ diagonal degrees of $\appbas_1$;~
              $\minDegs_2 \assign$ diagonal degrees of $\appbas_2$
        \item  $\expand{\appbas}\assign\algoname{KnownDegAppBasis}(\orders,\matrows{\sys}{I},\subTuple{\shifts}{I},\minDegs_1+\minDegs_2)$
        \item $\matrows{\appbas}{I} \assign \expand{\appbas} \,\, \diag{e_1,\ldots,e_\nbun}$, where $e_i = 1$ if $i\in I$ and $e_i = 0$ otherwise
      \end{enumerate}
    \item \algoword{Return} $\appbas$
}
\end{algobox}

We recall that $\costZhoLabMin{\cdot}$ was defined in
\cref{eqn:cost_zholab_min}.

\begin{proposition}
  \label{prop:algo:zhou_labahn_max}
  \cref{algo:zhou_labahn_max} is correct.  Let $\vsdim = \sumVec{\orders}$, let
  $\sumDistMax = \vsdim + \sumVec{\!\max(\shifts)-\shifts}$, and let $\order =
  \max(\orders)$.  If $\sumDistMax > \nbun\order$, \cref{algo:zhou_labahn_max}
  uses $\bigO{\appbastime{\nbun,\lceil\vsdim/\nbun\rceil} +
  \appbastime{\nbun,\order}}$ operations in $\field$.  If $\sumDistMax \le
  \nbun\order$, it uses
  \[
    \bigOPar{
      \appbastime{\mu,\lceil\vsdim/\mu\rceil}
      + \appbastime{\mu,\order}
      + \sum_{k=0}^{\lfloor\log_2(\nbun\order/\sumDistMax)\rfloor}
      \costZhoLabMin{\sumDistMax,2^{-k}\nbun,\order} 
      }
  \]
  operations in $\field$, where $\costZhoLabMin{\cdot}$ is defined as in
  \cref{eqn:cost_zholab_min} and $\mu$ is the cardinality of the set $I$
  after Step~\textbf{3} has been performed; it is such that $\mu <
  \sumDistMax/\order$.
\end{proposition}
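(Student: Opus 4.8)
The plan is to establish correctness via a loop invariant on the index set $I$, and to bound the running time by controlling the dimensions and shift spreads of the recursive calls. Throughout, write $\appbas^{\star}$ for the $\shifts$-Popov basis of $\modApp$ and $\minDegs=(\minDeg_1,\ldots,\minDeg_\nbun)$ for the $\shifts$-minimal degree of $\modApp$, and recall $\sumVec{\minDegs}\le\vsdim$ by \cref{lem:degdet_appbasis}. The invariant I would maintain along \textbf{Step~3} is: at the start of each iteration, (a) for each $i\notin I$ the row $\matrow{\appbas}{i}$ lies in $\modApp$, with $\shifts$-pivot index $i$ and $\shifts$-pivot degree $\minDeg_i$; and (b) for each $i\in I$ the row $\matrow{\appbas^{\star}}{i}$ vanishes on every column outside $I$, so that, once its zero columns are deleted, it belongs to $\modAppCustom{\orders}{\matrows{\sys}{I}}$. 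Part~(b) is what makes the sub-instance $\matrows{\sys}{I}$ faithful to $\modApp$; in particular, by minimality it gives that the $\subTuple{\shifts}{I}$-minimal degree of $\modAppCustom{\orders}{\matrows{\sys}{I}}$ equals $(\minDeg_i)_{i\in I}$.

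Maintenance rests on \cref{lem:parlin_app} applied to $(\orders,\matrows{\sys}{I},\subTuple{\shifts}{I})$ with the parameters $\degExp$ and $\sdiff=\degExp$ of \textbf{Step~3.b}: since $\max(\subTuple{\shifts}{I})\le\max(\shifts)$, the test $\shift{i}\ge\max(\shifts)-\degExp$ of \textbf{Step~3.e} forces $\shift[t]{i}\ge0$. The last claim of that lemma then shows, for each index $i$ selected in \textbf{Step~3.e}, that $\matrow{\expand{\appbas}}{\quoExp_1+\cdots+\quoExp_i}\expandMat$ is an approximant of $\modAppCustom{\orders}{\matrows{\sys}{I}}$ with $\shifts$-pivot index $i$ and pivot degree the $i$-th entry of the $\subTuple{\shifts}{I}$-minimal degree, i.e. $\minDeg_i$ by~(b); zero-padding it (the role of $\mat{E}$) gives a row of $\appbas$ fulfilling~(a). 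For the indices $i$ retained in $I$ --- with $\shift{i}<\max(\shifts)-\degExp$ and $\rdeg[\expand{\shifts}]{\matrow{\expand{\appbas}}{\quoExp_1+\cdots+\quoExp_i}}=0$ --- the same lemma, together with the fact that passing to a column subset can only increase the minimal degree, yields $\minDeg_i+\shift{i}\le\max(\shifts)-\degExp$; combined with the strict degree domination off the diagonal in Popov form, this makes $\matrow{\appbas^{\star}}{i}$ vanish on the columns just removed from $I$ (the boundary case $\shift{j}=\max(\shifts)-\degExp$ requiring the slightly finer argument of \citep[Sec.\,6]{ZhoLab12}), which is~(b) for the new $I$. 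The loop terminates because the index attaining $\max(\shifts)$ is removed at the first iteration and, more generally, $\card{I}$ strictly decreases; afterwards the guard $\sumDistMax\le\card{I}\order$ is false, so $\mu=\card{I}<\sumDistMax/\order$ (this also subsumes the case where the loop body never runs, i.e. $\sumDistMax>\nbun\order$). When $I\ne\emptyset$, \textbf{Step~4} recovers the remaining rows: by item~$(iv)$ of \cref{lem:recursiveness} and \cref{prop:algo:reduce_coldim} (as in the proof of \cref{prop:algo:zhou_labahn_min}), $\minDegs_1+\minDegs_2$ is the $\subTuple{\shifts}{I}$-minimal degree of $\modAppCustom{\orders}{\matrows{\sys}{I}}$, hence $(\minDeg_i)_{i\in I}$, so by \cref{prop:algo:knowndeg_pab} the resulting rows have the correct pivot indices and degrees. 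Finally the $\nbun$ assembled rows all lie in $\modApp$, carry pairwise distinct pivot indices $1,\ldots,\nbun$, and have pivot degrees $\minDegs$; by \citep[Lem.\,3.3]{JeNeScVi16} they form an $\shifts$-ordered weak Popov basis of $\modApp$.

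For the cost, index the iterations of \textbf{Step~3} by $k=0,1,\ldots$ The pivotal estimate is the geometric decay of $\card{I}$: every row kept in $I$ satisfies $\max(\shifts)-\shift{i}\ge\degExp$, so there are at most $\sumVec{\!\max(\shifts)-\shifts}/\degExp$ of them, and the choice $\degExp=1+2\lfloor\sumVec{\!\max(\shifts)-\shifts}/\card{I}\rfloor$ makes this at most $\tfrac{2}{3}\card{I}$; thus the loop runs $\bigO{\log(\nbun\order/\sumDistMax)}$ times, with $\card{I}$ bounded by $2^{-k}\nbun$ up to a constant and $\degExp\in\bigO{2^{k}\sumDistMax/\nbun}$. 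Then \cref{lem:parlin_app} (with $\sdiff=\degExp\ge0$) bounds $\expand{\nbun}\le\card{I}+\sumVec{\!\max(\shifts)-\shifts}/\degExp$ by $\bigO{2^{-k}\nbun}$ and gives $-\degExp\le\expand{\shifts}\le\degExp$, whence $\sumVec{\expand{\shifts}-\min(\expand{\shifts})}\le2\degExp\,\expand{\nbun}\in\bigO{\sumDistMax}$ and the parameter $\sumRdeg$ for the call at \textbf{Step~3.c} is $\vsdim+\sumVec{\expand{\shifts}-\min(\expand{\shifts})}\in\bigO{\sumDistMax}$. Building $\expandMat\matrows{\sys}{I}\bmod\mods$ and extracting rows in \textbf{Step~3.e} use no field operations and an amount of data movement summing to $\bigO{\nbun\sumDistMax}$, within the target. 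The dominant term at iteration $k$ is the call at \textbf{Step~3.c}, which by \cref{prop:algo:zhou_labahn_min} costs $\costZhoLabMin{\bigO{\sumDistMax},\bigO{2^{-k}\nbun},\order}$ (or, in the alternative regime of that proposition, a bound absorbed into the same quantity); since $\appbastime{\cdot,\cdot}$ and $\polmatmultime{\cdot,\cdot}$ are super-linear, $\costZhoLabMin{\cdot}$ is, up to constants, monotone under scaling its first two arguments, so this is $\bigO{\costZhoLabMin{\sumDistMax,2^{-k}\nbun,\order}}$. Summing over $k\le\lfloor\log_2(\nbun\order/\sumDistMax)\rfloor$ gives the summation in the statement. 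When the loop exits with $I\ne\emptyset$, \textbf{Step~4} costs $\bigO{\appbastime{\mu,\lceil\vsdim/\mu\rceil}}$ for \algoname{ReduceColDim} (\cref{prop:algo:reduce_coldim}, which also ensures $\sysR$ has fewer columns than rows), $\bigO{\appbastime{\mu,\max(\ordersR)}}\subseteq\bigO{\appbastime{\mu,\order}}$ for \algoname{PM-Basis} (\cref{prop:algo:dac_ab}), and $\bigO{\appbastime{\mu,\lceil\vsdim/\mu\rceil}}$ for \algoname{KnownDegAppBasis} (\cref{prop:algo:knowndeg_pab}); adding these yields the claimed bound, and when $\sumDistMax>\nbun\order$ only this last contribution with $\mu=\nbun$ remains, namely $\bigO{\appbastime{\nbun,\lceil\vsdim/\nbun\rceil}+\appbastime{\nbun,\order}}$.

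The step I expect to be the main obstacle is proving that part~(b) of the invariant is preserved --- in particular at the threshold $\shift{j}=\max(\shifts)-\degExp$ --- since this is exactly what lets us replace $\modApp$ by the smaller instance $\modAppCustom{\orders}{\matrows{\sys}{I}}$ at each iteration; it relies on the dichotomy in the last claim of \cref{lem:parlin_app} (detecting whether $\minDeg_i$ exceeds the linearization threshold $-\shift[t]{i}$) and on careful tracking of $\max(\subTuple{\shifts}{I})$ against $\max(\shifts)$. On the cost side, the corresponding delicate point is justifying the geometric decay of $\card{I}$ and hence the logarithmic iteration count. Both follow the pattern of \citep[Sec.\,6]{ZhoLab12}, to which I would defer for the remaining details; the present generalization (to arbitrary orders, and yielding an ordered weak Popov basis) does not disturb these arguments, the output column linearization being the only ingredient specific to this case and neither depending on nor modifying $\orders$ (cf.~\cref{sec:knowndeg:output_parlin}).
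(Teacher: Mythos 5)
Your proposal follows the same route as the paper's proof: correctness via \cref{lem:parlin_app} for the rows uncovered at Step~\textbf{3.e} and deference to \citep[Sec.\,6/Thm.\,6.11]{ZhoLab12} for the row-discard/zero-fill argument, the fourth item of \cref{lem:recursiveness}, \cref{prop:algo:reduce_coldim,prop:algo:knowndeg_pab,prop:algo:dac_ab} for Step~\textbf{4}, and the cost by showing $\card{I}$ decays geometrically (driven by the choice of $\degExp$) so that the per-iteration cost is $\costZhoLabMin{\sumDistMax,2^{-k}\nbun,\order}$. You write out an explicit loop invariant (a)/(b) that makes the role of \citep[Thm.\,6.11]{ZhoLab12} more concrete, but both your argument and the paper's ultimately lean on that reference for the same key step, and the bookkeeping of dimensions, shift spreads, and the exit condition $\sumDistMax>\mu\order$ matches the paper's.
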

\begin{proof}
  First, if $\sumDistMax > \nbun\order$, the loop at Step~\textbf{3} is not
  entered, and at Step~\textbf{4} we have $I= \{1,\ldots,\nbun\}$; in
  particular, $\matrows{\appbas}{I} = \appbas$ and Step~\textbf{4.f} simply
  amounts to $\appbas \assign \expand{\appbas}$.  In this case, the correctness
  and cost bound follow from \cref{prop:algo:reduce_coldim,prop:algo:dac_ab},
  the fourth item of \cref{lem:recursiveness}, and
  \cref{prop:algo:knowndeg_pab}.

  From now on, suppose $\sumDistMax \le \nbun\order$.  The same results prove
  the correctness of Step~\textbf{4} while \cref{lem:parlin_appbas} proves that
  of Step~\textbf{3}, using in addition \citep[Thm.\,6.11]{ZhoLab12} to show
  that we may discard the rows of $\sys$ with index not in $I$
  (Steps~\textbf{3.b} and~\textbf{4.b}) and fill corresponding columns of
  $\appbas$ with zeroes (multiplication by $\mat{E}$ in Step~\textbf{3.e} and
  by the diagonal in~\textbf{4.f}).

  Furthermore, the above propositions show that Step~\textbf{4} uses
  $\bigO{\appbastime{\mu,\lceil\vsdim/\mu\rceil} +
  \appbastime{\mu,\order}}$ operations; since the loop at Step~\textbf{3}
  has exited, we have $\sumDistMax > \mu\order$.

  Concerning Step~\textbf{3}, the main point is that the cardinality of $I$ is
  at least halved at the end of each iteration of the \algoword{While} loop.
  Indeed, let $c>0$ be the cardinality of $I$ at the beginning of an iteration;
  hence $\degExp > 2 \sumVec{\!\max(\shifts)-\shifts} / c$.  Then, at the end
  of the iteration, we have that $I$ is contained in $\{i\in \{1,\ldots,\nbun\}
  \mid \shift{i} < \max(\shifts)-\degExp\}$ which has cardinality at most
  $\sumVec{\!\max(\shifts)-\shifts} / \degExp$.  Thus, we obtain $\card{I} \le
  \sumVec{\!\max(\shifts)-\shifts} / \degExp < c/2$.

  As a consequence, the worst case in terms of cost occurs when $\card{I}$ is
  divided by only slightly more than $2$ at each iteration.  Then, this
  cardinality is about $2^{-k}\nbun$ at the end of the $k$th iteration of the
  \algoword{While} loop.  This iteration then uses
  $\costZhoLabMin{\sumDistMax,2^{-k}\nbun,\order}$ operations in $\field$; this
  follows from the bounds on $\expand{\nbun}$ and $\expand{\shifts}$ in
  \cref{lem:parlin_app} and from the cost of Step~\textbf{3.c} given in
  \cref{prop:algo:zhou_labahn_min}.  We remark that the condition $\sumDistMax
  \le \card{I} \order$ of the loop precisely ensures that we are in the case
  ``$\sumRdeg \le \nbun\order$'' of the latter proposition.
\end{proof}

%
%
%

To conclude this section, we derive the upper bound given in the second item of
\cref{thm:zhou_labahn} under the assumption $\hyppolmul$.  We first remark that
we have $\lceil2^k\sumDistMax/\nbun\rceil \le
2^k\lceil\sumDistMax/\nbun\rceil$, since $\lceil \lceil\alpha r\rceil /
\alpha\rceil = \lceil r \rceil$ holds for any real number $r$ and any positive
integer $\alpha$.  Besides, the assumption $\hyppolmul$ implies that
$\polmultime{2^k\lceil\sumDistMax/\nbun\rceil} \in \bigO{2^{(\expmatmul-1)k}
\polmultime{\lceil\sumDistMax/\nbun\rceil}}$.  Then, the first item in
\cref{thm:zhou_labahn} yields
\[
  \costZhoLabMin{\sumDistMax,2^{-k}\nbun,\order}
  \in \bigOPar{ (2^{-k}\nbun)^\expmatmul
    \polmultime{\lceil2^k\sumDistMax/\nbun\rceil} \log(\order) }
  \subseteq \bigOPar{ 2^{-k} \nbun^\expmatmul
    \polmultime{\lceil\sumDistMax/\nbun\rceil} \log(\order) },
\]
from which we obtain
\[
  \sum_{k=0}^{\lfloor\log_2(\nbun\order/\sumDistMax)\rfloor}
  \costZhoLabMin{\sumDistMax,2^{-k}\nbun,\order}
  \;\in \bigOPar{\nbun^\expmatmul \polmultime{\lceil\sumDistMax/\nbun\rceil}
  \log(\order)}.
\]
Now, using $\order \le \frac{\nbun}{\mu} \lceil \frac{\mu}{\nbun}\order\rceil
\le \frac{\nbun}{\mu} \lceil \frac{\sumDistMax}{\nbun}\rceil$ and the
assumption $\hyppolmul$ leads to $\polmultime{\order} \in
\bigO{(\nbun/\mu)^{\expmatmul-1} \polmultime{\lceil \sumDistMax/\nbun\rceil}}$,
and therefore we also have
\[
  \appbastime{\mu,\order} \in
  \bigOPar{\nbun^{\expmatmul-1} \mu \polmultime{\lceil\sumDistMax/\nbun\rceil} \log(\order) }
  \subseteq
  \bigOPar{\nbun^{\expmatmul} \polmultime{\lceil\sumDistMax/\nbun\rceil} \log(\order) }.
\]
This completes the proof of the upper bound in the second item of
\cref{thm:zhou_labahn}, since we have $\appbastime{\mu,\lceil\vsdim/\mu\rceil}
\in \bigO{\mu^\expmatmul \polmultime{\lceil\vsdim/\mu\rceil}
\log(\lceil\vsdim/\mu\rceil)}$.

One can simplify the latter bound slightly further, in order to facilitate the
comparison with \citep[Thm.\,6.14]{ZhoLab12}. Indeed, we have $\lceil
\frac{\vsdim}{\mu} \rceil \in \bigO{\frac{\nbun}{\mu}
\lceil\frac{\vsdim}{\nbun}\rceil}$ since $\nbun \ge \mu$.
Then, using
\[
  \polmultime{\lceil\vsdim/\mu\rceil} \log(\lceil\vsdim/\mu\rceil)
  \in \bigO{(\nbun/\mu)^{\expmatmul-1} \polmultime{\lceil\vsdim/\nbun\rceil} \log(\lceil\vsdim/\nbun\rceil)},
\]
which is a minor strengthening of the assumption $\hyppolmul$,
the last bound in \cref{thm:zhou_labahn} becomes:
\begin{align*}
  & \bigO{\nbun^{\expmatmul} \polmultime{\lceil\sumDistMax/\nbun\rceil}
  \log(\order) + \mu^\expmatmul \polmultime{\lceil\vsdim/\mu\rceil} \log(\lceil\vsdim/\mu\rceil)} \\
  \subseteq\; & \bigO{\nbun^{\expmatmul}
  \polmultime{\lceil\sumDistMax/\nbun\rceil} \log(\order) + \nbun^\expmatmul
\polmultime{\lceil\vsdim/\nbun\rceil} \log(\lceil\vsdim/\nbun\rceil)} \\
    \subseteq\; & \bigO{\nbun^{\expmatmul} \polmultime{\lceil\sumDistMax/\nbun\rceil} \log(\order\lceil\vsdim/\nbun\rceil)}.
\end{align*}
  
Finally, we remark that if $\nbeq \le \nbun$, then we have $\vsdim \le
\nbun\order$ and therefore this upper bound becomes $\bigO{\nbun^{\expmatmul}
\polmultime{\lceil\sumDistMax/\nbun\rceil} \log(\order) }$.  This matches the
bound in \citep[Thm.\,6.14]{ZhoLab12}, where $\nbeq \le \nbun$ is assumed. We
further note that in the specific case considered in this reference (the order
$\orders$ is uniform and $\nbeq \le \nbun$), the algorithm stops as soon as the
row and column dimensions become roughly equal, and therefore it does not need
to rely on column dimension reduction; thus, in this case, the term
$\appbastime{\mu,\lceil\vsdim/\mu\rceil}$ can be removed from the above cost
bounds.

\section*{Acknowledgement}

The authors want to thank \'Eric Schost for his useful comments. The research
leading to these results was partly done while Vincent Neiger was affiliated
with the Department of Applied Mathematics and Computer Science of the
Technical University of Denmark, with funding from the People Programme (Marie
Curie Actions) of the European Union's Seventh Framework Programme
(FP7/2007-2013) under REA grant agreement no. 609405 (COFUNDPostdocDTU).

\bibliographystyle{elsarticle-harv} 

\end{document}